\title{Generators and Relations for the Group $\OD$}
\author{Sarah Meng Li
\institute{Dalhousie University}
\email{sarah.li@dal.ca}
\and
Neil J. Ross
\institute{Dalhousie University}
\email{neil.jr.ross@dal.ca}
\and
Peter Selinger
\institute{Dalhousie University}
\email{selinger@dal.ca}
}
\begin{document}
\maketitle

\begin{abstract}
  We give a finite presentation by generators and relations for the
  group $\On(\Z[1/2])$ of $n$-dimensional orthogonal matrices with
  entries in $\Z[1/2]$. We then obtain a similar presentation for the
  group of $n$-dimensional orthogonal matrices of the form
  $M/\sqrt{2}{}^k$, where $k$ is a nonnegative integer and $M$ is an
  integer matrix. Both groups arise in the study of quantum
  circuits. In particular, when the dimension is a power of 2, the
  elements of the latter group are precisely the unitary matrices that
  can be represented by a quantum circuit over the universal gate set
  consisting of the Toffoli gate, the Hadamard gate, and the
  computational ancilla.
\end{abstract}

\section{Introduction}
\label{sec:intro}
There is a beautiful correspondence which relates certain quantum circuits and matrices over rings of algebraic integers \cite{AGR2019,fgkm15, GS13, KMM-exact, ky15}. A first instance of this correspondence arises in the study of circuits over the gate set $\s{CCX, H\otimes H}$, where $CCX$ is the Toffoli gate and $H\otimes H$ is the twofold tensor product of the Hadamard gate. In this case, the correspondence takes a particularly simple form: a unitary matrix $M$ can be exactly represented by an $n$-qubit quantum circuit over $\s{CCX, H\otimes H}$ if and only if $M\in \OD$, where $\OD$ is the group of \emph{orthogonal dyadic matrices}. A second instance of the correspondence follows as a corollary of this first one: circuits over the gate set $\s{CCX, H}$ correspond to orthogonal matrices of the form $M/\sqrt{2}{}^k$, where $M$ is an integer matrix and $k$ is a nonnegative integer. These matrices form the group of \emph{orthogonal scaled dyadic matrices}. The above gate sets are ubiquitous in the theory of quantum computation \cite{Aharonov03asimple, bjs2010, KLM07, Shi2003, Montanaro2017}.
 
The correspondence between quantum circuits and matrix groups exposes the mathematical structure underlying certain gate sets, thereby enabling exact and efficient manipulation of circuits. These insights, along with applications such as compiling \cite{glaudell2021optimal,kbry15,kmm-approx,r15,RS16} and verification
\cite{amy2019}, motivate the study of the relevant matrix groups.

In this paper, we give a finite presentation by generators and relations for the group $\OD$, following the approach initiated in \cite{Gr2014}. It was shown in \cite{AGR2019} that $\OD$ is generated by the collection of 1-, 2-, and 4-level operators of type $-1$, $X$, and $H\otimes H$. To give a presentation of $\OD$ we introduce a finite list of relations among these generators and show that two words over the generators denote the same element of $\OD$ if and only if one word can be converted into the other using a finite number of applications of the relations. Remarkably, the relations can be stated independently of $n$. As a corollary of our main result, we obtain a similar presentation for the group of matrices of the form $M/\sqrt{2}{}^k$ mentioned above.

The paper is structured as follows. In \cref{sec:gens}, we introduce
the generators, along with some basic definitions. In
\cref{sec:synth}, we give a detailed presentation of the exact
synthesis algorithm of \cite{AGR2019}. In \cref{sec:presentation}, we
introduce the relations and prove our main result: the relations are
sound and complete. In \cref{sec:supin}, we use the results of
\cref{sec:presentation} to give a presentation of the group of
orthogonal scaled dyadic matrices. We draw some final conclusions in
\cref{sec:conclusion}.
%\nocite{*}

%----------------------------------------------------------------------
\section{Generators}
\label{sec:gens}
\begin{definition}
  \label{def:Ztwo}
  The ring of \emph{dyadic rationals} is defined as $\Ztwo=\left\{
  \frac{u}{2^k} \mid u\in \Z, k\in\N \right\}$.
\end{definition}

\begin{definition}
  \label{def:lde}
  Let $t$ be a dyadic rational. A natural number $k$ is a
  \emph{denominator exponent} of $t$ if $2^k t \in \Z$. The least such
  $k$ is called the \emph{least denominator exponent} of $t$ and is
  denoted by $\lde(t)$.
\end{definition}

We extend \cref{def:lde} to matrices as follows. A natural number $k$
is a denominator exponent of a matrix $M$ if it is a denominator
exponent of all of the entries of $M$. Similarly, the least
denominator exponent of $M$ is the least $k$ that is a denominator
exponent for all of its entries, which we write $\lde(M)$.

\begin{definition}
  \label{def:OZtwo}
  The $n$-dimensional group of \emph{orthogonal dyadic matrices}
  consists of the $n\times n$ orthogonal matrices with entries in
  $\Ztwo$. It is denoted $\OD$.
\end{definition}

\begin{definition}
  \label{def:basegens}
  The matrices $X$, $H$, and $K$ are defined as
  \[
  X = 
  \begin{bmatrix}
  0 & 1 \\ 1 & 0
  \end{bmatrix},
  \quad
  H = \frac{1}{\sqrt 2}
  \begin{bmatrix}
  1 & 1 \\ 1 & -1
  \end{bmatrix} , 
  \quad  
  \mbox{ and }
  \quad
  K = \frac{1}{2}
  \begin{bmatrix}
    1 &  1 &  1 &  1 \\
    1 & -1 &  1 & -1 \\
    1 &  1 & -1 & -1 \\
    1 & -1 & -1 &  1 \\
  \end{bmatrix}.
  \]
\end{definition}

The matrix $X$ is known as the \emph{Pauli $X$ gate} and the matrix
$H$ is known as the \emph{Hadamard gate}. We have $K= H\otimes H$,
where $\otimes$ is the usual tensor product. We now embed $X$, $H$,
and $K$ into larger matrices which will serve as our generators.

\begin{definition}
  \label{def:onetwolevel}
  Let $M$ be an $m\times m$ matrix, let $m\leq n$, and let $1 \leq
  a_1,\ldots,a_m\leq n$. The \emph{$m$-level matrix of type $M$} is
  the $n\times n$ matrix $M_{[a_1,\ldots,a_m]}$ defined by
  \[
    {M_{[a_1,\ldots,a_m]}}_{i,j} =
    \begin{cases}
      M_{i',j'} \mbox{ if } i=a_{i'} \mbox{ and } j = a_{j'}\\
      I_{i,j} \mbox{ otherwise.}
    \end{cases}
  \]
\end{definition}

\begin{definition}
  \label{def:gens}  
  The set $\gens_n$ of \emph{$n$-dimensional generators} is the subset
  of $\OD$ defined as
  \[
    \gens_n=\s{\mone{a},\xx{a,b},\hh{a,b,c,d}\mid 1\leq a<b<c<d
      \leq n}.
  \]
\end{definition}

% --------------------------------------------------------------------
\section{Constructive Membership for
  \texorpdfstring{$\OD$}{On(Z[1/2])}}
\label{sec:synth}

In this section, we present a solution to the constructive membership
problem for $\OD$, following \cite{AGR2019}. To this end, we describe
an algorithm which inputs an arbitrary element $M$ of $\OD$ and
outputs a sequence of elements of $\gens_n$ representing $M$. As is
common in the quantum computing literature, we refer to the algorithm
as the \emph{exact synthesis} algorithm. In addition to showing that
$\gens_n$ generates $\OD$, the algorithm will play a central role in
the rest of the paper.

\begin{lemma}
  \label{lem:twohs}
  Let $u_1,u_2,u_3,u_4$ be odd integers. Then there exist
  $\tau_1,\tau_2,\tau_3,\tau_4\in\Z_2$ such that
  \[
  \hh{1,2,3,4} \mone{1}^{\tau_1}\mone{2}^{\tau_2}
  \mone{3}^{\tau_3}\mone{4}^{\tau_4}
  \begin{bmatrix}
    u_1 \\ u_2 \\ u_3 \\ u_4
  \end{bmatrix} =
  \begin{bmatrix} 
    u_1' \\ u_2' \\ u_3'\\ u_4' 
  \end{bmatrix}
  \]
  where $u_1',u_2',u_3',u_4'$ are even integers.
\end{lemma}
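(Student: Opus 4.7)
The plan is to reduce the problem to a finite computation modulo~$4$. By definition, $\hhdef$ acts on $(u_1, u_2, u_3, u_4)^T$ via the matrix $K$, and $K = \tfrac{1}{2}(2K)$ where $2K$ is an integer matrix with $\pm 1$ entries. The diagonal factors $\mone{i}^{\tau_i}$ commute and assemble into $D = \diag((-1)^{\tau_1}, \ldots, (-1)^{\tau_4})$, so the left-hand side of the claim equals $\tfrac{1}{2}(2K) v$ with $v_i = (-1)^{\tau_i} u_i$. The desired conclusion therefore amounts to $(2K)v \equiv 0 \pmod{4}$ entrywise. Since each $u_i$ is odd, $u_i \bmod 4 \in \{1, -1\}$, and flipping its sign swaps these two residues; hence by choosing $\tau_i$ we may set each $\epsilon_i := v_i \bmod 4 \in \{\pm 1\}$ independently.

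I claim that $(2K)\epsilon \equiv 0 \pmod{4}$ holds precisely when $\epsilon_1 \epsilon_2 \epsilon_3 \epsilon_4 = 1$, i.e.\ when $\epsilon$ has an even number of $-1$ entries. For the ``if'' direction, I would invoke $K^2 = I$ (which follows from $H^2 = I$ and $K = H \otimes H$), yielding $(2K)^2 = 4 I$; hence for any column $c$ of $2K$, we have $(2K)c = 4 e_j$ for some $j$. The four columns of $2K$, together with their negations, exhaust the eight even-parity elements of $\{\pm 1\}^4$, and on each of them $(2K)$ acts by sending it to $\pm 4 e_j$. The ``only if'' direction is a short check: an odd-parity $\epsilon$ gives each entry of $(2K)\epsilon$ equal to $\pm 2$, not divisible by $4$.

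It remains to observe that an even-parity $\epsilon$ is always achievable. Given the initial residues of $u_1, \ldots, u_4$ modulo $4$, their product lies in $\{1, -1\}$. If the product equals $1$, choose all $\tau_i = 0$; otherwise, choose $\tau_1 = 1$ and $\tau_2 = \tau_3 = \tau_4 = 0$, which flips a single residue and yields an even product. The main subtlety is the identification of the eight ``good'' residue patterns together with the verification that $2K$ sends each to a multiple of $4$; the observation $(2K)^2 = 4I$ reduces what would otherwise be an 8-case enumeration to essentially a one-line argument.
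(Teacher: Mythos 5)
Your proof is correct, and it takes a genuinely different route from the paper's. The paper's argument makes a single normalization: choose each $\tau_i$ so that $(-1)^{\tau_i}u_i \equiv 1 \pmod{4}$ (possible because an odd integer is $\equiv \pm 1 \pmod{4}$); then every entry of $(2K)v$ is a $\pm$-sum of four integers all $\equiv 1 \pmod{4}$, and since each row of $2K$ has an even number of $-1$ entries the sum is $\equiv 0 \pmod{4}$. You instead characterize \emph{all} sign patterns $\epsilon \in \{\pm 1\}^4$ for which $(2K)\epsilon \equiv 0 \pmod{4}$, showing this holds if and only if $\epsilon$ has even parity, and then observe that an even-parity pattern is always reachable with at most one sign flip. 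Your use of $(2K)^2 = 4I$ to identify the eight even-parity vectors with the columns of $\pm 2K$ is a pleasant structural shortcut for the ``if'' direction. The tradeoff: the paper achieves existence more economically by fixing a single target pattern $(1,1,1,1)$ and verifying it directly, while your argument proves strictly more --- the parity dichotomy you establish is essentially the content of the paper's later \cref{applem:evenoddsOddodds}, which you have effectively anticipated here.
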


\begin{proof}
Because $u_i\equiv 1 \pmod{2}$, we have $u_i\equiv 1,3 \pmod{4}$. And
since $-3\equiv 1 \pmod{4}$ there exists $\tau_i\in\Z_2$ such that
$(-1)^{\tau_i}u_i\equiv 1 \pmod{4}$. The claim then follows by
computation.
\end{proof}

\begin{lemma}
  \label{lem:indstep}
  Let $v \in \Ztwo^n$ be a unit vector. If $\lde(v)=k>0$, then there
  exists a sequence $G_1,\ldots,G_q$ of elements of $\gens_n$ such
  that $\lde(G_q\cdots G_1 v) <k$.
\end{lemma}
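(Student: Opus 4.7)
The plan is to iteratively eliminate the odd entries of the integer vector $u := 2^k v$ in groups of four, using \cref{lem:twohs}. I begin by writing $v = u/2^k$ with $u \in \Z^n$; since $\lde(v) = k$, at least one $u_i$ is odd. A parity count shows that in fact the number of odd entries of $u$ is a positive multiple of four: the unit condition $\|v\|=1$ gives $\|u\|^2 = 4^k$, which is divisible by $4$ because $k \geq 1$, and since $u_i^2 \equiv 1 \pmod 4$ when $u_i$ is odd and $u_i^2 \equiv 0 \pmod 4$ when $u_i$ is even, the odd-entry count is divisible by $4$; being nonzero, it is at least $4$. (For $n<4$, the hypothesis is vacuous, since no such $v$ exists.)

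I then pick four indices $a_1 < a_2 < a_3 < a_4$ at which $u$ is odd and apply \cref{lem:twohs} to the sub-vector $(u_{a_1}, u_{a_2}, u_{a_3}, u_{a_4})^T$ to obtain $\tau_1, \ldots, \tau_4 \in \Z_2$ such that the block
\[
G := \hh{a_1, a_2, a_3, a_4}\, \mone{a_1}^{\tau_1}\mone{a_2}^{\tau_2}\mone{a_3}^{\tau_3}\mone{a_4}^{\tau_4}
\]
sends $v$ to a unit vector $v'$ whose coordinates at positions $a_1, \ldots, a_4$, after scaling by $2^k$, are even integers, while its remaining coordinates agree with those of $v$. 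Hence $2^k v'$ is an integer vector with exactly four fewer odd entries than $u$, and $\lde(v') \leq k$.

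Iterating this block strictly reduces the odd-entry count by $4$ at each step while preserving $\lde \leq k$, because $G$ acts as the identity outside of $\{a_1, \ldots, a_4\}$ and the $\tau_i$ are chosen so that the four affected coordinates land in $2^{-(k-1)}\Z$ rather than in $2^{-(k+1)}\Z$. As long as $\lde$ remains equal to $k$, the same parity argument applied to the current vector guarantees another group of four odd entries, so a fresh block can be applied. After at most $\lfloor n/4 \rfloor$ blocks the scaled integer vector is all-even, and $\lde$ drops strictly below $k$. Concatenating the blocks yields the desired sequence $G_1, \ldots, G_q$. I expect the only mildly subtle point to be the verification that $\lde$ does not inflate during iteration, but this follows immediately from the structure of $G$ just noted.
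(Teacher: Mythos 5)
Your proposal is correct and follows essentially the same route as the paper: compute $\|2^k v\|^2 = 4^k$, use the mod-$4$ residues of squares to conclude the number of odd entries of $2^k v$ is a positive multiple of four, and repeatedly apply \cref{lem:twohs} to groups of four odd entries to make them even. The only cosmetic difference is that you re-select a fresh group of four after each block, while the paper partitions the odd positions into consecutive quadruples up front and applies \cref{lem:twohs} to each; since the blocks act on disjoint index sets, the two phrasings are equivalent.
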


\begin{proof}
Let $w = 2^k v$, so that $w\in \Z^n $. Since $v^\intercal v = 1$, we
have $w^\intercal w =4^k$ and therefore $\sum w^2_j = 4^k$. Note that
$w_j^2 \equiv 1 \pmod{4}$ if and only if $w_j$ is odd and that $w_j^2
\equiv 0 \pmod{4}$ if and only if $w_j$ is even. Hence the number of
$w_j$ such that $w_j^2 \equiv 1 \pmod{4}$ is a multiple of 4. Let
$w_{a_1},\ldots,w_{a_{4q}}$ be the odd entries of $w$ in order of
increasing index. We can apply \cref{lem:twohs} to $w_{a_1},\ldots,
w_{a_4}$, then to $w_{a_5},\ldots, w_{a_8}$, and so on until the
entries of $w$ are all even. This yields a sequence $G_1,\cdots,G_q
\in \gens_n$ such that
\[
G_q\cdots G_1 v = G_q\cdots G_1 \frac{1}{2^k}w = \frac{2}{2^k}w' =
\frac{1}{2^{k-1}}w'
\]
where $w'\in\Z^n$.
\end{proof}

\begin{lemma}
  \label{lem:indbase}
  Let $v \in \Ztwo^n$ be a unit vector. If $\lde(v)=0$, then $v=\pm
  e_j$ for some $1\leq j \leq n$, where $e_j$ is the $j$-th standard
  basis vector.
\end{lemma}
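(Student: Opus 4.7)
The statement should be very short to prove: $\lde(v)=0$ simply means that every entry of $v$ lies in $\Z$, so the proof reduces to an elementary fact about integer unit vectors.

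My plan is as follows. First I would unpack the hypothesis: by \cref{def:lde} applied entrywise, $\lde(v)=0$ means $2^0 v = v \in \Z^n$, so each $v_j$ is an integer. Next I would use the unit-vector condition $v^\intercal v = 1$, which gives
\[
\sum_{j=1}^n v_j^2 = 1.
\]
Since each $v_j \in \Z$, each $v_j^2$ is a non-negative integer. The only way a finite sum of non-negative integers can equal $1$ is if exactly one term equals $1$ and every other term equals $0$. Hence there is a unique index $j$ with $v_j^2 = 1$, i.e.\ $v_j = \pm 1$, while $v_i = 0$ for all $i \neq j$. This is precisely the statement $v = \pm e_j$.

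There is essentially no obstacle here; the only thing to be careful about is correctly interpreting $\lde(v)=0$ via the matrix extension of \cref{def:lde} so that $v \in \Z^n$ is immediate, and then citing the standard fact about integer solutions of $\sum v_j^2 = 1$. The lemma serves as the base case for an induction on $\lde(v)$ whose inductive step is \cref{lem:indstep}, so the proof can be stated in two or three lines.
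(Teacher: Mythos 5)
Your proposal is correct and matches the paper's proof essentially verbatim: unpack $\lde(v)=0$ to get $v\in\Z^n$, use $\sum v_j^2=1$, and observe that a sum of non-negative integers equal to $1$ forces exactly one $v_j=\pm 1$ with all others zero. No meaningful difference in approach.
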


\begin{proof}
If $k=0$ then $v\in\Z^n$. Since $v$ is a unit vector we then get $\sum
v^2_j = 1$. Since the $v_j$ are integers, there must be exactly one
$j$ such that $v_j=\pm 1$ while all the other entries of $v$ are 0.
\end{proof}

\begin{lemma}
  \label{lem:column}
  Let $v\in\Ztwo^n$ be a unit vector and let $1\leq j\leq n$. Then
  there exists a sequence of generators $G_1,\ldots,G_q\in\gens_n$
  such that $G_q\cdots G_1 v=e_j$.
\end{lemma}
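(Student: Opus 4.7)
The plan is to prove the statement by strong induction on $k = \lde(v)$, using Lemmas~\ref{lem:indstep} and~\ref{lem:indbase} as the inductive step and a precursor to the base case, respectively.

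For the base case $k=0$, Lemma~\ref{lem:indbase} gives $v = \pm e_i$ for some $1\leq i\leq n$. I then need to transport $\pm e_i$ to $e_j$ using only elements of $\gens_n$. If $i\neq j$, I apply $\xx{a,b}$ with $\{a,b\}=\{\min(i,j),\max(i,j)\}$, which swaps the $i$-th and $j$-th coordinates and so maps $\pm e_i$ to $\pm e_j$. If the resulting vector is $-e_j$, I apply $\mone{j}$ to flip the sign. In either case, a sequence of at most two generators carries $\pm e_i$ to $e_j$.

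For the inductive step, suppose $k>0$ and the statement holds for all unit vectors of smaller least denominator exponent. By Lemma~\ref{lem:indstep}, there exists a sequence $G_1,\ldots,G_q \in \gens_n$ such that the vector $v' = G_q\cdots G_1 v$ satisfies $\lde(v') < k$. Note that $v'$ is still a unit vector, since every element of $\gens_n$ lies in $\OD$ and is therefore orthogonal. By the induction hypothesis applied to $v'$ and $j$, there exist $G_{q+1},\ldots,G_r \in \gens_n$ such that $G_r\cdots G_{q+1} v' = e_j$. Concatenating the two sequences yields $G_r \cdots G_1 v = e_j$, as desired.

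The main conceptual work has already been done in Lemmas~\ref{lem:twohs} through~\ref{lem:indbase}; the only genuinely new ingredient here is the base-case bookkeeping that moves an arbitrary signed basis vector $\pm e_i$ to the specific target $e_j$. The slight subtlety to watch is making sure the generators used in this final permutation-plus-sign step actually belong to $\gens_n$ as defined, which is immediate since $\xx{a,b}$ with $a<b$ and $\mone{j}$ are both listed generators.
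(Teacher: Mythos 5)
Your proof is correct and follows essentially the same route as the paper's: induction on $\lde(v)$, with Lemma~\ref{lem:indbase} handling the base case and Lemma~\ref{lem:indstep} driving the inductive step. The only cosmetic difference is that you apply the $X$ generator before the $(-1)$ generator when normalizing $\pm e_i$ to $e_j$, whereas the paper applies them in the opposite order; both orderings work.
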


\begin{proof}
By induction on $\lde(v)$. If $\lde(v)=0$ then $v=\pm e_{j'}$ for some
$j'$, by \cref{lem:indbase}. If $e_{j'}=e_{j}$ there is nothing to
do. Otherwise, we can map $v$ to $e_j$ by applying an optional $(-1)$
generator followed by an optional $X$ generator. Now if $\lde(v)=k>0$
then by \cref{lem:indstep} there exists a sequence $G_p,\ldots,G_1$ of
elements of $\gens_n$ such that $\lde(G_p \cdots G_1 v) < \lde(v)$. By
induction, there exists a sequence $G_{p+1},\ldots,G_{q}$ such that
$G_{q}\cdots G_{p+1}G_p \cdots G_1 v = e_j$.
\end{proof}

\cref{lem:column} can be used iteratively on the columns of an
arbitrary element of $\OD$ to reduce it to the identity matrix.

\begin{theorem}
  \label{thm:membership}
  Let $M$ be an $n\times n$ matrix. Then $M\in \OD$ if, and only if,
  $M$ can be written as a product of elements of $\gens_n$.
\end{theorem}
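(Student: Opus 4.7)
The proof splits into the easy direction (products of generators are in $\OD$) and the hard direction (every element of $\OD$ is such a product). For the easy direction, I would just note that each element of $\gens_n$ is orthogonal with dyadic entries (this is immediate from the definitions), and that $\OD$ is closed under matrix multiplication, so any finite product of generators lies in $\OD$.

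For the hard direction, the plan is to iteratively reduce $M\in\OD$ to the identity by multiplying on the left by generators, column by column. Let $v_1 = Me_1$ be the first column of $M$; since $M$ is orthogonal and has dyadic entries, $v_1$ is a unit vector in $\Ztwo^n$. By \cref{lem:column}, there exist $G_1,\ldots,G_{q_1}\in\gens_n$ such that $G_{q_1}\cdots G_1 v_1 = e_1$. Let $M_1 = G_{q_1}\cdots G_1 M$; then $M_1$ is orthogonal, has dyadic entries, and its first column is $e_1$. Orthogonality of $M_1$ forces its first row to be $e_1^\intercal$ as well: the first row is a unit vector whose inner product with each subsequent row must be $0$, and the first entry of each subsequent row is $0$ since those rows are orthogonal to $e_1$. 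Hence
\[
M_1 = \begin{bmatrix} 1 & 0 \\ 0 & M' \end{bmatrix},
\]
with $M'\in O_{n-1}(\Ztwo)$.

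Now I would proceed by induction on $n$. The base case $n=1$ is immediate since $O_1(\Ztwo) = \{\pm 1\}$ and $-1 = \mone{1}\in\gens_1$. For the inductive step, apply the induction hypothesis to $M'$ to obtain a factorization $M' = G'_{q'}\cdots G'_1$ with each $G'_i\in\gens_{n-1}$. The key observation is that each $\mone{a}$, $\xx{a,b}$, or $\hh{a,b,c,d}$ in $\gens_{n-1}$, when padded by a leading $1\oplus(\cdot)$, becomes the generator $\mone{a+1}$, $\xx{a+1,b+1}$, or $\hh{a+1,b+1,c+1,d+1}$ of $\gens_n$, which fixes the first coordinate. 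This gives a factorization of $M_1 = 1\oplus M'$ as a product of elements of $\gens_n$, and pre-multiplying by $(G_{q_1}\cdots G_1)^{-1}$ (whose inverses are again in $\gens_n$, since each generator is self-inverse) yields the desired factorization of $M$.

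The only subtlety I anticipate is the bookkeeping around this embedding $\gens_{n-1}\hookrightarrow\gens_n$ and the verification that generators are self-inverse so that left-multiplication can be undone using generators rather than arbitrary group elements. Both of these are routine: $\mone{a}^2 = \xx{a,b}^2 = \hh{a,b,c,d}^2 = I$ follow by direct computation from \cref{def:basegens,def:onetwolevel}. No other step requires significant work beyond the lemmas already established.
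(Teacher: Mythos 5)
Your proof is correct and takes essentially the same approach as the paper: iteratively reduce one column of $M$ to a standard basis vector using \cref{lem:column} and recurse on the remaining block. The only difference is that the paper reduces the rightmost column to $e_n$ (which aligns with the pivot-column/level machinery used later for completeness) whereas you reduce the first column to $e_1$ and spell out the recursion explicitly via the $1\oplus M'$ block decomposition and the index-shift embedding $\gens_{n-1}\hookrightarrow\gens_n$; both directions are valid for this theorem.
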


\begin{proof}
The right-to-left direction follows from the fact that
$\gens_n\subseteq\OD$. For the left-to-right direction, apply
\cref{lem:column} to reduce the rightmost column of $M$ to $e_n$, then
proceed recursively.
\end{proof}

The algorithm establishing the left-to-right direction of
\cref{thm:membership} is the exact synthesis algorithm. For future
reference, an explicit description is given in \cref{alg:algo}.

\begin{algorithm}[ht]
  \caption{Exact Synthesis\label{alg:algo}}
  \DontPrintSemicolon
  \SetAlgoLined
  \SetKwInOut{Input}{Input}\SetKwInOut{Output}{Output}
  \Input{An element $M$ of $\OD$}
  \Output{A sequence $\word{W}_1,\ldots,\word{W}_\ell$ of words over $\gens_n$ such that $\word{W}_\ell \cdots \word{W}_1 M = I$}
  $N \leftarrow M$\;
  \While{$N\neq I$}{
  Let $j$ be the greatest integer such that $Ne_j \neq e_j$\;
  Let $v = Ne_j$\;
  Let $k = \lde(v)$\;
  Let $w = 2^kv$\;
    \Case{$k=0$}{
      Let $v = (-1)^\tau e_a$ for some $a$ such that $1\leq a \leq j$ and some $\tau\in\Z_2$\;
      \lIf{$a = j$}{$\word{W} = (-1)_{[j]}^\tau$ \quad // note that $\tau=1$ in this case}
      \lIf{$a < j$}{$\word{W} = X_{[a,j]}\mone{a}^\tau$}
    }
    \Case{$k>0$}{
      Let $a, b, c, d$ be the indices of the first four odd entries of $w$\;
      Let $\tau_a,\tau_b,\tau_c,\tau_d \in \Z_2$ be such that $(-1)^{\tau_i}w_i\equiv 1 \pmod{4}$ for $i\in\s{a,b,c,d}$\;
      $\word{W} = \hh{a,b,c,d} \mone{a}^{\tau_a} \mone{b}^{\tau_b} \mone{c}^{\tau_c}\mone{d}^{\tau_d}$\;
    }
    Output $\word{W}$\;
    $N \leftarrow \word{W}N$\;
  }
\end{algorithm}

Intuitively, \cref{alg:algo} terminates because each iteration of the
algorithm rewrites the input matrix into one that is closer to the
identity. We introduce a notion of \emph{level} which makes this
intuition precise.

\begin{definition}
  \label{def:level}
  Let $M\in\OD$. The \emph{level} of $M$ is the triple $(j,k,\ell)$,
  where
  \begin{itemize}
    \item $j$ is the largest element of $[n]$ such that $Me_j\neq
      e_j$, or $j=0$ if no such index exists;
    \item $k = \lde(M e_j)$, or $k=0$ if $j=0$; and
    \item $\ell$ is the number of odd entries in $2^k(M e_j)$, or
      $\ell=0$ if $k=0$.
  \end{itemize}
  We denote the level of $M$ by $\level(M)$. If $\level(M)=(j,k,\ell)$
  we call $Me_j$ the \emph{pivot column} of $M$.
\end{definition}

Levels are ordered lexicographically and it can be verified that each
iteration of the algorithm strictly decreases the level of $N$.

% --------------------------------------------------------------------
\section{A Finite Presentation of \texorpdfstring{$\OD$}{On(Z[1/2])}}
\label{sec:presentation}
\cref{thm:membership} shows that the group generated by $\gens_n$ is
$\OD$. However, $\OD$ is not free over $\gens_n$ since there are
relations among the generators, such as $\mone{1}\mone{1}=I$. Our goal
is to give a presentation of $\OD$ by generators and relations,
adopting the approach of \cite{Gr2014}. We start by introducing some
useful terminology.

If $A$ is a set, we write $A^*$ for the collection of words over
$A$. We use $\word{W}$ to denote words, and we sometimes write
$\epsilon$ for the \emph{empty word}. If $\word{W}=A_1\ldots A_m$ is a
word over $A$ then the \emph{length} of $\word{W}$ is $m$. We will be
particularly interested in words over $\gens_n$. Any such word
$\word{W}$ can be \emph{interpreted} as an element $\interp{\word{W}}$
of $\OD$ by multiplying the generators that compose $\word{W}$. That
is, if $\word{W}=G_1\ldots G_m$ then
\[
\interp{\word{W}} = G_1 \cdot \ldots \cdot G_{m-1}\cdot G_m,
\]
where the product is the usual multiplication of matrices. This notion
of interpretation induces a first equivalence relation on $\gens_n^*$.

\begin{definition}
  \label{def:semeq}
  The relation $\sim$ on $\gens_n^*$ is defined by
  $\word{V}\sim\word{W}$ if $\interp{\word{V}}=\interp{\word{W}}$. Two
  words $\word{V}$ and $\word{W}$ such that $\word{V}\sim\word{W}$ are
  said to be \emph{semantically equivalent}.
\end{definition}

Intuitively, two words are semantically equivalent if they denote the
same element of $\OD$. In contrast to this semantic notion of
equivalence, we now introduce a \emph{syntactic} notion of equivalence
which does not rely on the interpretation of words as matrices.

\begin{table}
  \begin{subequations} 
    \renewcommand{\theparentequation}{\arabic{parentequation}}%    
    \begin{align}
      % order
      \label{rel:orderx} \xx{a,b}^2\ &\approx\ \epsilon \\    
      \label{rel:ordermone} \mone{a}^2\ &\approx\ \epsilon \\  
      \label{rel:orderk} \hh{a,b,c,d}^2\ &\approx\ \epsilon \\
      \notag \\
      \resetparent
      % disjoint indices
      \label{rel:disjoint1} \xx{a,b}\xx{c,d}\ &\approx\ \xx{c,d}\xx{a,b} \\    
      \label{rel:disjoint2} \xx{a,b}\mone{c}\ &\approx\ \mone{c}\xx{a,b} \\
      \label{rel:disjoint3} \xx{a,b}\hh{c,d,e,f}\ &\approx\ \hh{c,d,e,f}\xx{a,b} \\  
      \label{rel:disjoint4} \mone{a}\mone{b}\ &\approx\ \mone{b}\mone{a} \\
      \label{rel:disjoint5} \mone{a}\hh{b,c,d,e}\ &\approx\ \hh{b,c,d,e}\mone{a} \\
      \label{rel:disjoint6} \hh{a,b,c,d}\hh{e,f,g,h}\ &\approx\ \hh{e,f,g,h}\hh{a,b,c,d} \\
      \notag \\
      \resetparent
      % renaming
      \label{rel:rename1} \xx{a,a'}\xx{a,b}\ &\approx\ \xx{a',b}\xx{a,a'} \\    
      \label{rel:rename2} \xx{b,b'}\xx{a,b}\ &\approx\ \xx{a,b'}\xx{b,b'} \\    
      \label{rel:rename3} \xx{a,b}\mone{b}\ &\approx\ \mone{a}\xx{a,b} \\    
      \label{rel:rename4} \xx{a,a'}\hh{a,b,c,d}\ &\approx\ \hh{a',b,c,d}\xx{a,a'} \\    
      \label{rel:rename5} \xx{b,b'}\hh{a,b,c,d}\ &\approx\ \hh{a,b',c,d}\xx{b,b'} \\    
      \label{rel:rename6} \xx{c,c'}\hh{a,b,c,d}\ &\approx\ \hh{a,b,c',d}\xx{c,c'} \\
      \label{rel:rename7} \xx{d,d'}\hh{a,b,c,d}\ &\approx\ \hh{a,b,c,d'}\xx{d,d'} \\
      \notag \\
      \resetparent
      % symmetries of k
      \label{rel:ksym1} \xx{a,b}\hh{a,b,c,d}\ &\approx\ \hh{a,b,c,d}\xx{b,d}\mone{b}\mone{d} \\
      \label{rel:swap1} \xx{b,c}\hh{a,b,c,d}\ &\approx\ \mone{a}\hh{a,b,c,d}\mone{a}\hh{a,b,c,d}\mone{a} \\      
      \label{rel:ksym3} \xx{c,d}\hh{a,b,c,d}\ &\approx\ \hh{a,b,c,d}\xx{b,d} \\
      \notag \\
      \resetparent
      % commutations of k
      \label{rel:kcom1} \hh{a,b,c,d}\hh{b,d,e,f}\ &\approx\ \hh{c,d,e,f}\hh{a,b,c,e} \\
      \notag \\
      \resetparent      
      % the x case
      \begin{split} \label{rel:x}
      \mone{a}\mone{e}\xx{a,e}\hh{e,f,g,h}\hh{a,b,c,d}&\xx{d,e}\hh{a,b,c,d}\hh{e,f,g,h}\xx{a,e}\mone{a}\mone{e} \\
      &\approx\ \\
      \hh{e,f,g,h}\hh{a,b,c,d}&\xx{d,e}\hh{a,b,c,d}\hh{e,f,g,h}
      \end{split}
    \end{align}
  \end{subequations}
  \caption{\label{tab:relations}Relations for $\OD$. One should assume
    that the indices are distinct and that the relations are
    well-formed. For example, in \cref{rel:kcom1} we have
    $a<b<c<d<e<f$.}
\end{table}
  
\begin{definition}
  \label{def:synteq}
  The relation $\approx$ on $\gens_n^*$ is the smallest equivalence
  relation on $\gens_n^*$ containing the relations of
  \cref{tab:relations} and such that if $\word{V}\approx \word{V'}$
  and $\word{W}\approx \word{W'}$ then $\word{V}\word{W} \approx
  \word{V'}\word{W'}$. Two words $\word{V}$ and $\word{W}$ such that
  $\word{V}\approx\word{W}$ are said to be \emph{syntactically
  equivalent}.
\end{definition}

The relation $\approx$ is the smallest congruence relation on
$\gens_n^*$ containing the relations of
\cref{tab:relations}. Intuitively, two words are syntactically
equivalent if one word can be rewritten into the other through a
finite number of applications of the relations contained in
\cref{tab:relations}.

We want to show that two words $\word{V}$ and $\word{W}$ are
semantically equivalent if and only if they are syntactically
equivalent. This is achieved by establishing the two implications
below.
\begin{description}
  \item[Soundness:] Let $\word{G}$ and $\word{H}$ be words over
    $\gens_n$. Then $\word{G} \approx \word{H}$ implies $\word{G} \sim
    \word{H}$.
  \item[Completeness:] Let $\word{G}$ and $\word{H}$ be words over
    $\gens_n$. Then $\word{G} \sim \word{H}$ implies $\word{G} \approx
    \word{H}$.
\end{description}
Soundness and completeness together imply that the semantic and
syntactic relations coincide. This yields a presentation of $\OD$ by
generators and relations. We prove soundness in \cref{ssec:soundness}
and completeness in \cref{ssec:completeness}.
 
% --------------------------------------------------------------------
\subsection{Soundness}
\label{ssec:soundness}

\begin{theorem}[Soundness]
 \label{thm:sound}
  Let $\word{G}$ and $\word{H}$ be words over $\gens_n$. Then
  $\word{G} \approx \word{H}$ implies $\word{G} \sim \word{H}$.
\end{theorem}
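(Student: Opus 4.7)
The plan is to reduce soundness to a finite check on the axioms of \cref{tab:relations}. First observe that semantic equivalence $\sim$ is itself a congruence on $\gens_n^*$: since $\interp{\word{V}\word{W}} = \interp{\word{V}}\cdot\interp{\word{W}}$, if $\word{V}\sim\word{V'}$ and $\word{W}\sim\word{W'}$ then $\word{V}\word{W}\sim\word{V'}\word{W'}$. By \cref{def:synteq}, $\approx$ is the smallest congruence on $\gens_n^*$ containing the relations of \cref{tab:relations}, so the inclusion $\approx\,\subseteq\,\sim$ follows as soon as we verify $\interp{\word{L}}=\interp{\word{R}}$ for each axiom $\word{L}\approx\word{R}$ in the table. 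The remainder of the proof is precisely this verification.

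The axioms fall into natural groups. Relations \cref{rel:orderx,rel:ordermone,rel:orderk} express that $X$, $-1$, and $K$ are each their own inverse, which is immediate from \cref{def:basegens}. The disjoint-support relations \cref{rel:disjoint1,rel:disjoint2,rel:disjoint3,rel:disjoint4,rel:disjoint5,rel:disjoint6} hold because two multilevel operators with disjoint index sets act on orthogonal coordinate subspaces and therefore commute; this is a direct consequence of \cref{def:onetwolevel}. The rename relations \cref{rel:rename1,rel:rename2,rel:rename3,rel:rename4,rel:rename5,rel:rename6,rel:rename7} all express conjugation by a transposition and reduce to the general identity $X_{[a,a']} M_{[\ldots,a,\ldots]} X_{[a,a']} = M_{[\ldots,a',\ldots]}$, together with the single computation $\xx{a,b}\mone{b}\xx{a,b}=\mone{a}$ needed for \cref{rel:rename3}.

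The remaining axioms require actual matrix arithmetic. Relations \cref{rel:ksym1,rel:swap1,rel:ksym3} describe how $K$ interacts with transpositions and sign flips on its four support indices and reduce to a direct $4\times 4$ computation. Relation \cref{rel:kcom1} relates two $K$-factors whose supports overlap in two indices and reduces, after restricting to the six relevant coordinates, to equality of two $6\times 6$ matrices. Both verifications are mechanical and index-independent, since the named indices are the only coordinates on which either side acts nontrivially.

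The main obstacle is the long relation \cref{rel:x}, which involves eight distinct indices and many generators on each side. Because every generator appearing in it acts as the identity outside $\{a,b,c,d,e,f,g,h\}$, the claim reduces to an equality of two explicit $8\times 8$ orthogonal dyadic matrices. Writing $M = \hh{e,f,g,h}\hh{a,b,c,d}\xx{d,e}\hh{a,b,c,d}\hh{e,f,g,h}$ and $C = \mone{a}\mone{e}\xx{a,e}$, a short calculation in the free group (using only the already-established axioms) shows that $C$ is self-inverse and that the left-hand side equals $CMC$; so the relation asserts $CMC = M$, i.e.\ that $M$ is invariant under the signed swap of coordinates $a$ and $e$ implemented by $C$. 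I would verify this by expanding $M$ as an explicit $8\times 8$ matrix and checking the symmetry directly. The computation is routine but is by far the most tedious step of the proof, and in practice one would cross-check it with a computer algebra system.
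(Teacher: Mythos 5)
Your proof is correct and takes essentially the same approach as the paper, which simply notes that it suffices to check each relation in \cref{tab:relations} by direct computation. You flesh out the same argument by explicitly observing that $\sim$ is a congruence (so it suffices to check the axioms) and by grouping the axioms into families with uniform verification strategies; this is a more detailed presentation of the identical idea, not a different route.
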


\begin{proof}
  It suffices to show that the relations in \cref{tab:relations} are
  sound. This can be verified by direct computation.
\end{proof}

% --------------------------------------------------------------------
\subsection{Completeness}
\label{ssec:completeness}

\cref{alg:algo} associates a word over $\gens_n$ to every element of
$\OD$. Because the algorithm is deterministic, the word it associates
to an element $M$ of $\OD$ can be viewed as a \emph{normal form} for
$M$. Our strategy to prove completeness is to show that the relations
of \cref{tab:relations} suffice to rewrite an arbitrary word over
$\gens_n$ into its normal form.

% --------------------------------------------------------------------
\subsubsection{The State Graph}
\label{sssec:stateG}

We start by introducing a useful graph representation for $\OD$. This
graph representation is akin to a Cayley graph for $\OD$ but is
intended to highlight the words produced by \cref{alg:algo}. Recall
that steps 9, 10, and 15 of \cref{alg:algo} produce short words over
$\gens_n$ of the form
\[
\mone{a}, \quad \xx{a,b}\mone{a}^{\tau_a}, \quad \mbox{ and } \quad
\hh{a,b,c,d} \mone{a}^{\tau_a} \mone{b}^{\tau_b}
\mone{c}^{\tau_c}\mone{d}^{\tau_d}
\]
for appropriately chosen $a,b,c,d$ and $\tau_a, \tau_b, \tau_c,
\tau_d$. We refer to these words as \emph{syllables}.

\begin{definition}
  \label{def:stateg}
  The \emph{state graph} is the directed graph whose vertices and
  edges are defined as follows.
  \begin{itemize}
  \item The vertices are the elements of $\OD$ and are referred to as
    \emph{states}.
  \item There are two types of edges:
    \begin{itemize}
    \item \emph{simple edges}, which are triples $\ang{s', G, s}$
      where $s,s'\in\OD$, $G\in\gens_n$ and $s' = G s$;
    \item \emph{normal edges}, which are triples $\ang{s^\prime, N,
      s}$ where $s,s'\in\OD$, $N$ is the unique first syllable output
      by \cref{alg:algo} on input $s$, and $s^\prime = N s$.
    \end{itemize}
  \end{itemize}
\end{definition}

We denote the edge $\ang{s', G, s}$ by $s \xrightarrow{G} s^\prime$ or
$G:s\to s'$. We use a double line to indicate that an edge is normal,
as in $N:s\too s'$. When the source and target of an edge $\ang{s', G,
  s}$ are clear from context we sometimes simply denote the edge by
$G$.

We note that for every state $s \neq I$, there exists a unique normal
edge originating at $s$. Moreover, if $N:s \too s'$ is normal, then
$\level(s') < \level(s)$. As a result, for every state $s \neq I$,
there exists a unique sequence of normal edges from $s$ to $I$.

\begin{definition}
  \label{def:levels}
  Let $\word{G}$ be the following sequence of simple edges
  \[
  \word{G}= s_0 \xrightarrow{G_1} s_1 \ldots s_{n-1}
  \xrightarrow{G_n}s_n.
  \]
  The \emph{level} of $\word{G}$, denoted $\level(\word{G})$, is the
  maximum of the levels of the states $s_0, \ldots, s_n$. That is,
  $\level(\word{G}) = \max\s{\level(s_0), \ldots, \level(s_n)}$.
\end{definition}

Intuitively, the level of a sequence of edges is the largest level
reached by a state along that sequence.

\begin{definition}
  Let $\word{G},\word{G}':s\to t$ be two sequences of edges. We say
  that the diagram
  \[
    \begin{tikzcd}[column sep=large, row sep=large]
      s \arrow[r, bend right=30, "\word{G}'",swap] \arrow[r, bend right=-30, "\word{G}"] & t
    \end{tikzcd}
  \]
  \emph{commutes equationally} if $\word{G}\approx\word{G}'$.
\end{definition}  

% --------------------------------------------------------------------
\subsubsection{The Main Lemma and the Proof of Completeness}
\label{sssec:mainlemcomp}

\begin{lemma}[Main Lemma]
\label{lem:main}
  Let $s$, $t$, and $r$ be states, $N:s\too t$ be a normal edge, and
  $G:s\to r$ be a simple edge. Then there exist a state $q$, a
  sequence of normal edges $\word{N}':r\too q$, and a sequence of
  simple edges $\word{G}':t\to q$ such that the diagram
  \[
    \begin{tikzcd}[column sep=large, row sep=large]
      s \arrow[d, Rightarrow, "N" left] 
        \arrow[r, "G"] & 
        r\arrow[d, Rightarrow, "\word{N}'"] \\
      t\arrow[r, "\word{G}'" below] & q
    \end{tikzcd}
  \]
  commutes equationally and $\level(\word{G}')<\level(s)$.
\end{lemma}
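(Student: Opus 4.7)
The plan is to prove the lemma by a systematic case analysis on the form of the normal syllable $N$ and the simple generator $G$. Recall from \cref{alg:algo} that $N$ has one of three shapes: a lone $\mone{a}$, the word $\xx{a,b}\mone{a}^{\tau}$, or the word $\hh{a,b,c,d}\mone{a}^{\tau_a}\mone{b}^{\tau_b}\mone{c}^{\tau_c}\mone{d}^{\tau_d}$. The simple edge $G$ is a single generator of the form $\mone{a'}$, $\xx{a',b'}$, or $\hh{a',b',c',d'}$. For each combination of syllable type and generator type, I would subdivide further according to how the indices appearing in $G$ overlap with those in $N$. The goal in each sub-case is to exhibit both $\word{N}'$ and $\word{G}'$ explicitly.

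In the \emph{disjoint} sub-cases, where the indices of $G$ are disjoint from those of $N$, every letter of $G$ commutes with every letter of $N$ by \cref{rel:disjoint1,rel:disjoint2,rel:disjoint3,rel:disjoint4,rel:disjoint5,rel:disjoint6}. I would take $\word{N}' = N$ and $\word{G}' = G$: because the indices of $G$ avoid the pivot column of $s$, the matrix $r = Gs$ has the same pivot column as $s$, so \cref{alg:algo} outputs the same first syllable $N$ at $r$. The square then commutes literally, and equational commutation follows from the relevant commutation relation. The level condition is immediate since $q = N r$ has $\level(q) = \level(t) < \level(s)$ and $\word{G}'$ visits only $t$ and $q$.

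The harder sub-cases are those where $G$ shares at least one index with $N$, so that the normal syllable emanating from $r$ typically differs from $N$ and the path $\word{N}'$ may need to traverse several normal edges before reaching a state $q$ at which $\word{G}'$ can close the square. The key tools here are the rename relations \cref{rel:rename1,rel:rename2,rel:rename3,rel:rename4,rel:rename5,rel:rename6,rel:rename7}, which relabel an index by conjugating with an $X$; the $K$-symmetry relations \cref{rel:ksym1,rel:swap1,rel:ksym3}, which permute the indices of a $K$ generator at the cost of inserting extra $\mone{\cdot}$ and $\xx{\cdot,\cdot}$ factors; the $K$-commutation relation \cref{rel:kcom1}; and the intricate relation \cref{rel:x}, which handles the most subtle configuration in which two $K$-syllables share a single index on the pivot column. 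In each such sub-case I would pre- and post-multiply $G$ by carefully chosen rename/symmetry factors until the right-hand edge becomes exactly the sequence of normal syllables that \cref{alg:algo} outputs at $r$ and its successors, reading off $\word{G}'$ from the remaining letters.

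The principal obstacle is twofold. First, the sheer number of sub-cases is large, and care is required to organize them so that no configuration is missed; this is where adopting a consistent convention for naming overlapping indices pays off. Second, in each sub-case one must verify not only the equational commutation but also the level condition $\level(\word{G}') < \level(s)$. The latter holds because $\word{G}'$ starts at $t$, whose level is strictly less than $\level(s)$ by virtue of $N$ being a normal syllable, and because the generators I would choose for $\word{G}'$ act only on indices up to the pivot index of $s$, so they cannot restore the pivot column and thereby increase the first coordinate of the level. The single hardest case will be the one handled by \cref{rel:x}: this is where $\word{G}'$ must genuinely be a long word whose soundness does not follow from the shorter commutation or rename relations alone, and whose length reflects the need to traverse multiple normal edges along $\word{N}'$ before the two paths can rejoin.
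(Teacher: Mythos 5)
Your high-level plan — case distinction on the overlap between the indices of $G$ and of $N$, with the disjoint case easy and the overlapping cases handled via the rename, $K$-symmetry, $K$-commutation, and \cref{rel:x} relations, allowing $\word{N}'$ to span several normal edges — is the right shape, and it identifies the tools the paper actually uses. However, the paper does not run this case analysis directly on all simple edges: it first proves the square for \emph{basic} edges $G \in \{\mone{1}, \xx{a,a+1}, \hh{1,2,3,4}\}$ (\cref{lem:mainbasic}), using \cref{lem:simp} to reduce an arbitrary simple edge to a sequence of basic ones and \cref{lem:wedge} to splice the resulting basic squares together. This reduction is not merely cosmetic: it collapses the index bookkeeping dramatically (for instance, only one $K$ generator ever needs to be considered as $G$), and without it your case tree would be far larger than the already very long one in the appendix.

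Beyond that organizational point there are two concrete gaps. First, your disjoint-case argument is incorrect when $G$ is a $K$ generator: disjointness of $G$'s indices from those of $N$ does \emph{not} imply that $G$ fixes the pivot column, because $N$ only records the first four odd entries while $G$ can act on other (possibly nonzero, even) entries. The paper must split this sub-case further according to $\overline{u}^\intercal \overline{u} \pmod 8$ (via \cref{prop:kevens1,prop:kevens2}), and in one branch the normal path $\word{N}'$ has length two and $\word{G}'$ is not simply $G$. Second, and more seriously, your level argument only controls the first coordinate $j$ of the triple $(j,k,m)$; it ignores the fact that the words $\word{G}'$ you would construct often contain $K$ generators, which can raise the least denominator exponent or the odd-entry count of a pivot column at an intermediate state while the pivot index is still $j$. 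Establishing $\level(\word{G}')<\level(s)$ therefore requires the number-theoretic lemmas about $K$ modulo $2,4,8,16$ (\cref{applem:honemod4,applem:evenoddsOddodds,prop:kevenodds,prop:koddodds,applem:twohsone,applem:norm,applem:normresidue1,applem:normresidue2}) together with structural control over $\word{G}'$ (e.g.\ that it has the form $\word{V}\hh{1,2,3,4}\word{W}$ with $\word{V},\word{W}$ over $X$ and $(-1)$, as in \cref{prop:rewriterules}); none of this appears in your sketch, and without it the level bound — which is what makes the induction in \cref{lem:NormalCommute} terminate — is unsupported.
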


The proof of \cref{lem:main} is a very long case distinction which can be
found in \cref{app:main}. We now show how \cref{lem:main} can be used
to derive completeness.

\begin{lemma}
  \label{lem:NormalCommute}
  Let $G:s\to r$ be a simple edge, $\word{N}:s\too I$ be the unique
  sequence of normal edges from $s$ to $I$, and $\word{M}:r \too I$ be
  the unique sequence of normal edges from $r$ to $I$. Then $\word{M}G
  \approx \word{N}$.
\end{lemma}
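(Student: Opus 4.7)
The plan is to prove the lemma by induction on $\level(s)$, using the Main Lemma (\cref{lem:main}) to handle the inductive step and a direct verification for the base case.

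For the base case $\level(s)=(0,0,0)$ we have $s=I$, so $\word{N}$ is empty and $r=G$ must be a generator. I would run \cref{alg:algo} on each of the three generator types to compute $\word{M}$ explicitly, and then check $\word{M}G\approx\epsilon$. The cases $G=\mone{a}$ and $G=\xx{a,b}$ are immediate from \cref{rel:ordermone} and \cref{rel:orderx}, since in each of them $\word{M}=G$. The case $G=\hh{a,b,c,d}$ is less trivial: the algorithm takes three iterations and produces $\word{M}=\xx{b,c}\xx{a,d}\hh{a,b,c,d}\mone{b}\mone{c}$, because the first syllable $\hh{a,b,c,d}\mone{b}\mone{c}$ conjugates $\hh{a,b,c,d}$ into the double-transposition matrix of $\xx{a,d}\xx{b,c}$, which then takes two further $X$-type syllables to reduce to $I$. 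The verification $\word{M}G\approx\epsilon$ in this case is a short derivation combining \cref{rel:orderk} with the $K$-commutation relations (\ref{rel:ksym1})--(\ref{rel:ksym3}).

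For the inductive step, assume $s\neq I$ and write $\word{N}=\word{N}_1 N$ where $N\colon s\too t$ is the first normal edge and $\word{N}_1\colon t\too I$ is the unique normal path from $t$. Apply the Main Lemma to $N$ and $G$ to obtain a state $q$, a normal sequence $\word{N}'\colon r\too q$, and a simple sequence $\word{G}'\colon t\to q$ satisfying $\word{G}'N\approx\word{N}'G$ and $\level(\word{G}')<\level(s)$. By uniqueness of normal paths, $\word{M}=\word{P}\word{N}'$, where $\word{P}\colon q\too I$ is the unique normal path from $q$. Substituting and using the Main Lemma equation, the desired $\word{M}G\approx\word{N}$ reduces to $\word{P}\word{G}'\approx\word{N}_1$. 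To prove the latter, I would decompose $\word{G}'=G_k\cdots G_1$ into individual simple edges $t=t_0\to t_1\to\cdots\to t_k=q$. The bound $\level(\word{G}')<\level(s)$ ensures $\level(t_{i-1})<\level(s)$ for each $i$, so the induction hypothesis applies to each $G_i\colon t_{i-1}\to t_i$, yielding $\word{M}_{t_i}G_i\approx\word{M}_{t_{i-1}}$, where $\word{M}_u$ denotes the unique normal path from $u$. Iterating from $i=k$ down to $i=1$, with $\word{M}_q=\word{P}$ and $\word{M}_t=\word{N}_1$, produces $\word{P}\word{G}'\approx\word{N}_1$ as required.

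The main conceptual obstacle is packed into the Main Lemma itself, whose proof is deferred to the appendix. Granting it, the only delicate point in the argument above is the base case for $G=\hh{a,b,c,d}$, which is a finite check using a handful of the relations in \cref{tab:relations}.
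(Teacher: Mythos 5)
Your proof is correct and follows the same overall plan as the paper's: induction on $\level(s)$, the Main Lemma to build the commuting square in the inductive step, decomposition of $\word{G}'$ into simple edges, and a chain of applications of the induction hypothesis to obtain $\word{P}\word{G}'\approx\word{N}_1$. The inductive steps are identical in substance; only the variable names differ.

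The one genuine difference is the base case, and there you have actually caught a slip in the paper's own argument. The paper asserts that for $s=I$ (so $r=G$) one has $\word{M}=G$ and then concludes by one of \cref{rel:orderx,rel:ordermone,rel:orderk}. The identification $\word{M}=G$ is correct when $G$ is a $(-1)$- or $X$-generator, but it fails for $G=\hh{a,b,c,d}$: running \cref{alg:algo} on $r=K_{[a,b,c,d]}$, the pivot column at index $d$ has integral part $(1,-1,-1,1)^\intercal$ at positions $a,b,c,d$, so the first syllable is $\hh{a,b,c,d}\mone{b}\mone{c}$ rather than $\hh{a,b,c,d}$, and two further $X$-type syllables are needed before reaching $I$, giving $\word{M}=\xx{b,c}\xx{a,d}\hh{a,b,c,d}\mone{b}\mone{c}$ exactly as you computed. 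Your claim that $\word{M}G\approx\epsilon$ is then verified via the derived relation $\mone{b}\mone{c}\hh{a,b,c,d}\approx\hh{a,b,c,d}\xx{a,d}\xx{b,c}$ (\cref{rel:k31}), which \cref{prop:derivedrels1} obtains ultimately from \cref{rel:ksym1,rel:swap1,rel:ksym3} together with the order, commutation, and renaming relations --- close to the set you name, though strictly speaking those auxiliary relations are also needed. So your proposal is not only sound but slightly more careful on the base case than the printed proof.
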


\begin{proof}
  We proceed by induction on the level of $s$. When $\level(s) =
  (0,0,0)$, then $s = I$ and $\word{N} = \epsilon$. In this case,
  $r=G$ so that $\word{M}=G$ and $\word{M}G\approx\word{N}$ by
  relations \eqref{rel:orderx}, \eqref{rel:ordermone}, or
  \eqref{rel:orderk}. Now suppose that $\level(s) > (0,0,0)$. Then $s
  \neq I$, so that $\word{N}$ can be written as $\word{N} =
  \word{N}'N$ where $N:s\too t_0$ is a normal edge and
  $\word{N}':t_0\too I$ is a sequence of normal edges. By
  \cref{lem:main}, there exist a state $t_k$, a sequence of normal edges
  $\word{L}:r\Rightarrow t_k$, and a sequence of simple edges
  $\word{G}': t_0 \rightarrow t_k$ such that $\word{L}G \approx
  \word{G}'N$, $\level(\word{G}') < \level(s)$, and
  $\word{M}=\word{L}'\word{L}$ for some sequence of normal edges
  $\word{L}'$. Write the sequence $\word{G}'$ as $\word{G}'=G_k\ldots
  G_1$, where $G_\ell:t_{\ell-1}\to t_\ell$ is a simple edge for
  $1\leq \ell \leq k$. For each $\ell$, let $\word{N}_\ell: t_\ell
  \too I$ be the unique sequence of normal edges from $t_\ell$ to
  $I$. Note that, by uniqueness, $\word{N}_k=\word{L}'$. Since
  $\level(t_\ell) < \level(s)$, then, by the induction hypothesis,
  $\word{N}_\ell G_\ell \approx \word{N}_{\ell-1}$. Thus, since
  $\approx$ is a congruence relation, we get $\word{N}' \approx
  \word{N}_k\word{G}'$. Hence, $\word{N}'N \approx \word{N}_k
  \word{G}'N = \word{L}'\word{G}'N\approx \word{L}'\word{L}G$ and, by
  the uniqueness of normal edges, we conclude that $\word{N} =
  \word{N}'N \approx \word{N}_k MG = \word{M}G$.
\end{proof}

\begin{lemma}
  \label{lem:BasicNormalEquiv}
  Let $\word{G}: s\to I$ be any sequence of simple edges with final
  state $I$ and $\word{N}:s \too I$ be the unique sequence of normal
  edges from $s$ to $I$. Then $\word{G} \approx \word{N}$.
\end{lemma}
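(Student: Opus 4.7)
The plan is to prove the statement by induction on the length of the sequence $\word{G}$. The base case is immediate: if $\word{G}$ has length $0$, then $\word{G} = \epsilon$ and $s = I$, so $\word{N} = \epsilon$ as well (since $I$ is the unique state from which no normal edge originates) and $\word{G} \approx \word{N}$ trivially.

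For the inductive step, I would factor $\word{G}$ as $\word{G} = \word{G}'' G$, where $G : s \to r$ is the first simple edge traversed and $\word{G}'' : r \to I$ is a shorter sequence of simple edges. Let $\word{M} : r \too I$ be the unique sequence of normal edges from $r$ to $I$. By the induction hypothesis applied to $\word{G}''$, we have $\word{G}'' \approx \word{M}$. Since $\approx$ is a congruence relation (\cref{def:synteq}), this yields $\word{G} = \word{G}'' G \approx \word{M} G$.

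Now I would invoke \cref{lem:NormalCommute} on the simple edge $G : s \to r$ together with the two canonical normal sequences $\word{N} : s \too I$ and $\word{M} : r \too I$. That lemma gives $\word{M} G \approx \word{N}$. Chaining the two equivalences yields $\word{G} \approx \word{M} G \approx \word{N}$, which is exactly what was required.

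There is no real obstacle here: the whole argument is a short two-line bookkeeping exercise once \cref{lem:NormalCommute} is in hand, since that lemma already did the work of reducing an arbitrary simple edge followed by normal edges to the canonical normal sequence. The only care needed is to ensure that the factorization $\word{G} = \word{G}'' G$ and the application of congruence are made explicit, and that the base case correctly handles the fact that $\word{N}$ is empty when $s = I$ (which follows from the uniqueness clause in the paragraph just after \cref{def:stateg}).
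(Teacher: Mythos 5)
Your proof is correct and follows essentially the same approach as the paper: induction on the length of $\word{G}$, with the base case handling $\word{G}=\epsilon$, and the inductive step peeling off the first edge $G:s\to r$, applying the induction hypothesis to the tail to obtain $\word{G}''\approx\word{M}$, using congruence to pass to $\word{M}G$, and then closing with \cref{lem:NormalCommute}. The only differences from the paper's proof are cosmetic (order of invoking congruence versus \cref{lem:NormalCommute}, and the notation $\word{G}''$ in place of $\word{G}'$).
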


\begin{proof} 
  We proceed by induction on the length of $\word{G}$. When $\word{G}
  = \epsilon$, then $s = I$ and $\word{N} = \epsilon$. Thus, in the
  base case, we have $\word{G} \approx \word{N}$. Now suppose that
  there is a state $r$ such $\word{G} = \word{G}'G$ for some simple
  edge $G:s\to r$ and some sequence of simple edges $\word{G}':r \to
  I$. Let $\word{M}:r\too I$ be the unique sequence of normal edges
  from $r$ to $I$. By the induction hypothesis, we have
  $\word{G}'\approx \word{M}$, and, by \cref{lem:NormalCommute},
  $\word{M}G \approx \word{N}$. It follows, since $\approx$ is a
  congruence relation, that $\word{G}'G \approx \word{M}G$. Thus
  $\word{G} \approx \word{N}$.
\end{proof}

\begin{theorem}[Completeness]
  \label{thm:completeness}
  Let $\word{G}$ and $\word{H}$ be words over $\gens_n$. Then
  $\word{G} \sim \word{H}$ implies $\word{G} \approx \word{H}$.
\end{theorem}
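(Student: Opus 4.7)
The plan is to reduce completeness immediately to \cref{lem:BasicNormalEquiv}. First I would observe that an arbitrary word $\word{G} = G_1 \cdots G_m$ over $\gens_n$ can be read literally as a sequence of simple edges in the state graph ending at the identity: since each of the three families of generators squares to the identity as a matrix, starting from $s_0 = \interp{\word{G}}$ and stepping by $s_i = G_i s_{i-1}$ gives $s_i = G_{i+1} \cdots G_m$, so that $s_m = I$. Hence $\word{G}$ itself \emph{is} a valid sequence of simple edges $\interp{\word{G}} \to I$ in the state graph.

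Then, given words $\word{G}$ and $\word{H}$ with $\word{G} \sim \word{H}$, I would set $M = \interp{\word{G}} = \interp{\word{H}}$. Both $\word{G}$ and $\word{H}$ are now sequences of simple edges from the same state $M$ to $I$. Applying \cref{lem:BasicNormalEquiv} to each, both are syntactically equivalent to the unique sequence $\word{N}: M \too I$ of normal edges produced by \cref{alg:algo}. By transitivity of the congruence $\approx$ we then obtain $\word{G} \approx \word{H}$, which is the desired conclusion.

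There is essentially no obstacle here, because all the real work has been absorbed into the Main Lemma and its consequences \cref{lem:NormalCommute} and \cref{lem:BasicNormalEquiv}, which together guarantee that any two simple-edge sequences with the same endpoints are $\approx$-equivalent. The only point I would need to be careful about is the correspondence between a word and the edge sequence it induces, so that the semantic hypothesis $\interp{\word{G}} = \interp{\word{H}}$ really delivers a common source state $M$ for both; this uses nothing more than the self-inverse property of the three generator families, encoded by relations \eqref{rel:orderx}, \eqref{rel:ordermone}, and \eqref{rel:orderk}. The resulting proof will therefore be only a few lines long.
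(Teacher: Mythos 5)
Your high-level strategy --- set up a common source state for $\word{G}$ and $\word{H}$, turn each word into a path to $I$, and invoke \cref{lem:BasicNormalEquiv} --- is exactly the paper's proof. But the identification of a word with a path in the state graph is off by a reversal, and this matters. In the state graph the edge $s \xrightarrow{G} s'$ has $s' = Gs$, and in the proofs of \cref{lem:NormalCommute,lem:BasicNormalEquiv} a path is decomposed as $\word{G} = \word{G}'G$ with $G$ the \emph{first} edge out of the source. In other words, the word labelling a path is the sequence of edge labels read in \emph{reverse} path order, and a path $\word{G}: s \to I$ requires $\interp{\word{G}}\,s = I$, i.e.\ $s = \interp{\word{G}}^{-1}$.

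You instead start from $s_0 = \interp{\word{G}}$ and step by $s_i = G_i s_{i-1}$ in left-to-right word order. That path does reach $I$, but the word it corresponds to in the paper's convention is the reversal $G_m\cdots G_1$, not $\word{G}=G_1\cdots G_m$. Applying \cref{lem:BasicNormalEquiv} to it therefore gives $G_m\cdots G_1 \approx \word{N}$ and $H_k\cdots H_1 \approx \word{N}$, hence only the equivalence of the reversals, not $\word{G} \approx \word{H}$ directly. You would then need the extra observation that $\approx$ is stable under word reversal (true here, since each generator is self-inverse by \eqref{rel:orderx}, \eqref{rel:ordermone}, \eqref{rel:orderk}, so inverting both sides of any relation reverses the words), but you neither state nor prove this. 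The cleaner fix, which is what the paper does, is to take $s = \interp{\word{G}}^{-1} = \interp{\word{H}}^{-1}$ as the common source, with the first edge labelled by the last letter of the word; then \cref{lem:BasicNormalEquiv} applies to $\word{G}$ and $\word{H}$ verbatim.
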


\begin{proof}
  Since $\word{G} \sim \word{H}$, we have $\interp{\word{G}} =
  \interp{\word{H}}$. Let $s = \interp{\word{G}}^{-1} =
  \interp{\word{H}}^{-1}$ and let $\word{N}:s\too I$ be the unique
  sequence of normal edges from $s$ to $I$. By
  \cref{lem:BasicNormalEquiv}, $\word{G} \approx \word{N}$ and
  $\word{H}\approx \word{N}$ so that, since $\approx$ is an
  equivalence relation, $\word{G} \approx \word{H}$.
\end{proof}

% --------------------------------------------------------------------
\section{Orthogonal Scaled Dyadic Matrices}
\label{sec:supin}

As discussed in \cref{sec:intro}, the elements of $\OD$ correspond
exactly to quantum circuits over the gate set $\{CCX,H \otimes
H\}$. Replacing the $H \otimes H$ gate with the $H$ gate results in a
more familiar gate set. In this final section, we give a presentation
of the corresponding matrix group.

\begin{definition}
  \label{def:scaledmat}
  The $n$-dimensional group of \emph{orthogonal scaled dyadic
  matrices} consists of the $n\times n$ orthogonal matrices of the
  form $M/\sqrt{2}{}^k$, where $M$ is an integer matrix and $k$ is a
  nonnegative integer. It is denoted $\supin_n$.
\end{definition}

The notions of denominator exponent and least denominator exponent, as
introduced for dyadic matrices in \cref{sec:gens}, also apply to
scaled dyadic matrices. For elements of $\supin_n$, however, one
should consider powers of $1/\sqrt{2}$, rather than powers of $1/2$. As
a result, in this final section, (least) denominator exponents are
considered with respect to $1/\sqrt{2}$.
 
Note that $\OD\subseteq\supin_n$. It is known from
\cite[Lemma~5.9]{AGR2019} that $\supin_n=\OD$ when $n$ is odd. When
$n$ is even, $\OD$ is a proper subgroup of $\supin_n$ of index 2. As a
consequence, we focus on the case of even $n$ in what follows.

To obtain a set of generators for $\supin_n$ when $n$ is even, it
suffices to add $I_{n/2}\otimes H$ to $\gens_n$, where $I_{n/2}\otimes
H$ is the $n\times n$ block-diagonal matrix
\[
I_{n/2}\otimes H = \diag(H,\ldots,H).
\]
For simplicity, when $n$ is clear from context, we write $I\otimes H$
for $I_{n/2}\otimes H$. Note that, unlike the other generators,
$I\otimes H$ is a global matrix which acts non-trivially on entries of
a vector or matrix.

\begin{definition}
  Let $n$ be even. The set of \emph{$n$-dimensional generators} is the
  subset of $\supin_n$ defined as
  \[
    \gensup_n=\s{\mone{a},\xx{a,b},\hh{a,b,c,d},I\otimes H \mid 1\leq
      a,b,c,d \leq n}.
  \]
\end{definition}

The relation of semantic equivalence is defined on $\gensup_n^*$ as in
\cref{def:semeq}. We adapt the relation of syntactic equivalence on
$\gensup_n^*$ by adding further relations to account for the
additional generator.

\begin{table}
  \begin{subequations} 
    \renewcommand{\theparentequation}{\arabic{parentequation}}%    
    \begin{align}
      \label{itm:relh0} (I\otimes H)^2\ &\approx\ \epsilon \\
      \label{itm:relh5} (I\otimes H)\hh{1,2,3,4}(I\otimes H)\ &\approx\ \hh{1,2,3,4}\\
      \label{itm:relh1} (I\otimes H) \mone{1} (I\otimes H)\ &\approx\ \mone{1}\xx{1,2}\mone{1} \\
      \label{itm:relh3} (I\otimes H) \xx{a,a+1} (I\otimes H)\ &\approx\ \mone{a+1}^{a+1}\xx{a,a+1}^{a}\hh{a-1,a,a+1,a+2}^a 
    \end{align}
  \end{subequations}
  \caption{\label{tab:relationssup}Relations for $\supin_n$}
\end{table}

\begin{definition}
  \label{def:synteqsup}
  The relation $\approx$ on $\gensup_n^*$ is the smallest equivalence
  relation on $\gensup_n^*$ containing the relations of
  \cref{tab:relations,tab:relationssup} and such that if
  $\word{V}\approx \word{V'}$ and $\word{W}\approx \word{W'}$ then
  $\word{V}\word{W} \approx \word{V'}\word{W'}$. Two words $\word{V}$
  and $\word{W}$ such that $\word{V}\approx\word{W}$ are said to be
  \emph{syntactically equivalent}.
\end{definition}

To obtain a presentation of $\supin_n$, we establish soundness and
completeness. As for $\OD$, soundness is proved by
computation and is therefore stated without proof. For completeness,
we leverage \cref{thm:completeness}.

\begin{theorem}[Soundness]
 \label{thm:soundsup}
  Let $n$ be even. Let $\word{G}$ and $\word{H}$ be words over
  $\gensup_n$. Then $\word{G} \approx \word{H}$ implies $\word{G} \sim
  \word{H}$.
\end{theorem}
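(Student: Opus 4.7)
The plan mirrors the proof of \cref{thm:sound}. Since $\sim$ is manifestly a congruence on $\gensup_n^*$ (matrix multiplication is well defined on interpretations), and $\approx$ is by definition the smallest congruence on $\gensup_n^*$ containing the relations of \cref{tab:relations,tab:relationssup}, it suffices to verify that every listed relation is sound as a matrix identity in $\supin_n$. The relations of \cref{tab:relations} are already sound by \cref{thm:sound}, and the generators $\mone{a}$, $\xx{a,b}$, and $\hh{a,b,c,d}$ receive exactly the same matrix interpretations in $\supin_n$ as in $\OD$, so no new work is needed for them. Only the four relations of \cref{tab:relationssup} require checking.

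The first three are immediate. For \eqref{itm:relh0}, the identity $(I_{n/2}\otimes H)^2 = I_{n/2}\otimes H^2 = I$ follows from $H^2=I$. For \eqref{itm:relh5}, note that $I_{n/2}\otimes H$ restricted to the first four coordinates is $I_2\otimes H$, while $\hh{1,2,3,4}$ acts on those coordinates as $K=H\otimes H$; the mixed-product identity gives $(I_2\otimes H)(H\otimes H)(I_2\otimes H) = H\otimes H = K$, and on all remaining coordinates the two outer factors cancel by \eqref{itm:relh0}. For \eqref{itm:relh1}, a direct $2\times 2$ computation yields $H\cdot\diag(-1,1)\cdot H = \begin{pmatrix} 0 & -1 \\ -1 & 0 \end{pmatrix}$, and one checks that $\mone{1}\xx{1,2}\mone{1}$ produces exactly this block while fixing every other coordinate.

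The main obstacle is \eqref{itm:relh3}, whose right-hand side depends on $a$ through integer exponents that, since the generators are involutions, reduce modulo $2$ and thereby encode a split on the parity of $a$. I would verify the identity by distinguishing the two parity cases, according to whether the pair $\{a,a+1\}$ lies inside a single $H$-block of $I_{n/2}\otimes H$ or straddles two adjacent blocks. In the same-block case, the conjugation $(I\otimes H)\,\xx{a,a+1}\,(I\otimes H)$ restricts on that block to $HXH=Z$ and acts trivially on all other coordinates by \eqref{itm:relh0}, producing a generator of the form $\mone{a+1}$. In the straddling case the entire nontrivial action is concentrated on the four consecutive indices $a-1,a,a+1,a+2$, on which $I_{n/2}\otimes H$ restricts to $H\oplus H$; a direct $4\times 4$ multiplication yields $(H\oplus H)\,X_{[a,a+1]}\,(H\oplus H) = \xx{a,a+1}\cdot K$, that is, $\xx{a,a+1}\hh{a-1,a,a+1,a+2}$. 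In each parity case, reducing the integer exponents of the right-hand side modulo $2$ selects exactly the matrix computed on the left, which completes the verification and hence the theorem, since a congruence generated by sound relations is contained in $\sim$.
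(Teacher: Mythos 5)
Your overall strategy mirrors the paper's (which is stated without proof, as ``proved by computation''), and your verifications of \eqref{itm:relh0}, \eqref{itm:relh5}, \eqref{itm:relh1}, as well as your two left-hand-side computations for \eqref{itm:relh3}, are all correct: for $a$ odd the pair $\s{a,a+1}$ is a block of $\diag(H,\ldots,H)$ and $(I\otimes H)\xx{a,a+1}(I\otimes H)$ reduces to $HXH = Z$, giving $\mone{a+1}$; for $a$ even the pair straddles two blocks and the conjugate is $\xx{a,a+1}\hh{a-1,a,a+1,a+2}$.

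The gap is in your final sentence, where you assert without checking that ``reducing the integer exponents of the right-hand side modulo $2$ selects exactly the matrix computed on the left.'' This is false as the table is printed. The right-hand side of \eqref{itm:relh3} is $\mone{a+1}^{a+1}\xx{a,a+1}^{a}\hh{a-1,a,a+1,a+2}^{a}$; the exponent on $\mone{a+1}$ is $a+1$, which is \emph{even} precisely when $a$ is odd, so the $\mone{a+1}$ factor survives when $a$ is \emph{even} -- the opposite of what your block-case computation found. Likewise the $\xx{a,a+1}\hh{a-1,a,a+1,a+2}$ factors survive when $a$ is odd, the opposite of your straddling-case computation. Concretely, for $a=2$, $n\geq 4$: the left side is $\xx{2,3}\hh{1,2,3,4}$ while the printed right side reduces to $\mone{3}$, and these are not equal. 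Your computations therefore do not establish soundness of \eqref{itm:relh3} as written; what they do establish is that the exponents in the table must actually be $\mone{a+1}^{a}\xx{a,a+1}^{a+1}\hh{a-1,a,a+1,a+2}^{a+1}$. You should have plugged in a small value of $a$ and noticed the mismatch rather than waving a hand at the parity bookkeeping; as it stands, the step where your argument ``completes the verification'' is exactly where it fails.
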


\begin{lemma}
  \label{lem:supcomp}
  Let $n$ be even. For every word $\word{G}$ over $\gens_n$ there
  exists a word $\word{G}'$ over $\gens_n$ such that $\left(I \otimes
  H\right) \word{G} \approx \word{G}' \left(I\otimes H\right)$.
\end{lemma}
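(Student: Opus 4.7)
The plan is to prove the lemma by induction on $|\word{G}|$. The base case $\word{G} = \epsilon$ is immediate: take $\word{G}' = \epsilon$. For the inductive step, writing $\word{G} = G\word{H}$ with $G \in \gens_n$, it suffices to establish the following \emph{key claim}: for each single generator $G \in \gens_n$, there exists a word $\tilde{G}$ over $\gens_n$ with $(I \otimes H)G \approx \tilde{G}(I \otimes H)$. Given this, applying the inductive hypothesis to $\word{H}$ yields a word $\word{H}'$ with $(I \otimes H)\word{H} \approx \word{H}'(I \otimes H)$, and then $(I \otimes H)\word{G} \approx \tilde{G}(I \otimes H)\word{H} \approx \tilde{G}\word{H}'(I \otimes H)$, so we set $\word{G}' = \tilde{G}\word{H}'$. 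Using \eqref{itm:relh0}, the key claim is equivalent to $(I \otimes H)G(I \otimes H) \approx \tilde{G}$.

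I would prove the key claim by splitting on the type of $G$, in an order that lets later cases reuse earlier ones. First, for $G = \xx{a,b}$, I would use inner induction on $b-a$: the base case $b = a+1$ follows directly from \eqref{itm:relh3}, while the inductive step uses the renaming relation \eqref{rel:rename2} together with \eqref{rel:orderx} to decompose $\xx{a,b} \approx \xx{a+1,b}\,\xx{a,a+1}\,\xx{a+1,b}$; inserting $(I \otimes H)^2 \approx \epsilon$ between consecutive factors then lets the inner inductive hypothesis be applied to each shorter swap. Second, for $G = \mone{a}$, the case $a = 1$ is immediate from \eqref{itm:relh1}, and for $a > 1$ I would use \eqref{rel:rename3} together with \eqref{rel:orderx} to write $\mone{a} \approx \xx{1,a}\,\mone{1}\,\xx{1,a}$ and then invoke the $\xx$ case and \eqref{itm:relh1} factor by factor. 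Third, for $G = \hh{a,b,c,d}$, I would apply the renaming relations \eqref{rel:rename4}--\eqref{rel:rename7} in succession to rewrite $\hh{a,b,c,d}$ as a conjugate of $\hh{1,2,3,4}$ by an explicit word of $\xx$'s, then use \eqref{itm:relh5} together with the $\xx$ case to push $I \otimes H$ through.

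The main obstacle is bookkeeping rather than any conceptual difficulty. The right-hand side of \eqref{itm:relh3} depends on the parity of $a$ through the exponents $a$ and $a+1$, so the $\xx$ base case really involves two subcases depending on whether $a$ is even or odd; moreover, the renaming relations for $\hh{a,b,c,d}$ require that each substituted index remains distinct from the other three and respects the implicit ordering $a < b < c < d$ on the indices of the generators in $\gens_n$. Carefully checking these side conditions, particularly at the boundaries $a = 1$ and $a+1 = n$ where \eqref{itm:relh3} would otherwise refer to non-existent indices, is where the routine work of the proof concentrates.
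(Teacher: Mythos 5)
Your proof is correct and takes essentially the same approach as the paper's, whose very terse argument reduces every word to one over the basic generators $\mone{1}$, $\hh{1,2,3,4}$, $\xx{a,a+1}$ via \cref{lem:simp} (together with \cref{thm:completeness}) and then applies the relations of \cref{tab:relationssup}. You re-derive the needed reduction to basic generators directly from the renaming relations (effectively re-proving the relevant part of \cref{lem:simp}) and make explicit the outer and inner inductions that the paper leaves implicit, which makes your argument self-contained without the deferral to \cite{Gr2014}, at the modest cost of the parity and boundary bookkeeping you identify.
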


\begin{proof}
  By \cref{lem:simp} and \cref{thm:completeness}, every word in
  $\gens_n^*$ is syntactically equivalent to one that uses only
  $\mone{1}$, $\hh{1,2,3,4}$ and $\xx{a,a+1}$. The claim then follows
  from the relations in \cref{tab:relationssup}.
\end{proof}

\begin{corollary}
  \label{cor:supcomp}
  Let $n$ be even and let $\word{G} \in \gensup_n^*$. If the least
  denominator exponent of $\interp{\word{G}}$ is even, there exists
  $\word{G}' \in \gens_n^*$ such that $\word{G} \approx \word{G}'$. If
  the least denominator of $\interp{\word{G}}$ is odd, there exists
  $\word{G}' \in \gens_n^*$ such that $\word{G} \approx
  \word{G}'(I\otimes H)$.
\end{corollary}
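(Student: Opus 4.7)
The plan is to iteratively apply \cref{lem:supcomp} to push every occurrence of $I\otimes H$ in $\word{G}$ to the right past the intervening $\gens_n$-subwords. Concretely, writing $\word{G} = \word{W}_0 (I\otimes H) \word{W}_1 (I\otimes H) \cdots (I\otimes H) \word{W}_m$ with each $\word{W}_i \in \gens_n^*$, and repeatedly replacing $(I\otimes H)\word{W}_i$ by $\word{W}_i'(I\otimes H)$ using \cref{lem:supcomp}, we obtain $\word{G} \approx \word{W}(I\otimes H)^m$ for some $\word{W} \in \gens_n^*$. Relation~\eqref{itm:relh0} then collapses $(I\otimes H)^m$ to $\epsilon$ (if $m$ is even) or to $(I\otimes H)$ (if $m$ is odd). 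Hence every word in $\gensup_n^*$ is syntactically equivalent either to some $\word{W} \in \gens_n^*$ or to $\word{W}(I\otimes H)$ with $\word{W} \in \gens_n^*$.

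To match these two syntactic forms to the two cases of the corollary, I would appeal to soundness (\cref{thm:soundsup}). In the first case $\interp{\word{G}} = \interp{\word{W}} \in \OD$, whose entries lie in $\Ztwo$, so $\lde(\interp{\word{G}})$ with respect to $1/\sqrt 2$ is even. In the second case $\interp{\word{G}} = \interp{\word{W}}(I\otimes H)$, whose entries lie in $\frac{1}{\sqrt 2}\Ztwo$. Moreover $\interp{\word{G}} \notin \OD$, for otherwise $I\otimes H = \interp{\word{W}}^{-1}\interp{\word{G}}$ would also lie in $\OD$, contradicting the fact that the entries of $H$ are not dyadic. Consequently $\lde(\interp{\word{G}})$ is odd in this case. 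Since the two parities are mutually exclusive, the parity of $\lde(\interp{\word{G}})$ determines uniquely which of the two reduced forms applies, and the corollary follows.

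The only mildly delicate step is verifying that $\interp{\word{W}}(I\otimes H)$ genuinely has odd $\lde$, rather than merely at most odd; this amounts to the elementary observation that $I\otimes H \notin \OD$, because $1/\sqrt 2 \notin \Ztwo$. All remaining steps are routine, being enabled directly by \cref{lem:supcomp} together with the involution relation~\eqref{itm:relh0}.
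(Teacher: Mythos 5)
Your proposal is correct and follows essentially the same route as the paper: iteratively apply \cref{lem:supcomp} to push every $I\otimes H$ to the right, collapse $(I\otimes H)^m$ using relation~\eqref{itm:relh0}, and relate the parity of $m$ to the parity of the least denominator exponent of $\interp{\word{G}}$. Your explicit justification of that last parity link—via soundness together with the observation that $\interp{\word{W}}(I\otimes H)\notin\OD$ because $1/\sqrt 2\notin\Ztwo$—is a welcome elaboration of a step the paper asserts without argument.
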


\begin{proof}
  Let $k$ be the least denominator exponent of $\interp{\word{G}}$
  (with respect to $1/\sqrt{2}$). Through repeated application of
  \cref{lem:supcomp}, we can push all of the occurrence of $I\otimes
  H$ in $\word{G}$ to the right in order to rewrite $\word{G}$ as
  $\word{G}'(I\otimes H)^\ell$ for some $\ell\in\N$ such that
  $\ell\equiv k\pmod{2}$. The result then follows from
  \cref{itm:relh0}.
\end{proof}

\begin{theorem}[Completeness]
  \label{thm:completenesssup}
  Let $\word{G}$ and $\word{H}$ be words over $\gensup_n$. Then
  $\word{G} \sim \word{H}$ implies $\word{G} \approx \word{H}$.
\end{theorem}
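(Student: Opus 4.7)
The plan is to reduce the completeness statement for $\gensup_n$ to the completeness statement for $\gens_n$ (Theorem~\ref{thm:completeness}) by using \cref{cor:supcomp} to push all occurrences of $I\otimes H$ to the right of each word, leaving at most one copy.

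In more detail, suppose $\word{G}\sim\word{H}$, so $\interp{\word{G}}=\interp{\word{H}}$ as matrices in $\supin_n$. Let $k$ be the least denominator exponent of this common matrix with respect to $1/\sqrt{2}$; in particular, the parity of $k$ is determined by $\interp{\word{G}}=\interp{\word{H}}$ and therefore agrees for both words. I would then split into two cases according to the parity of $k$.

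If $k$ is even, \cref{cor:supcomp} provides $\word{G}',\word{H}'\in\gens_n^*$ such that $\word{G}\approx\word{G}'$ and $\word{H}\approx\word{H}'$. Then $\interp{\word{G}'}=\interp{\word{G}}=\interp{\word{H}}=\interp{\word{H}'}$, so $\word{G}'\sim\word{H}'$ as words over $\gens_n$. By \cref{thm:completeness}, $\word{G}'\approx\word{H}'$ using only the relations of \cref{tab:relations}, so by transitivity $\word{G}\approx\word{H}$.

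If $k$ is odd, \cref{cor:supcomp} gives $\word{G}',\word{H}'\in\gens_n^*$ with $\word{G}\approx\word{G}'(I\otimes H)$ and $\word{H}\approx\word{H}'(I\otimes H)$. Since $I\otimes H$ is invertible, $\interp{\word{G}'}=\interp{\word{H}'}$, so $\word{G}'\sim\word{H}'$ as words over $\gens_n$. Applying \cref{thm:completeness} again yields $\word{G}'\approx\word{H}'$, and right-multiplying both sides by $I\otimes H$ (using that $\approx$ is a congruence) gives $\word{G}'(I\otimes H)\approx\word{H}'(I\otimes H)$, whence $\word{G}\approx\word{H}$.

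There is no real obstacle here: the genuinely hard content sits in \cref{lem:supcomp,cor:supcomp} (which isolate how $I\otimes H$ interacts with the generators of $\gens_n$) and in \cref{thm:completeness} itself. Given those, the argument is a clean two-case reduction, and the only thing to be careful about is that the parity of the least denominator exponent is an invariant of the underlying matrix and hence is shared by $\word{G}$ and $\word{H}$, so that both words end up in the same case of \cref{cor:supcomp}.
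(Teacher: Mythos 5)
Your proof is correct and takes essentially the same approach as the paper: split on the parity of the least denominator exponent with respect to $1/\sqrt{2}$, use \cref{cor:supcomp} to reduce each word to one over $\gens_n^*$ (possibly with a trailing $I\otimes H$), and then invoke \cref{thm:completeness}. The only difference is that you spell out why $\word{G}'\sim\word{H}'$ (via soundness and, in the odd case, cancellation of $I\otimes H$), which the paper's proof leaves implicit.
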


\begin{proof}
  Let $k=\lde(\interp{\word{G}})=\lde(\interp{\word{H}})$. If $k$ is
  even, by \cref{cor:supcomp}, $\word{G}\approx \word{G}'$ and
  $\word{H} \approx\word{H}'$ for some
  $\word{G}',\word{H}'\in\gens_n^*$. Thus $\word{G}'\sim \word{H}'$
  and by \cref{thm:completeness} $\word{G}'\approx\word{H}'$. Hence,
  $\word{G} \approx \word{H}$. If $k$ is odd, by \cref{cor:supcomp},
  $\word{G}\approx \word{G}'(I\otimes H)$ and $\word{H}
  \approx\word{H}'(I\otimes H)$ for some
  $\word{G}',\word{H}'\in\gens_n^*$. Thus $\word{G}'\sim \word{H}'$
  and by \cref{thm:completeness} $\word{G}'\approx\word{H}'$. Hence,
  $\word{G} \approx \word{H}$.
\end{proof}

% --------------------------------------------------------------------
\section{Conclusion}
\label{sec:conclusion}
In this paper, we gave a finite presentation of the groups $\OD$ and
$\supin_n$, which arise in the study of so-called restricted
Clifford+$T$ circuits. A natural extension of this work is to study
the matrix groups which correspond to alternative restrictions of the
Clifford+$T$ gate set. Another avenue for future research is to
interpret the relations of \cref{tab:relations,tab:relationssup} as
relations between quantum circuits and to use them to optimize
restricted Clifford+$T$ circuits.

% --------------------------------------------------------------------
\appendix

% --------------------------------------------------------------------
\section{Proof of the Main Lemma}
\label{app:main}

This appendix contains a proof of the Main Lemma (\cref{lem:main}). We
first record some important properties of $\hh{a,b,c,d}$ in
\cref{apps:propk}. Then, in \cref{apps:derivedrels}, we introduce
derived relations which are helpful in establishing that certain
diagrams commute. In \cref{apps:edges}, we distinguish between
\emph{simple edges} and \emph{basic edges} in order to simplify the
proof of \cref{lem:main}. The proof of the Main Lemma, a long case
distinction, can be found in \cref{apps:main}.

% --------------------------------------------------------------------
\subsection{Properties of \texorpdfstring{$\hh{a,b,c,d}$}{K[a,b,c,d]}}
\label{apps:propk}

We start by recording a few useful properties of $\hh{a,b,c,d}$. To
this end, it will be useful to consider the vector of residues
associated to a vector of integers. For brevity, we will sometimes
write $u\equiv r_1 \cdots r_n \pmod{m}$ if $u_i\equiv r_i \pmod{m}$
for $1\leq i \leq n$.

Let $u\in\Z^4$ and define the vectors $v$ and $w$ as
\[
v =
\begin{bmatrix}
  u_1+u_2+u_3+u_4 \\
  u_1-u_2+u_3-u_4 \\
  u_1+u_2-u_3-u_4 \\
  u_1-u_2-u_3+u_4     
\end{bmatrix}
\]
and $w=v/2$. Then $w=\hh{1,2,3,4}u$. Note that while $v\in\Z^4$, for
$w$ we have $w\in\Z^4$ or $w\in\D{}^4$

\begin{lemma}
  \label{applem:honemod4}
  Let $u\in \Z^4$ and suppose that $u_1 + u_2 + u_3 + u_4 \equiv 0
  \pmod{2}$. Then $\hh{1,2,3,4} u = w$ for some $w \in \Z^4$.
\end{lemma}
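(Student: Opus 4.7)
The plan is to show directly that each component of the vector $v = 2w$ in the paper's notation is even, using the hypothesis on $u_1+u_2+u_3+u_4$. Since $w = \hh{1,2,3,4} u = v/2$, this yields $w \in \Z^4$.

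The key observation is that the four entries of $v$ all have the same parity. Indeed, for any $i,j \in \{1,2,3,4\}$, the difference $v_i - v_j$ is an integer linear combination of the $u_k$ in which every coefficient is $0$ or $\pm 2$ (because $v_i$ and $v_j$ are obtained from $u_1 + u_2 + u_3 + u_4$ by flipping the signs of some subset of the $u_k$, and each such sign flip changes the sum by $\pm 2u_k$). Hence $v_i \equiv v_j \pmod{2}$ for all $i,j$.

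I would then invoke the hypothesis: $v_1 = u_1+u_2+u_3+u_4 \equiv 0 \pmod{2}$. Combined with the parity observation above, this gives $v_i \equiv 0 \pmod 2$ for every $i$, so $w_i = v_i/2 \in \Z$ for all $i$, and therefore $w \in \Z^4$.

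There is no real obstacle here; the only thing to be careful about is to justify the parity claim cleanly rather than checking all six pairwise differences by hand. Writing, for instance, $v_2 = v_1 - 2(u_2+u_4)$, $v_3 = v_1 - 2(u_3+u_4)$, and $v_4 = v_1 - 2(u_2+u_3)$ makes the congruences $v_2 \equiv v_3 \equiv v_4 \equiv v_1 \pmod 2$ immediate, which is the most economical presentation.
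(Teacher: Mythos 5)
Your proof is correct and takes essentially the same approach as the paper: both show each $v_i$ is even by observing that $v_i$ differs from $v_1 = u_1+u_2+u_3+u_4$ only by sign flips, which do not change parity, and then conclude $w = v/2 \in \Z^4$.
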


\begin{proof}
  Write $v$ as above. Then, since $u_1+ u_2 + u_3 + u_4\equiv
  0 \pmod{2}$ and $u_i\equiv -u_i\pmod{2}$, we have $v_i\equiv
  0 \pmod{2}$. The result then follows by setting $v_i=2w_i$ and
  noting that $\hh{1,2,3,4}u=v/2=w$.
\end{proof}

\begin{lemma}
  \label{applem:evenoddsOddodds} Let $u\in \Z^4$ and suppose that
  $u\equiv 1111 \pmod{2}$. Then
  \begin{itemize}

  \item if the number of entries in $u$ that are congruent to 1 modulo
  4 is even, then $\hh{1,2,3,4} u =w$ for some $w\in \Z^4$ such that
  $w\equiv 0000 \pmod{2}$, and

  \item if the number of entries in $u$ that are congruent to 1 modulo
  4 is odd, then $\hh{1,2,3,4} u =w$ for some $w\in \Z^4$ such that
  $w\equiv 1111 \pmod{2}$ and the number of entries in $w$ that are
  congruent to 1 modulo 4 is odd.

  \end{itemize}
\end{lemma}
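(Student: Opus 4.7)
My plan is to reduce both halves of the lemma to a modular calculation on the vector $v$ defined before \cref{applem:honemod4}, with $w = v/2$. Since $u \equiv 1111 \pmod 2$ forces $u_1+u_2+u_3+u_4$ to be even, \cref{applem:honemod4} already guarantees $w \in \Z^4$, so all that remains is to analyze $w$ modulo $2$ and, in the odd subcase, to pin down the parity of the count of $w_j$ congruent to $1$ mod $4$.

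Because every $u_i$ is odd, write $u_i \equiv \epsilon_i \pmod 4$ with $\epsilon_i \in \{1,-1\}$, and let $k$ be the number of indices $i$ with $\epsilon_i = 1$, matching the hypothesis. The key structural observation is that each of the four rows of $2K = 2\hh{1,2,3,4}$ contains an \emph{even} number of $-1$ entries, so each row differs from the all-ones row $(1,1,1,1)$ by an even number of sign flips. Flipping an even number of signs in a sum of $\pm 1$'s changes the value by a multiple of $4$, so this yields $v_j \equiv u_1+u_2+u_3+u_4 \equiv 2(4-k) \pmod 4$ uniformly in $j$. Dividing by $2$ gives $w_j \equiv k \pmod 2$ for every $j$, which means $w \equiv 0000 \pmod 2$ in the even case and $w \equiv 1111 \pmod 2$ in the odd case. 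This settles the first bullet and the parity part of the second.

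For the remaining claim in the odd subcase, I would compute $\sum_j w_j$ in two different ways. Summing the four rows of $2K$ gives $(4,0,0,0)$, so $\sum_j v_j = 4 u_1$ and hence $\sum_j w_j = 2 u_1 \equiv 2 \pmod 4$ since $u_1$ is odd. On the other hand, each $w_j$ is odd and hence congruent to $\pm 1 \pmod 4$; letting $k'$ denote the number of indices $j$ with $w_j \equiv 1 \pmod 4$, one has $\sum_j w_j \equiv k' - (4-k') = 2k' - 4 \equiv 2k' \pmod 4$. Comparing the two expressions gives $2k' \equiv 2 \pmod 4$, so $k'$ is odd, as required.

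The only mildly subtle point in this plan is justifying that $v_j \pmod 4$ is independent of $j$; this is immediate from the even-sign-flip observation, and once it is in hand both bullets collapse to the short arithmetic checks above.
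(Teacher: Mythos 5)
Your argument is correct, and it is worth noting because it diverges from the paper's proof in a useful way on the final claim.

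For the two residue-class conclusions ($w\equiv 0000$ or $w\equiv 1111 \pmod 2$), the paper simply states ``it can then be verified'' and leaves the mod-$4$ case analysis implicit; your observation that each row of $2K$ is obtained from the all-ones row by an \emph{even} number of sign flips, so that $v_j\equiv u_1+u_2+u_3+u_4 \pmod 4$ uniformly in $j$, packages exactly that verification into a single clean step. Mathematically this is the same computation, just organized more transparently.

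The genuinely different part is the last claim (that $k'$, the number of $w_j\equiv 1 \pmod 4$, is odd). The paper argues by contradiction using $K^2=I$: if $k'$ were even, the first bullet applied to $w$ would force $\hh{1,2,3,4}w\equiv 0000 \pmod 2$, contradicting $\hh{1,2,3,4}w=u\equiv 1111\pmod 2$. You instead give a direct argument from the identity $\mathbf{1}^{\intercal}(2K)=(4,0,0,0)$, which yields $\sum_j w_j = 2u_1 \equiv 2 \pmod 4$, and then compare with $\sum_j w_j \equiv 2k'-4 \pmod 4$ to conclude $k'$ is odd. Both arguments are valid; yours has the virtue of being self-contained (it does not invoke the already-proved bullet on the new vector $w$, nor the involutivity of $K$), while the paper's has the virtue of brevity and reuses the structure already established. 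Either would serve the paper's purposes.
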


\begin{proof}
We know from \cref{applem:honemod4} that $w\in\Z^4$. Now write $v$ as
above. It can then be verified that if there are evenly many $u_i$
such that $u_i\equiv 1\pmod{4}$, then $v\equiv 0000 \pmod{4}$, so that
$w\equiv 0000 \pmod{2}$. Similarly, if there are oddly many $u_i$ such
that $u_i\equiv 1\pmod{4}$, then $v\equiv 2222 \pmod{4}$, so that
$w\equiv 1111 \pmod{2}$.

Finally, suppose that $u\equiv 1111 \pmod{2}$, that the number of
$u_i\equiv 1 \pmod{4}$ is odd, and that the number of $w_i\equiv
1 \pmod{4}$ is even. Then by the first part of the lemma we have
$\hh{1,2,3,4}w\equiv 0000 \pmod{2}$. But this is a contradiction since
\[
\hh{1,2,3,4} w = \hh{1,2,3,4}\hh{1,2,3,4}u = u
\]
and $u\equiv 1111 \pmod{2}$ by assumption.
\end{proof}

\begin{lemma}
  \label{prop:kevenodds}
  Let $u\in \Z^4$ and suppose that $u^\intercal u\equiv
  2 \pmod{4}$. Then $u$ has exactly two odd entries and $\hh{1,2,3,4}
  u =w$ for some $w\in \Z^4$. Moreover,
  \begin{itemize}
  \item if $u\equiv 1100 \pmod{2}$ then $w\equiv 1010 \pmod{2}$ or
  $w\equiv 0101 \pmod{2}$,
  \item if $u\equiv 1010 \pmod{2}$ then $w\equiv 1100 \pmod{2}$ or
  $w\equiv 0011 \pmod{2}$,
  \item if $u\equiv 1001 \pmod{2}$ then $w\equiv 1001 \pmod{2}$ or
  $w\equiv 0110 \pmod{2}$,
  \item if $u\equiv 0110 \pmod{2}$ then $w\equiv 1001 \pmod{2}$ or
  $w\equiv 0110 \pmod{2}$,
  \item if $u\equiv 0101 \pmod{2}$ then $w\equiv 1100 \pmod{2}$ or
  $w\equiv 0011 \pmod{2}$, and
  \item if $u\equiv 0011 \pmod{2}$ then $w\equiv 1010 \pmod{2}$ or
  $w\equiv 0101 \pmod{2}$.
  \end{itemize}
\end{lemma}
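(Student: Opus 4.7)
The plan is to establish the three assertions in order: first the count of odd entries, then $w\in\Z^4$, then the case-by-case parity analysis.

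For the first assertion, I would observe that for any integer $u_i$ we have $u_i^2 \equiv 0 \pmod{4}$ if $u_i$ is even and $u_i^2 \equiv 1 \pmod{4}$ if $u_i$ is odd. Hence $u^\intercal u \bmod 4$ simply counts (mod 4) the number of odd entries of $u$. Since this count lies in $\{0,1,2,3,4\}$, the only way to get $u^\intercal u \equiv 2 \pmod 4$ is for $u$ to have exactly two odd entries. The second assertion then follows immediately from \cref{applem:honemod4}, because two odd plus two even entries sum to an even integer.

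For the six parity patterns of $w$, my plan is a uniform computation using the explicit formula for $v$ given just before \cref{applem:honemod4}, combined with the fact that $w_i \bmod 2$ is determined by $v_i \bmod 4$. Concretely, in each of the six cases I would write the two odd entries as $2a+1$ and the two even entries as $2a$, expand $v_i$, divide by $2$ to obtain $w_i$, and read off the parities. For instance in the case $u \equiv 1100 \pmod{2}$, writing $u_1 = 2a_1+1$, $u_2 = 2a_2+1$, $u_3 = 2a_3$, $u_4 = 2a_4$, one gets
\[
w = \bigl(a_1+a_2+a_3+a_4+1,\; a_1-a_2+a_3-a_4,\; a_1+a_2-a_3-a_4+1,\; a_1-a_2-a_3+a_4\bigr),
\]
from which the pairwise-parity relations $w_1 \equiv w_3$, $w_2 \equiv w_4$, $w_1 \not\equiv w_2 \pmod{2}$ are visible, forcing $w \equiv 1010$ or $w \equiv 0101 \pmod{2}$. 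The other five cases are analogous.

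There is no real obstacle here; the lemma is essentially a finite verification, and the only risk is clerical error in keeping the signs straight across the six cases. To keep the write-up short one could also invoke a symmetry argument: the group of signed permutations of coordinates that preserves $\hh{1,2,3,4}$ up to sign permutes the six parity patterns of $u$, so once one case is done by direct computation the others follow by conjugation. I would mention this as a shortcut but still include at least one explicit computation (such as the one above) to fix signs, and leave the remaining cases to direct verification.
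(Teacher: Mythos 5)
Your proposal is correct and follows essentially the same strategy as the paper: deduce the count of odd entries from the squares mod $4$, invoke \cref{applem:honemod4} for integrality, and finish by direct computation of residues. Your version of the final step is slightly cleaner than the paper's: where the paper splits further on whether $u_1\equiv u_2\pmod 4$ and then records $v\pmod 4$, you instead expand $w$ directly and read off the pairwise relations $w_1\equiv w_3$, $w_2\equiv w_4$, $w_1\not\equiv w_2\pmod 2$, which pins down the two admissible patterns without a further case split. The symmetry-conjugation shortcut you mention would need a bit more justification (which signed permutations preserve $K$ and act transitively on the six patterns), but since you do not lean on it, the argument is complete as stated.
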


\begin{proof}
Since $u^\intercal u\equiv 2 \pmod{4}$, $u$ has exactly two odd
entries. Thus, by \cref{applem:honemod4}, $\hh{1,2,3,4}u=w$ for some
$w\in\Z^4$. Now suppose that $u\equiv 1100\pmod{2}$. Then $u_1\equiv
u_2\equiv 1 \pmod{2}$ and $u_3\equiv u_4\equiv 0 \pmod{2}$. Note that
$(\pm u_3) + (\pm u_4) \equiv 2u_3 \pmod{4}$. If $u_1\equiv
u_2 \pmod{4}$ we get
\[
  v =
  \begin{bmatrix}
    u_1+u_2+u_3+u_4 \\
    u_1-u_2+u_3-u_4 \\
    u_1+u_2-u_3-u_4 \\
    u_1-u_2-u_3+u_4     
  \end{bmatrix} =
  \begin{bmatrix}
    2u_1+2u_3 \\
    2u_3 \\
    2u_1+2u_3 \\
    2u_3 \\
  \end{bmatrix}
\]  
so that $v\equiv 2020 \pmod{4}$ and $w=v/2\equiv 1010 \pmod{2}$. And
if $u_1\not\equiv u_2 \pmod{4}$ we get
\[
  v =
  \begin{bmatrix}
    u_1+u_2+u_3+u_4 \\
    u_1-u_2+u_3-u_4 \\
    u_1+u_2-u_3-u_4 \\
    u_1-u_2-u_3+u_4     
  \end{bmatrix} =
  \begin{bmatrix}
    2u_3 \\
    2u_1+2u_3 \\
    2u_3 \\
    2u_1+2u_3 \\
  \end{bmatrix}
\]
so that $v\equiv 0202 \pmod{4}$ and $w=v/2\equiv 1010 \pmod{2}$. The
remaining cases are proved similarly.
\end{proof}

\begin{lemma}
  \label{prop:koddodds}  
  Let $u\in \Z^4$ and suppose that $u^\intercal u\equiv
  1 \pmod{2}$. Then $u$ has exactly one or three odd entries and
  $\hh{1,2,3,4} u =w$ for some $w\not\in \Z^4$. Moreover, for
  $v=2w\in\Z^4$, we have \begin{itemize}
  
  \item if $u\equiv 1000 \pmod{2}$ or $u\equiv 0111 \pmod{2}$ then
  $v\equiv 1111 \pmod{4}$ or $v\equiv 3333 \pmod{4}$,

  \item if $u\equiv 0100 \pmod{2}$ or $u\equiv 1011 \pmod{2}$ then
  $v\equiv 1313 \pmod{4}$ or $v\equiv 3131 \pmod{4}$,

  \item if $u\equiv 0010 \pmod{2}$ or $u\equiv 1101 \pmod{2}$ then
  $v\equiv 1133 \pmod{4}$ or $v\equiv 3311 \pmod{4}$, and

  \item if $u\equiv 0001 \pmod{2}$ or $u\equiv 1110\pmod{2}$ then
  $v\equiv 1331 \pmod{4}$ or $v\equiv 3113 \pmod{4}$.

  \end{itemize}
\end{lemma}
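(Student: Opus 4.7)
My plan is to follow the template of the proof of \cref{prop:kevenodds}. First, observe that $u^\intercal u = \sum_i u_i^2 \equiv \sum_i u_i \pmod 2$, since $u_i^2 \equiv u_i \pmod 2$ for any integer. Hence the hypothesis $u^\intercal u \equiv 1 \pmod 2$ forces $\sum_i u_i$ to be odd, so $u$ has an odd number of odd entries, i.e.\ exactly one or three. Moreover $v_1 = u_1 + u_2 + u_3 + u_4 = \sum_i u_i$ is then odd, so $w_1 = v_1/2 \notin \Z$, giving $w \notin \Z^4$.

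For the residue pattern of $v$ modulo $4$, I would carry out a direct computation in each of the eight parity cases, parallel to the proof of \cref{prop:kevenodds}. As a representative, take $u \equiv 1000 \pmod 2$: writing $u_1 = 2a+1$ and $u_j = 2b_j$ for $j \in \{2,3,4\}$, each $v_i$ has the form $(2a+1) \pm 2b_2 \pm 2b_3 \pm 2b_4$, so
\[
v_i \equiv 1 + 2(a \pm b_2 \pm b_3 \pm b_4) \pmod 4.
\]
Since flipping the sign of any $b_j$ preserves its parity, all four parenthesised sums share the parity of $a + b_2 + b_3 + b_4$; hence either $v \equiv 1111 \pmod 4$ or $v \equiv 3333 \pmod 4$. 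The three one-odd cases $u \equiv 0100, 0010, 0001 \pmod 2$ are handled identically, except that the sign of the coefficient of the unique odd entry in each row is read off from the corresponding column of $K$, yielding the patterns $1313/3131$, $1133/3311$, and $1331/3113$, respectively.

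The four three-odd patterns reduce to the one-odd patterns via the shift $u \mapsto u' = u + (1,1,1,1)^\intercal$. This XORs the parity pattern with $1111$, sending each one-odd pattern to its complementary three-odd pattern. Meanwhile $v$ changes by $2K(1,1,1,1)^\intercal = (4,0,0,0)^\intercal \equiv 0 \pmod 4$, so each three-odd case inherits the residues mod $4$ of its one-odd counterpart. The only obstacle is the clean bookkeeping of the eight subcases; each is a routine mod-$4$ calculation requiring no new ideas beyond those already used in \cref{prop:kevenodds}.
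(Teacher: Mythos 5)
Your proof is correct, and it takes a mildly different route from the paper's. The paper treats all eight parity patterns by direct computation (doing $u\equiv 1000$ explicitly and declaring the remaining seven ``similar''), while you handle only the four one-odd patterns directly and then dispatch the four three-odd patterns via the shift $u\mapsto u+(1,1,1,1)^\intercal$, observing that this XORs the parity pattern with $1111$ while leaving $v=2Ku$ unchanged modulo $4$ because $2K(1,1,1,1)^\intercal=(4,0,0,0)^\intercal$. This symmetry genuinely halves the case analysis, and it also explains \emph{why} the lemma pairs each one-odd pattern with its complementary three-odd pattern. (For completeness you should note that the shift preserves the hypothesis $u^\intercal u\equiv 1\pmod 2$, which is immediate since it changes $\sum u_i$ by $4$.) Your base-case argument is also somewhat cleaner than the paper's: you observe directly that for even $u_j$ the term $\pm u_j$ is sign-independent modulo $4$, so the four entries of $v$ differ only in the sign attached to the one odd coordinate, which is read off from the corresponding column of $K$; the paper instead states $(\pm u_2)+(\pm u_3)+(\pm u_4)\equiv 3u_2\pmod 4$, which as written relies on an unstated normalization of $u$ but is dispensable. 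Both approaches are sound; yours buys a shorter case enumeration and a slightly more transparent reason for the residue patterns.
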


\begin{proof}
Since $u^\intercal u\equiv 1 \pmod{2}$, $u$ has oddly many odd
entries. Writing $v$ and $w$ as above, we see that $v\equiv
1111 \pmod{2}$ so that $w\not\in \Z^4$.

Now, if $u\equiv 1000 \pmod{2}$, then
\[
(\pm u_2) + (\pm u_3) + (\pm u_4) \equiv 3u_2 \pmod{4}.
\]
Hence, we either have $v \equiv 1111 \pmod{4}$ when $3u_2\equiv
0 \pmod{4}$ or $v \equiv 3333 \pmod{4}$ when $3u_2\equiv
2 \pmod{4}$. This proves the first item. The remaining items are
proved similarly.
\end{proof}

\begin{lemma}
  \label{applem:twohsone} Let $u\in \Z^4$ and suppose that $u\equiv
  1111 \pmod{4}$. Then $\hh{1,2,3,4} u =2w'$ for some $w'\in \Z^4$ such
  that $w'\equiv 1000 \pmod{2}$ or $w'\equiv 0111 \pmod{2}$.
\end{lemma}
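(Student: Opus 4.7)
The plan is to combine \cref{applem:evenoddsOddodds} with the self-inverse property of $\hh{1,2,3,4}$ and the case analysis in \cref{prop:koddodds}. Since $u \equiv 1111 \pmod{4}$, all four entries of $u$ are congruent to $1$ modulo $4$, so the count of such entries is four---an even number. Together with $u \equiv 1111 \pmod{2}$, this lets me invoke \cref{applem:evenoddsOddodds} to conclude $\hh{1,2,3,4} u = w$ with $w \in \Z^4$ and $w \equiv 0000 \pmod{2}$. Setting $w' = w/2$ produces the $w' \in \Z^4$ required by the statement.

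To determine the parity pattern of $w'$, I would exploit $\hh{1,2,3,4}^2 = I$: applying $\hh{1,2,3,4}$ to both sides of $\hh{1,2,3,4} u = 2w'$ yields $\hh{1,2,3,4} w' = u/2$. Before invoking \cref{prop:koddodds}, I need to verify its hypothesis that $w'$ has an odd number of odd entries. Since $\hh{1,2,3,4}$ is orthogonal, $4\,|w'|^2 = |u|^2$, and as each $u_i = 4a_i + 1$ satisfies $u_i^2 \equiv 1 \pmod{8}$, we get $|u|^2 \equiv 4 \pmod{8}$, whence $|w'|^2$ is odd as required. In the notation of \cref{prop:koddodds} applied with $w'$ in place of $u$, the associated integer vector is $v = 2\,\hh{1,2,3,4}\,w' = u$, and our hypothesis gives $v \equiv 1111 \pmod{4}$. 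Scanning the four listed cases of \cref{prop:koddodds}, only the first produces the residue pattern $1111$ (or $3333$) modulo $4$, and that case corresponds exactly to $w' \equiv 1000 \pmod{2}$ or $w' \equiv 0111 \pmod{2}$, which is the desired conclusion.

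I do not anticipate any serious obstacle; the only hazard is the bookkeeping of which vector plays which role when \cref{prop:koddodds} is run ``backward''. As a safety net, a direct substitution $u_i = 4a_i + 1$ into the formulas for $\hh{1,2,3,4}\,u = v/2$ yields $w'_1 = 1 + \sum_i a_i$ and $w'_j$ an alternating signed sum of the $a_i$'s for $j \geq 2$; reducing modulo $2$ shows these four quantities form the pattern $1000$ when $\sum_i a_i$ is even and $0111$ otherwise, confirming the conclusion without reference to the earlier lemmas.
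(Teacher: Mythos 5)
Your main argument is correct and takes a genuinely different route from the paper. The paper's proof is a direct mod-8 computation on the vector $v$: it observes that all $v_i \equiv 0 \pmod 4$ and that $v_2, v_3, v_4 \equiv v_1 + 4 \pmod 8$, so dividing by 4 yields a vector whose first coordinate has the opposite parity from the other three. You instead chain together \cref{applem:evenoddsOddodds} (to get $\hh{1,2,3,4}u = 2w'$ with $w' \in \Z^4$), the norm computation $4\,{w'}^\intercal w' = u^\intercal u \equiv 4 \pmod 8$ (to get that $w'$ has an odd number of odd entries), the self-inverse property of $\hh{1,2,3,4}$ (to express $u$ as $2\,\hh{1,2,3,4}w'$), and then run \cref{prop:koddodds} backward: since the eight residue patterns appearing in its four conclusions are pairwise disjoint, $u \equiv 1111 \pmod 4$ pins down $w'$ to the first case. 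That converse step is valid precisely because the case analysis in \cref{prop:koddodds} is exhaustive over the eight possible odd-parity patterns and its conclusions are mutually exclusive; it is worth stating that exhaustiveness-and-disjointness explicitly when writing the argument out. What this buys you is reuse of already-proved residue lemmas at the cost of a slightly longer chain of inferences; what the paper's (and your fallback) direct computation buys is self-containment and a shorter proof. Your safety-net calculation, writing $u_i = 4a_i + 1$ and reducing $w'_i$ modulo 2, is essentially the paper's argument reparametrized and is also correct.
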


\begin{proof}
  Let $v\in\Z^4$ be defined as above. Since $u_i \equiv 1 \pmod{4}$,
  we have $v_i\equiv 0 \pmod{4}$. Moreover, $u_i \equiv 1 \pmod{4}$
  also implies that $u_2+u_4\equiv 2 \pmod{4}$, so that $u_2+u_4\equiv
  -(u_2+u_4) \pmod{4}$, and thus that $u_2+u_4\equiv -(u_2+u_4)
  +4 \pmod{8}$. As a result, $v_2 \equiv v_1 +4 \pmod{8}$. Reasoning
  similarly we find that $v_3 \equiv v_1 +4 \pmod{8}$ and that
  $v_4 \equiv v_1 +4 \pmod{8}$. The result then follows by setting
  $v_i=4w_i'$ and noting that $\hh{1,2,3,4}u=v/2=2w'$.
\end{proof}

\begin{lemma}
  \label{prop:kevens1}
  Let $u\in \Z^4$ and suppose that $u\equiv 0000 \pmod{2}$ and that
  $u^\intercal u \equiv 0 \pmod{8}$. Then 
  \[
  \hh{1,2,3,4} u =w
  \]
  for some
  $w\in \Z^4$ such that $w\equiv 0000 \pmod{2}$.
\end{lemma}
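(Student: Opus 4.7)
The plan is to reduce to \cref{applem:honemod4} by dividing $u$ by $2$. Since $u \equiv 0000 \pmod{2}$, we can write $u = 2u'$ with $u' \in \Z^4$. Then $u^\intercal u = 4(u')^\intercal u'$, so the hypothesis $u^\intercal u \equiv 0 \pmod{8}$ translates to $(u')^\intercal u' \equiv 0 \pmod{2}$.

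The key observation is that for any integer $m$ we have $m^2 \equiv m \pmod{2}$, so
\[
(u')^\intercal u' = \sum_{i=1}^4 (u'_i)^2 \equiv \sum_{i=1}^4 u'_i \pmod{2}.
\]
Hence $u'_1 + u'_2 + u'_3 + u'_4 \equiv 0 \pmod{2}$, and \cref{applem:honemod4} applies to $u'$, yielding $\hh{1,2,3,4} u' = w'$ for some $w' \in \Z^4$.

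Finally, since $\hh{1,2,3,4}$ is linear, $\hh{1,2,3,4} u = 2\,\hh{1,2,3,4} u' = 2w'$. Setting $w = 2w'$ gives $w \in \Z^4$ with $w \equiv 0000 \pmod{2}$, as required.

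There is no real obstacle here; the only point to notice is the passage from $u^\intercal u \equiv 0 \pmod{8}$ to the parity condition on $\sum u'_i$, which uses the elementary identity $m^2 \equiv m \pmod{2}$. Everything else is a direct application of \cref{applem:honemod4} to the rescaled vector $u' = u/2$.
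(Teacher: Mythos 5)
Your proof is correct, and it takes a genuinely different route from the paper. The paper argues directly on $u$: since each $u_i$ is even, $u_i^2 \equiv 0$ or $4 \pmod{8}$, so the hypothesis $u^\intercal u \equiv 0 \pmod{8}$ forces evenly many $u_i \equiv 2 \pmod 4$; it then concludes ``by computation, as in the proof of \cref{applem:honemod4}'', i.e.\ by repeating the residue analysis of the sums $v_i$ rather than invoking the earlier lemma as a black box. You instead factor out the common factor of $2$, writing $u = 2u'$, translate $u^\intercal u \equiv 0 \pmod 8$ into $(u')^\intercal u' \equiv 0 \pmod 2$, and then use the elementary identity $m^2\equiv m\pmod 2$ to obtain the hypothesis $\sum_i u'_i \equiv 0\pmod 2$ of \cref{applem:honemod4} directly, concluding by linearity of $\hh{1,2,3,4}$. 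Your version is more modular: it genuinely reuses \cref{applem:honemod4} rather than re-running a similar computation, and it also packages the mod-$8$ hypothesis more transparently (as a mod-$2$ condition on $u'$). The paper's version buys slightly more explicit information about residues mod $4$, which it does not need for this lemma but which matches the style of the neighboring lemma \cref{prop:kevens2}.
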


\begin{proof}
We have $u^\intercal u \equiv 0 \pmod{8}$. Since the square of an even
integer is congruent to 0 or 4 modulo 8 there must be evenly many
$u_i$ such that $u_i^2\equiv 4 \pmod{8}$. Therefore, there must be
evenly many $u_i$ such that $u_i\equiv 2 \pmod{4}$. The result then
follows by computation, as in the proof of \cref{applem:honemod4}.
\end{proof}

\begin{lemma}
  \label{prop:kevens2}
  Let $u\in \Z^4$ and suppose that $u\equiv 0000 \pmod{2}$ and that
  $u^\intercal u \equiv 4 \pmod{8}$. Then 
  \[
  \hh{1,2,3,4} u =w
  \]
  for some
  $w\in \Z^4$ such that $w\equiv 1111 \pmod{2}$. Moreover, evenly many
  of the entries of $w$ are congruent to 1 modulo 4.
\end{lemma}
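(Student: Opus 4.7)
The plan is to first extract the parity information about $u$ modulo $4$ that is forced by the hypothesis $u^\intercal u \equiv 4 \pmod 8$, then to use that information to compute the parity of each entry of $w$ directly, and finally to derive the moreover clause by exploiting that $\hh{1,2,3,4}$ is an involution and invoking \cref{applem:evenoddsOddodds}.

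Let $s$ denote the number of indices $i$ with $u_i \equiv 2 \pmod{4}$. Since each $u_i$ is even, $u_i^2 \equiv 0 \pmod 8$ when $u_i \equiv 0 \pmod 4$ and $u_i^2 \equiv 4 \pmod 8$ when $u_i \equiv 2 \pmod 4$, so $u^\intercal u \equiv 4s \pmod{8}$. The hypothesis $u^\intercal u \equiv 4 \pmod 8$ then forces $s$ to be odd, i.e., $s \in \{1,3\}$. In particular, $u_1+u_2+u_3+u_4$ is even, so \cref{applem:honemod4} already gives $w \in \Z^4$. To get the parity of each $w_i$, I would write $w_i = v_i/2$ where $v_i$ is one of the four $\pm$-signed sums of the $u_j$ from the display preceding \cref{applem:honemod4}. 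Since each $u_j$ is $\equiv 0$ or $2 \pmod 4$ and $\pm 0 \equiv 0$, $\pm 2 \equiv 2 \pmod 4$, every signed sum reduces to $2s \equiv 2 \pmod 4$. Hence $v_i \equiv 2 \pmod 4$ for each $i$, so each $w_i$ is odd, establishing $w \equiv 1111 \pmod 2$.

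For the moreover clause, I would apply \cref{applem:evenoddsOddodds} to $w$. Since $w \equiv 1111 \pmod 2$, that lemma offers a dichotomy: if the number of entries of $w$ congruent to $1 \pmod 4$ is even, then $\hh{1,2,3,4} w \equiv 0000 \pmod 2$; if it is odd, then $\hh{1,2,3,4} w \equiv 1111 \pmod 2$. But $\hh{1,2,3,4}$ is an involution (its defining matrix $K$ is symmetric and orthogonal), so $\hh{1,2,3,4} w = \hh{1,2,3,4}^2 u = u \equiv 0000 \pmod 2$ by hypothesis. This rules out the odd case, and we conclude that evenly many of the entries of $w$ are congruent to $1$ modulo $4$.

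I do not foresee a serious obstacle: every step is either a short residue-tracking computation or a direct appeal to a lemma that has already been proved in this appendix. The only mildly delicate observation is that all four signed sums $v_i$ share the same residue modulo $4$, which in turn rests on the trivial remark that $\pm 2 \equiv 2 \pmod 4$; the moreover clause then collapses to a single application of \cref{applem:evenoddsOddodds} combined with $\hh{1,2,3,4}^2 = I$.
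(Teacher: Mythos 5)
Your proof is correct and follows essentially the same approach as the paper: the first statement is obtained by counting the number of $u_i \equiv 2 \pmod{4}$ (forced to be odd by $u^\intercal u \equiv 4 \pmod 8$) and tracking residues of the signed sums, and the moreover clause is obtained from the involution $\hh{1,2,3,4}^2 = I$ together with a parity argument. The only stylistic difference is that you invoke \cref{applem:evenoddsOddodds} for the moreover clause, whereas the paper carries out the equivalent one-line computation directly by showing $(w_1+w_2+w_3+w_4)/2$ would be odd, contradicting that it equals the (even) entry $u_1$ of $\hh{1,2,3,4}w = u$.
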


\begin{proof}
The first statement is shown as in \cref{prop:kevens1}. For the second
statement, suppose that oddly many of the entries of $w$ were
congruent to 1 modulo 4. Then $w_1+w_2+w_3+w_4\equiv 2 \pmod{4}$. Then
$(w_1+w_2+w_3+w_4)/2\equiv 1 \pmod{2}$. But this is a contradiction
since $(w_1+w_2+w_3+w_4)/2=v_1$ and $v_1\equiv 0 \pmod{2}$ by
assumption.
\end{proof}

\begin{lemma}
  \label{applem:norm}
  Let $u \in \Z^8$ and suppose that $u\equiv 11111111 \pmod{2}$. Then
  either $u^\intercal u\equiv 0 \pmod{16}$ or $u^\intercal u\equiv 8
  \pmod{16}$.
\end{lemma}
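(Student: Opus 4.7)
The plan is to reduce the statement to the well-known fact that the square of any odd integer is congruent to $1$ modulo $8$. Once this is established, the result follows almost immediately: conclusions "$\equiv 0 \pmod{16}$ or $\equiv 8 \pmod{16}$" is simply the same as "$\equiv 0 \pmod{8}$," so it suffices to show that $u^\intercal u \equiv 0 \pmod{8}$.

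First I would recall why $m^2 \equiv 1 \pmod{8}$ for every odd integer $m$: writing $m = 2k+1$, we have $m^2 = 4k(k+1) + 1$, and $k(k+1)$ is always even, so $m^2 - 1$ is divisible by $8$. Applying this to each of the eight entries of $u$ (which are all odd by the hypothesis $u \equiv 11111111 \pmod{2}$), we obtain $u_i^2 \equiv 1 \pmod{8}$ for $1 \leq i \leq 8$.

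Next I would simply sum:
\[
u^\intercal u \;=\; \sum_{i=1}^{8} u_i^2 \;\equiv\; 8 \cdot 1 \;=\; 8 \;\equiv\; 0 \pmod{8}.
\]
Thus $u^\intercal u$ is a multiple of $8$, which means it falls into one of the two residue classes $0$ or $8$ modulo $16$, exactly as claimed. There is really no obstacle here; the only subtlety is noticing that the two disjuncts in the conclusion together cover precisely the residues of multiples of $8$ modulo $16$, so the elementary congruence $m^2 \equiv 1 \pmod 8$ for odd $m$ suffices.
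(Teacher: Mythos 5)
Your proof is correct. You take a slightly simpler route than the paper: you notice that the disjunction ``$\equiv 0$ or $\equiv 8 \pmod{16}$'' is just a verbose way of saying ``$\equiv 0 \pmod 8$,'' and then the result follows at once from the standard fact that the square of any odd integer is $1 \pmod 8$, since the sum of eight such squares is $\equiv 8 \equiv 0 \pmod 8$. The paper instead works directly modulo $16$: it uses that an odd square is $1$ or $9 \pmod{16}$, writes $u^\intercal u \equiv x + 9y \pmod{16}$ where $x+y=8$, and deduces that $x+9y = 8 + 8y$ is $0$ or $8 \pmod{16}$. Both arguments are elementary and equivalent in substance; yours buys a small economy of thought by reducing the target modulus from $16$ to $8$ up front, while the paper's formulation keeps the two residue classes $0$ and $8 \pmod{16}$ explicit, which matches how they are used separately in the two subsequent lemmas (\cref{applem:normresidue1} and \cref{applem:normresidue2}).
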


\begin{proof}
Since the square of an odd integer is either 1 or 9 modulo 16, then
$u^\intercal u \equiv x + 9y \pmod{16}$ where $x$ is the number of
entries whose square is congruent to 1 and $y$ is the number of
entries whose square is congruent to 9. But $x+y=8$, so that
$u^\intercal u\equiv 0 \pmod{16}$ or $u^\intercal u \equiv 8
\pmod{16}$ as desired.
\end{proof}

\begin{lemma}
  \label{applem:normresidue1}
  Let $u \in \Z^8$ and suppose that $u\equiv 11111111 \pmod{4}$.  If
  $u^\intercal u\equiv 0 \pmod{16}$ then
  \[
  \hh{1,2,3,4}\hh{5,6,7,8}u=2w
  \]
  for some $w\in\Z^8$ such that $
  w\equiv 10000111 \pmod{2}$ or $w\equiv 01111000 \pmod{2}$.
\end{lemma}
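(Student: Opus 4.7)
My plan is to exploit the block structure of $\hh{1,2,3,4}\hh{5,6,7,8}$ and reduce to \cref{applem:twohsone}. Write $u = (u', u'')$ with $u' = (u_1,u_2,u_3,u_4)$ and $u'' = (u_5,u_6,u_7,u_8)$. Since $\{1,2,3,4\}$ and $\{5,6,7,8\}$ are disjoint, the two factors act independently on the two halves, so $\hh{1,2,3,4}\hh{5,6,7,8} u = (Ku', Ku'')$. The hypothesis $u \equiv 11111111 \pmod 4$ yields $u' \equiv 1111 \pmod 4$ and $u'' \equiv 1111 \pmod 4$, so \cref{applem:twohsone} applied to each half gives $Ku' = 2w_1$ and $Ku'' = 2w_2$ with $w_1, w_2 \in \Z^4$, each congruent modulo 2 to either $1000$ or $0111$. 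Hence $\hh{1,2,3,4}\hh{5,6,7,8}u = 2w$ with $w = (w_1, w_2)$, and the mod-2 pattern of $w$ is one of $10001000$, $10000111$, $01111000$, or $01110111$.

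The next step is to use $u^\intercal u \equiv 0 \pmod{16}$ to discard the patterns $10001000$ and $01110111$. Since $\hh{a,b,c,d}$ is orthogonal, the norm is preserved, so
\[
u^\intercal u \;=\; (2w_1)^\intercal(2w_1) + (2w_2)^\intercal(2w_2) \;=\; 4\bigl(w_1^\intercal w_1 + w_2^\intercal w_2\bigr),
\]
which turns the hypothesis into $w_1^\intercal w_1 + w_2^\intercal w_2 \equiv 0 \pmod 4$. Since odd squares are $\equiv 1 \pmod 4$ and even squares are $\equiv 0 \pmod 4$, the quantity $w_i^\intercal w_i$ is congruent modulo 4 to the number of odd entries of $w_i$. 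Thus $w_i^\intercal w_i \equiv 1 \pmod 4$ in the $1000$ case and $\equiv 3 \pmod 4$ in the $0111$ case, and the sum $w_1^\intercal w_1 + w_2^\intercal w_2$ is $\equiv 2 \pmod 4$ exactly when $w_1$ and $w_2$ have the same pattern, and $\equiv 0 \pmod 4$ exactly when they have opposite patterns. The hypothesis forces opposite patterns, leaving $w \equiv 10000111 \pmod 2$ or $w \equiv 01111000 \pmod 2$.

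There is no real obstacle beyond careful parity bookkeeping. The only content beyond \cref{applem:twohsone} is the orthogonality of the Hadamard-squared matrix, used to translate the global norm condition into a tight mod-4 constraint on the two halves; this precisely distinguishes the allowed patterns from the forbidden ones.
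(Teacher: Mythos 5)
Your proof is correct and follows essentially the same route as the paper: apply \cref{applem:twohsone} blockwise to obtain the four candidate residue patterns, then use orthogonality of $K$ to convert the $\pmod{16}$ norm condition on $u$ into a $\pmod 4$ condition on $w^\intercal w$ that eliminates the patterns $10001000$ and $01110111$. You spell out the block decomposition and the odd-entry count more explicitly than the paper's terse version, but the underlying argument is identical.
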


\begin{proof}
We know by \cref{applem:twohsone} that $\hh{1,2,3,4}\hh{5,6,7,8}u=2w$ for
some $w\in\Z^8$ such that the vector of residues of $w$ is one of
\[
10001000, \quad 10000111, \quad 01111000, \quad \mbox{ or } \quad 01110111.
\]
But, since $K$ is orthogonal and $u^\intercal u\equiv 0 \pmod{16}$, we
have $4(w^\intercal w) \equiv u^\intercal u \equiv 0\pmod{16}$ and
therefore $w^\intercal w\equiv 0 \pmod{4}$ so that $ w\equiv 10000111
\pmod{2}$ or $w\equiv 01111000 \pmod{2}$ as claimed.
\end{proof}

\begin{lemma}
  \label{applem:normresidue2}
  Let $u \in \Z^8$ and suppose that $u\equiv 11111111 \pmod{4}$.  If
  $u^\intercal u\equiv 8 \pmod{16}$ then
  \[
  \hh{1,2,3,4}\hh{5,6,7,8}u=2w
  \]
  for some $w\in\Z^8$ such that $
  w\equiv 10001000 \pmod{2}$ or $w\equiv 01110111 \pmod{2}$.
\end{lemma}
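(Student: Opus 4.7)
The plan is to mirror the proof of \cref{applem:normresidue1}, simply exchanging the residue condition on $u^\intercal u$ and tracking how this changes which of the four candidate patterns for $w$ survive. The four candidate patterns for the residue of $w$ modulo $2$ are already determined by \cref{applem:twohsone} applied independently to the blocks $(u_1,u_2,u_3,u_4)$ and $(u_5,u_6,u_7,u_8)$: since each $u_i \equiv 1 \pmod{4}$, that lemma gives $\hh{1,2,3,4}\hh{5,6,7,8}u = 2w$ with $w \in \Z^8$ and with the first four and last four entries of $w$ each independently congruent to either $1000$ or $0111$ modulo $2$. So the only possibilities are
\[
w \equiv 10001000,\ 10000111,\ 01111000,\ \text{or}\ 01110111 \pmod{2}.
\]

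Next I would use orthogonality to pin down $w^\intercal w$ modulo $4$. Since $\hh{1,2,3,4}\hh{5,6,7,8}$ is orthogonal, $4(w^\intercal w) = (2w)^\intercal(2w) = u^\intercal u \equiv 8 \pmod{16}$, hence $w^\intercal w \equiv 2 \pmod{4}$. Because an odd square is $\equiv 1 \pmod{4}$ and an even square is $\equiv 0 \pmod{4}$, the quantity $w^\intercal w \bmod 4$ equals the number of odd entries of $w$ modulo $4$. Thus $w$ must have either exactly two or exactly six odd entries.

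Finally, I would eliminate the two patterns with four odd entries: $10000111$ and $01111000$ each have four odd positions and therefore give $w^\intercal w \equiv 0 \pmod 4$, contradicting what was just established. The remaining patterns, $10001000$ (two odd entries) and $01110111$ (six odd entries), are consistent, yielding exactly the two residue classes claimed in the statement.

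There is no real obstacle here; the entire argument is a residue bookkeeping step parallel to \cref{applem:normresidue1}. The only thing to be careful about is the direction of the counting: whereas in \cref{applem:normresidue1} the norm condition $u^\intercal u \equiv 0 \pmod{16}$ forces $w^\intercal w \equiv 0 \pmod 4$ and so selects the two patterns with four odd entries, here the shifted condition $u^\intercal u \equiv 8 \pmod{16}$ forces $w^\intercal w \equiv 2 \pmod 4$ and so selects the complementary pair of patterns, which is exactly what is needed.
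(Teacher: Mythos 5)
Your proof is correct and follows the same route the paper uses: the paper's proof of \cref{applem:normresidue2} is simply the remark ``Similar to the proof of \cref{applem:normresidue1},'' and your argument is exactly that adaptation, invoking \cref{applem:twohsone} to narrow $w$ to the four candidate residue patterns and then using orthogonality ($4\,w^\intercal w = u^\intercal u \equiv 8 \pmod{16}$, hence $w^\intercal w \equiv 2 \pmod 4$) to select the two patterns with $2$ or $6$ odd entries. The counting step is carried out correctly, so nothing is missing.
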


\begin{proof}
  Similar to the proof of \cref{applem:normresidue1}.
\end{proof}

% --------------------------------------------------------------------
\subsection{Derived Relations}
\label{apps:derivedrels}

In this section, we show that certain convenient relations can be
derived from the relations given in \cref{tab:relations}. In the
derivations, we sometimes use certain relations implicitly: we remove
adjacent pairs of identical generators using
\cref{rel:orderx,rel:ordermone,rel:orderk}, we commute generators
acting on distinct indices using
\cref{rel:disjoint1,rel:disjoint2,rel:disjoint3,rel:disjoint4,rel:disjoint5,rel:disjoint6},
and we change indices using
\cref{rel:rename1,rel:rename2,rel:rename3,rel:rename4,rel:rename5,rel:rename6,rel:rename7}.

\begin{proposition}
  \label{derivedrelsxk}
  The relations below are derivable.
  \begin{subequations}
    \begin{align}
      \label{rel:ksym4} \xx{a,c}\hh{a,b,c,d}\ &\approx\ \hh{a,b,c,d}\xx{c,d}\mone{c}\mone{d} \\
      \label{rel:ksym5} \xx{a,d}\hh{a,b,c,d}\ &\approx\ \hh{a,b,c,d}\xx{b,d}\xx{c,d}\xx{b,d}\mone{b}\mone{c} \\
      \label{rel:ksym2} \xx{b,c}\hh{a,b,c,d}\ &\approx\ \hh{a,b,c,d}\xx{b,c} \\      
      \label{rel:ksym6} \xx{b,d}\hh{a,b,c,d}\ &\approx\ \hh{a,b,c,d}\xx{c,d}
    \end{align}
  \end{subequations}
\end{proposition}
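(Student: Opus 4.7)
The plan is to derive the four relations sequentially, starting with \cref{rel:ksym2} (the main obstacle), followed by \cref{rel:ksym4}, \cref{rel:ksym6}, and finally \cref{rel:ksym5}; each derivation reuses the previously derived relations together with those of \cref{tab:relations}. The central observation is that \cref{rel:swap1} already presents $\xx{b,c}\hh{a,b,c,d}$ as the palindrome $P := \mone{a}\hh{a,b,c,d}\mone{a}\hh{a,b,c,d}\mone{a}$, so \cref{rel:ksym2} reduces to showing syntactically that $\hh{a,b,c,d}\xx{b,c}\approx P$ as well. This is the hard step because the semantic identity is immediate from the palindromic form of $P$, but the congruence $\approx$ is not \emph{a priori} closed under word reversal, so the syntactic derivation must explicitly construct a telescoping product that exploits the involutivity of each of the three generators involved.

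To derive \cref{rel:ksym2}, I left-multiply both sides of \cref{rel:swap1} by $\mone{a}\hh{a,b,c,d}\mone{a}$. On the right, the product $\mone{a}\hh{a,b,c,d}\mone{a}\cdot P$ telescopes through three cancellations ($\mone{a}^2$, $\hh{a,b,c,d}^2$, $\mone{a}^2$) via \cref{rel:ordermone} and \cref{rel:orderk} down to $\hh{a,b,c,d}\mone{a}$. On the left, I use \cref{rel:disjoint2} to commute the inner $\mone{a}$ past $\xx{b,c}$, obtaining $\mone{a}\hh{a,b,c,d}\xx{b,c}\mone{a}\hh{a,b,c,d}\approx\hh{a,b,c,d}\mone{a}$. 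Right-multiplying by $\hh{a,b,c,d}$, then by $\mone{a}$, and finally left-multiplying by $\mone{a}$ peels off one generator per step (each cancellation via \cref{rel:ordermone} or \cref{rel:orderk}) to yield $\hh{a,b,c,d}\xx{b,c}\approx P$, and transitivity with \cref{rel:swap1} gives \cref{rel:ksym2}.

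With \cref{rel:ksym2} in hand, the remaining three derivations are straightforward braid-style manipulations. For \cref{rel:ksym4}, I insert $\xx{b,c}^2\approx\epsilon$ and use \cref{rel:rename2} to write $\xx{b,c}\xx{a,c}\approx\xx{a,b}\xx{b,c}$; applying \cref{rel:ksym2} and \cref{rel:ksym1} then moves everything past $\hh{a,b,c,d}$, and the residual word $\xx{b,c}\xx{b,d}\mone{b}\mone{d}\xx{b,c}$ collapses to $\xx{c,d}\mone{c}\mone{d}$ via \cref{rel:rename1}, \cref{rel:rename3}, \cref{rel:disjoint2}, and \cref{rel:orderx}. \cref{rel:ksym6} follows analogously, using \cref{rel:ksym3} in place of \cref{rel:ksym1}. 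For \cref{rel:ksym5}, I decompose $\xx{a,d}\approx\xx{c,d}\xx{a,c}\xx{c,d}$ by braid rewriting, apply \cref{rel:ksym3} to the outer $\xx{c,d}\hh{a,b,c,d}$ factors and the newly derived \cref{rel:ksym4} to the middle $\xx{a,c}\hh{a,b,c,d}$, and then rearrange the trailing $\xx{}$'s and $\mone{}$'s using \cref{rel:rename3} and disjoint commutations to reach the required form $K\xx{b,d}\xx{c,d}\xx{b,d}\mone{b}\mone{c}$.
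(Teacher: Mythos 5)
Your proposal is correct and takes essentially the same route as the paper's: derive \cref{rel:ksym2} first from \cref{rel:swap1} by telescoping through the involutive generators, then obtain \cref{rel:ksym4,rel:ksym5,rel:ksym6} via braid decompositions of the relevant $X$, moving each factor past $\hh{a,b,c,d}$ one at a time using \cref{rel:ksym1,rel:ksym2,rel:ksym3,rel:ksym4}. The one noteworthy difference is \cref{rel:ksym6}, which the paper dispatches by declaring it ``the adjoint of \cref{rel:ksym3}'' --- tacitly invoking closure of $\approx$ under word reversal, which holds here because every generator squares to $\epsilon$ --- whereas you give a direct braid-style derivation parallel to your treatment of \cref{rel:ksym4}; your version is slightly more self-contained, but both are valid.
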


\begin{proof}
  For \cref{rel:ksym2}, using \cref{rel:swap1}, we have
  \begin{align*}
    \xx{b,c}\hh{a,b,c,d}\ &\approx\ \mone{a}\hh{a,b,c,d}\mone{a}\hh{a,b,c,d}\mone{a}\hh{a,b,c,d}\hh{a,b,c,d}\\
    &\approx\ \mone{a}\hh{a,b,c,d}\mone{a}\hh{a,b,c,d}\mone{a} \\
    &\approx\ \hh{a,b,c,d}\hh{a,b,c,d}\mone{a}\hh{a,b,c,d}\mone{a}\hh{a,b,c,d}\mone{a} \\
    &\approx\ \hh{a,b,c,d}\xx{b,c}.
  \end{align*}
  For \cref{rel:ksym4}, using \cref{rel:ksym2,rel:ksym1}, we have
  \begin{align*}
    \xx{a,c}\hh{a,b,c,d}\ &\approx\ \xx{b,c}\xx{a,b}\xx{b,c}\hh{a,b,c,d} \\
    &\approx\ \xx{b,c}\xx{a,b}\hh{a,b,c,d}\xx{b,c} \\
    &\approx\ \xx{b,c}\hh{a,b,c,d}\xx{b,d}\mone{b}\mone{d}\xx{b,c} \\
    &\approx\ \hh{a,b,c,d}\xx{b,c}\xx{b,d}\mone{b}\mone{d}\xx{b,c} \\
    &\approx\ \hh{a,b,c,d}\xx{c,d}\mone{c}\mone{d}.
  \end{align*}
  For \cref{rel:ksym5}, using \cref{rel:ksym4,rel:ksym3}, we have
  \begin{align*}
    \xx{a,d}\hh{a,b,c,d}\ &\approx\ \xx{c,d}\xx{a,c}\xx{c,d}\hh{a,b,c,d} \\
    &\approx\ \xx{c,d}\xx{a,c}\hh{a,b,c,d}\xx{b,d} \\
    &\approx\ \xx{c,d}\hh{a,b,c,d}\xx{c,d}\mone{c}\mone{d}\xx{b,d} \\
    &\approx\ \hh{a,b,c,d}\xx{b,d}\xx{c,d}\mone{c}\mone{d}\xx{b,d} \\
    &\approx\ \hh{a,b,c,d}\xx{b,d}\xx{c,d}\xx{b,d}\mone{b}\mone{c}.
  \end{align*}
  Finally, \cref{rel:ksym6} is the adjoint of \cref{rel:ksym3}.
\end{proof}

Along with \cref{rel:ksym1,rel:ksym2}, the relations
of \cref{derivedrelsxk} will allow us to move an $x$ generator passed
a $K$ generator when the $X$ generator acts on two of the indices on
which the $K$ generator acts. The next proposition shows how to move
evenly many occurrences of a $(-1)$ generator passed a $K$ generator.

\begin{proposition}
  \label{prop:derivedrels1}
  The relations below are derivable.
  \begin{subequations}
    \begin{align}
      \label{rel:k12} \mone{a}\mone{b}\hh{a,b,c,d} &\approx\ \hh{a,b,c,d}\xx{a,c}\xx{b,d}\mone{a}\mone{b}\mone{c}\mone{d}\\
      \label{rel:k22} \mone{a}\mone{c}\hh{a,b,c,d} &\approx\ \hh{a,b,c,d}\xx{a,b}\xx{c,d}\mone{a}\mone{b}\mone{c}\mone{d}\\
       \label{rel:k32} \mone{a}\mone{d}\hh{a,b,c,d} &\approx\ \hh{a,b,c,d}\xx{a,d}\xx{b,c}\mone{a}\mone{b}\mone{c}\mone{d}\\
      \label{rel:k31} \mone{b}\mone{c}\hh{a,b,c,d} &\approx\ \hh{a,b,c,d}\xx{a,d}\xx{b,c}\\
      \label{rel:k21} \mone{b}\mone{d}\hh{a,b,c,d} &\approx\ \hh{a,b,c,d}\xx{a,b}\xx{c,d}\\
      \label{rel:k11} \mone{c}\mone{d}\hh{a,b,c,d} &\approx\ \hh{a,b,c,d}\xx{a,c}\xx{b,d} \\
      \label{itm:relk4} \mone{a}\mone{b}\mone{c}\mone{d}\hh{a,b,c,d} &\approx\ \hh{a,b,c,d}\mone{a}\mone{b}\mone{c}\mone{d}
    \end{align}
  \end{subequations}
\end{proposition}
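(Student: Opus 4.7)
My plan is to derive the seven relations in three stages, in an order that ensures each derivation depends only on relations established earlier.

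Stage 1 derives \cref{rel:k11}, \cref{rel:k21}, and \cref{rel:k31}, the three relations without trailing $(-1)$ generators. The idea is to convert each of the $X$-past-$K$ relations \cref{rel:ksym1,rel:ksym3,rel:ksym4,rel:ksym5,rel:ksym6,rel:ksym2} into ``conjugation'' form $\hh{a,b,c,d}\xx{i,j}\hh{a,b,c,d}\approx\xx{p,q}\mone{p}^{\epsilon_p}\mone{q}^{\epsilon_q}$ by multiplying on the left by $\hh{a,b,c,d}$ and applying \cref{rel:orderk}. Multiplying the two conjugation identities corresponding to a complementary pairing of $\{a,b,c,d\}$ and using the auxiliary identity $\xx{p,q}\mone{p}\mone{q}\xx{p,q}\approx\mone{p}\mone{q}$ (which follows from \cref{rel:rename3} applied in both orientations together with \cref{rel:orderx}) yields $\hh{a,b,c,d}\xx{i_1,j_1}\xx{i_2,j_2}\hh{a,b,c,d}\approx\mone{k_1}\mone{k_2}$. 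Multiplying on the right by $\hh{a,b,c,d}$ produces the three relations. The pairings are $(a,c),(b,d)$ for \cref{rel:k11}; $(a,b),(c,d)$ for \cref{rel:k21}; and $(a,d),(b,c)$ for \cref{rel:k31}. For \cref{rel:k31}, \cref{rel:ksym5} must first be simplified using the permutation identity $\xx{b,d}\xx{c,d}\xx{b,d}\approx\xx{b,c}$, which follows from \cref{rel:rename2} (with implicit symmetry of the index pair) and \cref{rel:orderx}.

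Stage 2 derives \cref{rel:k12}, \cref{rel:k22}, \cref{rel:k32}, and finally \cref{itm:relk4}. For \cref{rel:k12}, the key step is the identity $\mone{a}\mone{b}\approx\xx{a,c}\xx{b,d}\mone{c}\mone{d}\xx{a,c}\xx{b,d}$, obtained by conjugating $\mone{c}\mone{d}$ with $\xx{a,c}\xx{b,d}$ using \cref{rel:rename3}. Substituting this into $\mone{a}\mone{b}\hh{a,b,c,d}$ and twice applying the rearranged form $\xx{a,c}\xx{b,d}\hh{a,b,c,d}\approx\hh{a,b,c,d}\mone{c}\mone{d}$ of \cref{rel:k11} (together with routine commutations) yields $\hh{a,b,c,d}\xx{a,c}\xx{b,d}\mone{a}\mone{b}\mone{c}\mone{d}$. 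The derivations of \cref{rel:k22} (from \cref{rel:k21}) and \cref{rel:k32} (from \cref{rel:k31}) are completely parallel. Lastly, \cref{itm:relk4} follows by applying \cref{rel:k11} to $\mone{c}\mone{d}\hh{a,b,c,d}$ inside $\mone{a}\mone{b}\mone{c}\mone{d}\hh{a,b,c,d}$ and then \cref{rel:k12} to the emerging $\mone{a}\mone{b}\hh{a,b,c,d}$; the product $\xx{a,c}\xx{b,d}\xx{a,c}\xx{b,d}$ that then appears collapses via \cref{rel:disjoint1} and \cref{rel:orderx}, leaving $\hh{a,b,c,d}\mone{a}\mone{b}\mone{c}\mone{d}$.

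The main obstacle is bookkeeping rather than anything conceptual: each derivation is short on its own but involves several implicit commutations, renamings, and cancellations that must be applied consistently and without circularly depending on later relations. The most delicate single step is the stage-1 identity $\xx{p,q}\mone{p}\mone{q}\xx{p,q}\approx\mone{p}\mone{q}$, which requires \cref{rel:rename3} in both orientations; once it is in place, the remaining derivations reduce to routine calculation.
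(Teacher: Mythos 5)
Your proof is correct, but it takes a noticeably different route from the paper's. The paper first derives \cref{rel:k21} by inserting $\xx{c,d}\xx{c,d}$ on the right and reassociating, then gets \cref{rel:k11} and \cref{rel:k31} from it by conjugating with transpositions. It then establishes \cref{itm:relk4} directly from \cref{rel:k11} via a short conjugation argument, and \emph{only then} obtains \cref{rel:k12}, \cref{rel:k22}, \cref{rel:k32} by multiplying the sources of \cref{rel:k11}, \cref{rel:k21}, \cref{rel:k31} on the left by the complementary pair of $(-1)$'s and pushing the full product $\mone{a}\mone{b}\mone{c}\mone{d}$ through $K$ using \cref{itm:relk4}. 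You instead derive the three ``clean'' relations uniformly by converting each of \cref{rel:ksym1,rel:ksym2,rel:ksym3,rel:ksym4,rel:ksym5,rel:ksym6} into conjugation form $\hh{a,b,c,d}\xx{i,j}\hh{a,b,c,d}\approx\cdots$, multiplying a complementary pair of these, and collapsing with the auxiliary identity $\xx{p,q}\mone{p}\mone{q}\xx{p,q}\approx\mone{p}\mone{q}$; you then obtain \cref{rel:k12,rel:k22,rel:k32} by sandwiching $\mone{c}\mone{d}$ (resp.\ $\mone{b}\mone{d}$, $\mone{b}\mone{c}$) between $X$'s and applying the corresponding clean relation twice, without ever invoking \cref{itm:relk4}; and you derive \cref{itm:relk4} last as a consequence of \cref{rel:k11} and \cref{rel:k12}. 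Both dependency orders are acyclic. Your conjugation-form phrasing is more systematic and arguably easier to audit; the paper's order exploits \cref{itm:relk4} to shorten the three trailing-$(-1)$ derivations once it is available. One minor caution: the permutation identity $\xx{b,d}\xx{c,d}\xx{b,d}\approx\xx{b,c}$ you use to simplify \cref{rel:ksym5} invokes \cref{rel:rename2} with its two index slots read in the opposite order to how they are displayed, which the paper treats as implicit (the $\xx{}$ notation is symmetric in its arguments), but is worth flagging explicitly since the table writes $\xx{a,b}$ only for $a<b$.
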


\begin{proof}
  For \cref{rel:k21}, using \cref{rel:ksym1,rel:ksym3}, we have
  \begin{align*}
    \mone{b}\mone{d}\hh{a,b,c,d}\ &\approx\  \mone{b}\mone{d}\hh{a,b,c,d}\xx{c,d}\xx{c,d}\\
    &\approx\ \mone{b}\mone{d}\xx{b,d}\hh{a,b,c,d}\xx{c,d}\\
    &\approx\ \hh{a,b,c,d}\xx{a,b}\xx{c,d}.
  \end{align*}
  For \cref{rel:k11}, using \cref{rel:k21,rel:ksym2}, we have
  \begin{align*}
    \mone{c}\mone{d}\hh{a,b,c,d}\ &\approx\  \xx{b,c}\mone{b}\xx{b,c}\mone{d}\hh{a,b,c,d}\\
    &\approx\ \xx{b,c}\mone{b}\mone{d}\hh{a,b,c,d}\xx{b,c}\\
    &\approx\ \xx{b,c}\hh{a,b,c,d}\xx{a,b}\xx{c,d}\xx{b,c}\\
    &\approx\ \hh{a,b,c,d}\xx{b,c}\xx{a,b}\xx{c,d}\xx{b,c}\\
    &\approx\ \hh{a,b,c,d}\xx{b,c}\xx{a,b}\xx{b,c}\xx{b,c}\xx{c,d}\xx{b,c}\\
    &\approx\ \hh{a,b,c,d}\xx{a,c}\xx{b,d}.
  \end{align*}
  For \cref{rel:k31}, using \cref{rel:k11,rel:ksym3}, we have
  \begin{align*}
    \mone{b}\mone{c}\hh{a,b,c,d}\ &\approx\  \mone{c}\xx{b,d}\mone{d}\xx{b,d}\hh{a,b,c,d}\\
    &\approx\ \mone{c}\xx{b,d}\mone{d}\hh{a,b,c,d}\xx{c,d}\\
    &\approx\ \xx{b,d}\mone{c}\mone{d}\hh{a,b,c,d}\xx{c,d}\\
    &\approx\ \xx{b,d}\hh{a,b,c,d}\xx{a,c}\xx{b,d}\xx{c,d}\\
    &\approx\ \hh{a,b,c,d}\xx{c,d}\xx{a,c}\xx{b,d}\xx{c,d}\\
    &\approx\ \hh{a,b,c,d}\xx{c,d}\xx{a,c}\xx{c,d}\xx{c,d}\xx{b,d}\xx{c,d}\\
    &\approx\ \hh{a,b,c,d}\xx{a,d}\xx{b,c}.
  \end{align*}
 For \cref{itm:relk4}, using \cref{rel:k11}, we have
  \begin{align*}
    \mone{a}\mone{b}\mone{c}\mone{d}\hh{a,b,c,d}\ &\approx\ \mone{a}\mone{b}\hh{a,b,c,d}\xx{a,c}\xx{b,d} \\
    &\approx\ \xx{a,c}\xx{b,d}\mone{c}\mone{d}\xx{a,c}\xx{b,d}\hh{a,b,c,d}\xx{a,c}\xx{b,d} \\
    &\approx\ \xx{a,c}\xx{b,d}\mone{c}\mone{d}\hh{a,b,c,d}\mone{c}\mone{d}\xx{a,c}\xx{b,d} \\
    &\approx\ \xx{a,c}\xx{b,d}\hh{a,b,c,d}\xx{a,c}\xx{b,d}\mone{c}\mone{d}\xx{a,c}\xx{b,d} \\
    &\approx\ \xx{a,c}\xx{b,d}\hh{a,b,c,d}\mone{a}\mone{b} \\
    &\approx\ \hh{a,b,c,d}\mone{a}\mone{b}\mone{c}\mone{d}.
  \end{align*}
  For \cref{rel:k22}, using \cref{rel:k21,itm:relk4}  and multiplying the right-hand side by
  \[
  \mone{a}\mone{b}\mone{c}\mone{d}\mone{a}\mone{b}\mone{c}\mone{d}
  \]
  we get
  \begin{align*}
    \mone{a}\mone{c}\hh{a,b,c,d}\ &\approx\ \mone{b}\mone{d}\hh{a,b,c,d}\mone{a}\mone{b}\mone{c}\mone{d}\\
     &\approx\ \hh{a,b,c,d}\xx{a,b}\xx{c,d}\mone{a}\mone{b}\mone{c}\mone{d}.
  \end{align*}
  For \cref{rel:k12}, using \cref{rel:k11,itm:relk4}  and multiplying the right-hand side by
  \[
  \mone{a}\mone{b}\mone{c}\mone{d}\mone{a}\mone{b}\mone{c}\mone{d}
  \]
  we get
  \begin{align*}
    \mone{a}\mone{b}\hh{a,b,c,d}\ &\approx\ \mone{c}\mone{d}\hh{a,b,c,d}\mone{a}\mone{b}\mone{c}\mone{d}\\
     &\approx\ \hh{a,b,c,d}\xx{a,c}\xx{b,d}\mone{a}\mone{b}\mone{c}\mone{d}.
  \end{align*}  
  For \cref{rel:k32}, using \cref{rel:k31,itm:relk4} and multiplying the right-hand side by
  \[
  \mone{a}\mone{b}\mone{c}\mone{d}\mone{a}\mone{b}\mone{c}\mone{d}
  \]
  we get
  \begin{align*}
    \mone{a}\mone{d}\hh{a,b,c,d}\ &\approx\ \mone{b}\mone{c}\hh{a,b,c,d}\mone{a}\mone{b}\mone{c}\mone{d}\\
     &\approx\ \hh{a,b,c,d}\xx{a,d}\xx{b,c}\mone{a}\mone{b}\mone{c}\mone{d}.\qedhere
  \end{align*}
\end{proof}

\begin{corollary}
  \label{prop:kevenk}
  Let $\word{W}$ be a word over $\gens$ of the form
  \[
  \hh{a,b,c,d}\mone{a}^{\tau_a}\mone{b}^{\tau_b}
  \mone{c}^{\tau_c} \mone{d}^{\tau_d}\hh{a,b,c,d}
  \]
  where $\tau_a,\tau_b,\tau_c,\tau_d\in\Z_2$ and evenly many of
  $\tau_a,\tau_b,\tau_c,\tau_d\in\Z_2$ are even. Then there exists a
  word $\word{V}$ over $\s{\mone{x},\xx{y,z}\mid x,y,z\in\s{a,b,c,d}}$ such
  that $\word{V}\approx \word{W}$.
\end{corollary}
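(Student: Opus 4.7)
The plan is to do a case analysis on the subset $S \subseteq \{a,b,c,d\}$ of indices for which $\tau_i = 1$. By hypothesis, the number of $\tau_i$ that are even (i.e., equal to $0$) is even, so $|S| \in \{0, 2, 4\}$. In each case, I will use one of the relations already derived in \cref{prop:derivedrels1} to push the middle block of $(-1)$ generators past one of the two occurrences of $\hh{a,b,c,d}$, at which point the two $K$'s become adjacent and cancel by \cref{rel:orderk}.

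If $|S| = 0$, then $\word{W} \approx \hh{a,b,c,d}\hh{a,b,c,d} \approx \epsilon$ by \cref{rel:orderk}, and I may take $\word{V} = \epsilon$. If $|S| = 4$, I use \cref{itm:relk4} to get
\[
\word{W} \approx \hh{a,b,c,d}\hh{a,b,c,d}\mone{a}\mone{b}\mone{c}\mone{d} \approx \mone{a}\mone{b}\mone{c}\mone{d},
\]
which is a word over $\{\mone{x}\}$ as required.

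If $|S| = 2$, there are six subcases, one for each two-element subset of $\{a,b,c,d\}$, and each of the six relations \cref{rel:k12}, \cref{rel:k22}, \cref{rel:k32}, \cref{rel:k31}, \cref{rel:k21}, \cref{rel:k11} handles exactly one of them. For example, if $S = \{a,b\}$, then by \cref{rel:k12}
\[
\word{W} \approx \hh{a,b,c,d}\hh{a,b,c,d}\xx{a,c}\xx{b,d}\mone{a}\mone{b}\mone{c}\mone{d} \approx \xx{a,c}\xx{b,d}\mone{a}\mone{b}\mone{c}\mone{d},
\]
which is a word over $\{\mone{x}, \xx{y,z}\}$; the five other subcases are symmetric.

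Since all the real computational work was already carried out in \cref{prop:derivedrels1}, there is no serious obstacle here; the corollary is just a matter of tabulating the seven cases and observing that in each one the surviving right-hand side is a word over the desired generators.
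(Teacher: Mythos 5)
Your proof is correct and takes essentially the same approach as the paper's one-line proof, which simply cites \cref{prop:derivedrels1}: you spell out the case analysis on the number of odd exponents ($|S|\in\{0,2,4\}$) and apply the appropriate derived relation in each case to commute the $(-1)$ block past one $K$ and then cancel the two adjacent $K$'s via \cref{rel:orderk}. (The paper's citation also lists \cref{rel:ksym1,rel:ksym3}, but as your argument shows, those are not actually needed once \cref{prop:derivedrels1} is in hand.)
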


\begin{proof}
By \cref{rel:ksym1,rel:ksym3,prop:derivedrels1}.
\end{proof}

\begin{corollary}
  \label{prop:koddkoddk}
  Let $\word{W}$ be a word over $\gens$ of the form
  \[
  \hh{a,b,c,d} \mone{a}^{\tau_a}\mone{b}^{\tau_b} \mone{c}^{\tau_c} \mone{d}^{\tau_d} \hh{a,b,c,d}
  \mone{a}^{\tau_a'}\mone{b}^{\tau_b'} \mone{c}^{\tau_c'} \mone{d}^{\tau_d'} \hh{a,b,c,d}  
  \]
  where
  $\tau_a, \tau_b, \tau_c,\tau_d, \tau_a',\tau_b', \tau_c', \tau_d'\in\Z_2$,
  oddly many of $\tau_a,\tau_b,\tau_c,\tau_d$ are even, and
  oddly many of $\tau_a',\tau_b',\tau_c',\tau_d'$ are
  even. Then there exists a word $\word{V}$ over $\s{\mone{x},\xx{y,z}\mid
  x,y,z\in\s{a,b,c,d}}$ such that $\word{V}\approx \word{W}$.
\end{corollary}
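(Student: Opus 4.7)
My plan is to reduce $\word{W}$ to a word in $\s{\mone{x},\xx{y,z}\mid x,y,z\in\s{a,b,c,d}}^{*}$ by iteratively peeling a single $\mone{p}$-factor off each odd-parity $(-1)$-block and pushing the even-parity remainders through the $\hh{a,b,c,d}$'s, reserving \cref{rel:swap1} for a single final collapse.

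First, since oddly many of $\tau_a,\tau_b,\tau_c,\tau_d$ are equal to $1$, I can factor $\mone{a}^{\tau_a}\mone{b}^{\tau_b}\mone{c}^{\tau_c}\mone{d}^{\tau_d}\approx\mone{p}\cdot E$ for some $p\in\s{a,b,c,d}$ with $E$ an even-parity product of $(-1)$-generators on $\s{a,b,c,d}$, and similarly $\mone{q}\cdot E'$ on the primed side. Then \cref{prop:derivedrels1} and \cref{itm:relk4} give $E\hh{a,b,c,d}\approx\hh{a,b,c,d}V_E$ and $E'\hh{a,b,c,d}\approx\hh{a,b,c,d}V_{E'}$ with $V_E,V_{E'}\in\s{\mone{x},\xx{y,z}}^{*}$ again of even $(-1)$-parity. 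After commuting $V_E$ past $\mone{q}$ (via \cref{rel:rename3,rel:disjoint2,rel:disjoint4}, which merely renames $q$ to some $q'\in\s{a,b,c,d}$) and pushing $V_E$ past the middle $\hh{a,b,c,d}$ via \cref{rel:ksym1,rel:ksym2,rel:ksym3,rel:ksym4,rel:ksym5,rel:ksym6} together with \cref{prop:derivedrels1}, I reduce the problem to the three-$K$ sandwich $\hh{a,b,c,d}\mone{p}\hh{a,b,c,d}\mone{q'}\hh{a,b,c,d}$, up to a tail already in $\s{\mone{x},\xx{y,z}}^{*}$.

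The heart of the proof is to collapse this sandwich. I iterate the peeling trick: write $\mone{p}\approx\mone{a}\cdot\mone{a}\mone{p}$, push the even-parity block $\mone{a}\mone{p}$ through the first $\hh{a,b,c,d}$ by \cref{prop:derivedrels1}, and carry the resulting $\s{\mone{x},\xx{y,z}}$-word through $\mone{q'}$ and the second $\hh{a,b,c,d}$ exactly as in the previous paragraph; this rewrites the sandwich as $\hh{a,b,c,d}\mone{a}\hh{a,b,c,d}\mone{q''}\hh{a,b,c,d}$ times a new tail in $\s{\mone{x},\xx{y,z}}^{*}$. Writing $\mone{q''}\approx\mone{a}\cdot\mone{a}\mone{q''}$ and applying \cref{prop:derivedrels1} to $\mone{a}\mone{q''}\hh{a,b,c,d}$ on the right rewrites the sandwich further as $\hh{a,b,c,d}\mone{a}\hh{a,b,c,d}\mone{a}\hh{a,b,c,d}$ times a tail, and \cref{rel:swap1} finally collapses this head to $\mone{a}\xx{b,c}$, completing the reduction.

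The main obstacle is the index bookkeeping across the successive push-and-commute cycles: every commutation renames the surviving $(-1)$-index by the permutation implemented by the $X$'s in the intervening $V$-word, and every push of a $V$-word through an $\hh{a,b,c,d}$ via \cref{rel:ksym1,rel:ksym2,rel:ksym3,rel:ksym4,rel:ksym5,rel:ksym6} and \cref{prop:derivedrels1} introduces new $\mone{x}$- and $\xx{y,z}$-factors into the tail. One must check that at every intermediate stage the tail stays in $\s{\mone{x},\xx{y,z}}^{*}$ and the core remains a three-$K$ sandwich with single $(-1)$'s on each side, so that the procedure terminates after exactly two peelings and one invocation of \cref{rel:swap1}.
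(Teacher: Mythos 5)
Your proof is correct and follows essentially the same strategy as the paper's: reduce each $(-1)$-block to a single factor by stripping off an even-parity remainder and pushing it through the $K$'s, normalize the surviving index to $a$, and collapse the resulting $\hh{a,b,c,d}\mone{a}\hh{a,b,c,d}\mone{a}\hh{a,b,c,d}$ by \cref{rel:swap1}. Where the two differ is in the mechanism for the index normalization. The paper first conjugates the left $(-1)$ by $\xx{a,x}$ to bring it to $\mone{a}$ and then runs a four-way case analysis on the surviving right index $x$, with an explicit derived-relation computation reducing $x=b,c,d$ to $x=a$. You instead run a uniform iterated peeling: write $\mone{p}\approx\mone{a}\cdot\mone{a}\mone{p}$, push the even pair $\mone{a}\mone{p}$ through with \cref{prop:derivedrels1}, and repeat once more on the right side before invoking \cref{rel:swap1}. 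Your route is slightly more systematic and avoids the explicit four-way computation, at the cost of having to argue that the even-$(-1)$-parity and single-core-$(-1)$ invariants are maintained through each push-and-commute cycle. You acknowledge that obligation but do not verify it in detail; checking it (as the paper effectively does case by case) confirms the sketch is sound, since each $X$ commuted through $\hh{a,b,c,d}$ via \cref{derivedrelsxk} contributes $(-1)$'s in pairs, each $(-1)$-pair through \cref{prop:derivedrels1} likewise, and renaming past $\mone{q}$ never changes the count.
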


\begin{proof}
First note that by multiplying by
\[
  \mone{a}\mone{b}\mone{c}\mone{d}\mone{a}\mone{b}\mone{c}\mone{d}
\]
if required, we can ensure that exactly one of $\tau_a$, $\tau_b$,
$\tau_c$, or $\tau_d$ is odd, and similarly for for $\tau_a'$,
$\tau_b'$, $\tau_c'$, or $\tau_d'$. Moreover, conjugating the left
occurrence of $(-1)$ by $\xx{a,x}$ and commuting both $X$s, we can
ensure that the left occurrence of $(-1)$ is of the form
$\mone{a}$. Therefore, we can assume without loss of generality that
$\word{W}$ is of the form
\[
\hh{a,b,c,d} \mone{a}\hh{a,b,c,d}\mone{x}\hh{a,b,c,d}  
\]
for some $x\in \s{a,b,c,d}$. If $x=a$ we can conclude
by \cref{rel:swap1}. If $x=b$ we have
\begin{align*}
\begin{split}
\hh{a,b,c,d}& \mone{a}\hh{a,b,c,d}\mone{b}\hh{a,b,c,d} \\
& \approx \hh{a,b,c,d} \mone{a}\hh{a,b,c,d}\mone{b}\xx{a,b}\xx{a,b}\hh{a,b,c,d} \\
& \approx \hh{a,b,c,d} \mone{a}\hh{a,b,c,d}\xx{a,b}\mone{a}\xx{a,b}\hh{a,b,c,d} \\
& \approx \hh{a,b,c,d} \mone{a}\mone{d}\mone{b}\xx{b,d}\hh{a,b,c,d}\mone{a}\hh{a,b,c,d}\xx{b,d}\mone{b}\mone{d} \\
& \approx \xx{a,b}\hh{a,b,c,d} \mone{a}\hh{a,b,c,d}\mone{a}\hh{a,b,c,d}\xx{b,d}\mone{b}\mone{d}
\end{split}
\end{align*}
so that this case reduces to the case of $x=a$. Similarly, if $x=c$
\begin{align*}
\hh{a,b,c,d} \mone{a}\hh{a,b,c,d}\mone{c}\hh{a,b,c,d} & \approx \hh{a,b,c,d} \mone{a}\hh{a,b,c,d}\mone{b}\xx{b,c}\xx{b,c}\hh{a,b,c,d} \\
& \approx \xx{b,c}\hh{a,b,c,d} \mone{a}\hh{a,b,c,d}\mone{b}\hh{a,b,c,d}\xx{b,c}
\end{align*}
and if $x=d$
\begin{align*}
\hh{a,b,c,d} \mone{a}\hh{a,b,c,d}\mone{d}\hh{a,b,c,d} & \approx \hh{a,b,c,d} \mone{a}\hh{a,b,c,d}\mone{b}\xx{b,d}\xx{b,d}\hh{a,b,c,d} \\
& \approx \xx{b,d}\hh{a,b,c,d} \mone{a}\hh{a,b,c,d}\mone{b}\hh{a,b,c,d}\xx{c,d}.\qedhere
\end{align*}
\end{proof}

\begin{proposition}
  \label{prop:rewriterules}
  Let $\word{G}$ be one of the words below.
  \begin{enumerate}
    \item $\hh{1,3,c,d}\mone{1}^{\tau_1'}\mone{3}^{\tau_3'}\mone{c}^{\tau_c}\mone{d}^{\tau_d}
        \hh{1,2,3,4} \mone{d}^{\tau_d} \mone{c}^{\tau_c}\mone{2}^{\tau_2}\mone{1}^{\tau_1}\hh{1,2,c,d}$
    \item $\hh{2,4,c,d}\mone{2}^{\tau_2'}\mone{4}^{\tau_4'}\mone{c}^{\tau_c}\mone{d}^{\tau_d}
                \hh{1,2,3,4} \mone{d}^{\tau_d} \mone{c}^{\tau_c}\mone{2}^{\tau_2}\mone{1}^{\tau_1}\hh{1,2,c,d}$
    \item $\hh{1,2,c,d}\mone{1}^{\tau_1'}\mone{2}^{\tau_2'}\mone{c}^{\tau_c}\mone{d}^{\tau_d}
                \hh{1,2,3,4} \mone{d}^{\tau_d} \mone{c}^{\tau_c}\mone{3}^{\tau_3}\mone{1}^{\tau_1}\hh{1,3,c,d}$
    \item $\hh{3,4,c,d}\mone{3}^{\tau_3'}\mone{4}^{\tau_4'}\mone{c}^{\tau_c}\mone{d}^{\tau_d}
                \hh{1,2,3,4} \mone{d}^{\tau_d} \mone{c}^{\tau_c}\mone{3}^{\tau_3}\mone{1}^{\tau_1}\hh{1,3,c,d}$
    \item $\hh{1,4,c,d}\mone{1}^{\tau_1'}\mone{4}^{\tau_4'}\mone{c}^{\tau_c}\mone{d}^{\tau_d}
                \hh{1,2,3,4} \mone{d}^{\tau_d} \mone{c}^{\tau_c}\mone{4}^{\tau_4}\mone{1}^{\tau_1}\hh{1,4,c,d}$
    \item $\hh{2,3,c,d}\mone{2}^{\tau_2'}\mone{3}^{\tau_3'}\mone{c}^{\tau_c}\mone{d}^{\tau_d}
                \hh{1,2,3,4} \mone{d}^{\tau_d} \mone{c}^{\tau_c}\mone{4}^{\tau_4}\mone{1}^{\tau_1}\hh{1,4,c,d}$
    \item $\hh{1,4,c,d}\mone{1}^{\tau_1'}\mone{4}^{\tau_4'}\mone{c}^{\tau_c}\mone{d}^{\tau_d}
                \hh{1,2,3,4} \mone{d}^{\tau_d} \mone{c}^{\tau_c}\mone{3}^{\tau_3}\mone{2}^{\tau_2}\hh{2,3,c,d}$
    \item $\hh{2,3,c,d}\mone{2}^{\tau_2'}\mone{3}^{\tau_3'}\mone{c}^{\tau_c}\mone{d}^{\tau_d}
                \hh{1,2,3,4} \mone{d}^{\tau_d} \mone{c}^{\tau_c}\mone{3}^{\tau_3}\mone{2}^{\tau_2}\hh{2,3,c,d}$
    \item $\hh{1,2,c,d}\mone{1}^{\tau_1'}\mone{2}^{\tau_2'}\mone{c}^{\tau_c}\mone{d}^{\tau_d}
                \hh{1,2,3,4} \mone{d}^{\tau_d} \mone{c}^{\tau_c}\mone{4}^{\tau_4}\mone{2}^{\tau_2}\hh{2,4,c,d}$
    \item $\hh{3,4,c,d}\mone{3}^{\tau_3'}\mone{4}^{\tau_4'}\mone{c}^{\tau_c}\mone{d}^{\tau_d}
                \hh{1,2,3,4} \mone{d}^{\tau_d} \mone{c}^{\tau_c}\mone{4}^{\tau_4}\mone{2}^{\tau_2}\hh{2,4,c,d}$
    \item $\hh{1,3,c,d}\mone{1}^{\tau_1'}\mone{3}^{\tau_3'}\mone{c}^{\tau_c}\mone{d}^{\tau_d}
                \hh{1,2,3,4} \mone{d}^{\tau_d} \mone{c}^{\tau_c}\mone{4}^{\tau_4}\mone{3}^{\tau_3}\hh{3,4,c,d}$
    \item $\hh{2,4,c,d}\mone{2}^{\tau_2'}\mone{4}^{\tau_4'}\mone{c}^{\tau_c}\mone{d}^{\tau_d}
                \hh{1,2,3,4} \mone{d}^{\tau_d} \mone{c}^{\tau_c}\mone{4}^{\tau_4}\mone{3}^{\tau_3}\hh{3,4,c,d}$
  \end{enumerate}
  Then there exist words $\word{V}$ and $\word{W}$ over
  $\s{\mone{x},\xx{x,y}}$, with $x,y\in\s{1,2,3,4,c,d}$, such that
  \[
  \word{G} \approx \word{V}\hh{1,2,3,4,}\word{W}.
  \]
\end{proposition}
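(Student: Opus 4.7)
The strategy is a case-by-case analysis of the twelve listed words, using the same three-stage template in each case. The first move, which I would apply universally, is to exploit the fact that $\mone{c}^{\tau_c}$ and $\mone{d}^{\tau_d}$ appear with identical exponents on either side of $\hh{1,2,3,4}$: since $c,d\notin\{1,2,3,4\}$, \cref{rel:disjoint5} lets both factors pass through the central $\hh{1,2,3,4}$ to meet their counterparts, after which the pairs cancel by \cref{rel:ordermone}. After this clean-up, each of the twelve words reduces to the uniform shape
\[
\word{G}\ \approx\ \hh{A}\,S_1\,\hh{1,2,3,4}\,S_2\,\hh{B},
\]
where $\hh{A}=\hh{i,j,c,d}$ and $\hh{B}=\hh{k,\ell,c,d}$ for $\{i,j\},\{k,\ell\}\subseteq\{1,2,3,4\}$ determined by the case, and where $S_1,S_2$ are short $(-1)$-words supported on $\{i,j\}$ and $\{k,\ell\}$ respectively.

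Next I would move $S_1$ and $S_2$ out of the triple $K$-product. When $S_1$ carries an even number of $(-1)$'s, the derived relations of \cref{prop:derivedrels1} convert $S_1\,\hh{1,2,3,4}$ into $\hh{1,2,3,4}\,T_1$ with $T_1$ a word over $\xx{x,y}$ and $\mone{z}$; the symmetric step handles $S_2$. For odd-parity configurations of $S_1$ or $S_2$, one $(-1)$ must first be absorbed into an adjacent $\hh{A}$ (or $\hh{B}$) via \cref{rel:ksym1,rel:ksym3}, which introduces extra $X$'s but leaves an even-parity remainder that can then be pushed through. After this stage, the word has the form $\word{V}_1\,\hh{A}\,\hh{1,2,3,4}\,\hh{B}\,\word{W}_1$ with $\word{V}_1,\word{W}_1$ built from $\xx{x,y}$ and $\mone{z}$ only.

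The heart of the argument is then to rewrite the bare triple $\hh{A}\,\hh{1,2,3,4}\,\hh{B}$ as $\word{V}_2\,\hh{1,2,3,4}\,\word{W}_2$. For the ten cases in which $A\ne B$, the $K$-generators $\hh{A}$ and $\hh{1,2,3,4}$ share exactly two indices, so after a preliminary renaming via \cref{rel:rename4,rel:rename5,rel:rename6,rel:rename7} to align those indices with the positions required by the ordering $a<b<c<d<e<f$ of \cref{rel:kcom1}, we can rewrite $\hh{A}\,\hh{1,2,3,4}$ as $\hh{A'}\,\hh{A''}$ where $\hh{A''}$ shares more indices with $\hh{B}$; a second application of \cref{rel:kcom1} (or of \cref{rel:disjoint6} if the supports become disjoint) then merges $\hh{A''}\,\hh{B}$, leaving a single central $\hh{1,2,3,4}$. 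For cases 5 and 8, where $A=B$, relation \cref{rel:x} plays the starring role, since it is the only axiom that directly controls how a conjugation of the form $\hh{A}(-)\hh{A}$ acts on a configuration containing $\hh{1,2,3,4}$ with indices external to $A$.

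The main obstacle I anticipate is the bookkeeping across the twelve subcases: the strict ordering constraint in \cref{rel:kcom1} rarely matches the ambient index pattern out of the box, so each case demands its own tailored sequence of renamings (and sometimes an interleaving with \cref{derivedrelsxk}) before the relation becomes applicable. In parallel, the parities of the $\tau$'s must be tracked carefully through each rewrite, with odd-parity sign collections handled by borrowing a $(-1)$ via \cref{rel:swap1} or by invoking \cref{prop:koddkoddk}. I therefore expect the complete verification to proceed as a fairly long, but essentially routine, appendix-style calculation of twelve parallel derivations.
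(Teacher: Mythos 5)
Your general template---cancel the $\mone{c}^{\tau_c}\mone{d}^{\tau_d}$ pairs through the central $\hh{1,2,3,4}$, move the remaining sign factors out using \cref{prop:derivedrels1}, and collapse the bare triple $\hh{A}\hh{1,2,3,4}\hh{B}$ with \cref{rel:kcom1} and renamings---matches the paper's proof for the ten cases with $A\neq B$.

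However, your claim that cases 5 and 8 (where $A=B$) require \cref{rel:x} is incorrect, and this is a genuine gap. Relation \cref{rel:x} is a pure invariance statement: its two sides contain the same four $K$-generators in the same order, differing only by a conjugation by $\mone{a}\mone{e}\xx{a,e}$, so it cannot decrease the number of $K$'s in a word, which is exactly what is needed to bring $\hh{A}\hh{1,2,3,4}\hh{A}$ into the form $\word{V}\hh{1,2,3,4}\word{W}$. It also cannot even pattern-match here: its eight indices $a,\ldots,h$ are required to be pairwise distinct, whereas $\hh{A}$ overlaps $\hh{1,2,3,4}$ in two indices. In fact cases 5 and 8 fall to the same tools as the others. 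For example, writing $\hh{1,4,c,d}=\xx{2,4}\hh{1,2,c,d}\xx{2,4}$ and commuting $\xx{2,4}$ through $\hh{1,2,3,4}$ via \cref{rel:ksym6} gives
\[
\hh{1,4,c,d}\hh{1,2,3,4}\hh{1,4,c,d}\ \approx\ \xx{2,4}\,\hh{1,2,c,d}\hh{1,2,3,4}\hh{1,3,c,d}\,\xx{3,4},
\]
which is the triple-$K$ pattern of case~3; case~8 reduces similarly. Relation \cref{rel:x} plays no role in this proposition; in the paper it is invoked only in \cref{prop:derivedrels1alt} and in one subcase of \cref{lem:mainbasic}. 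A smaller imprecision: absorbing an odd-parity $(-1)$ into $\hh{A}$ via \cref{rel:ksym1,rel:ksym3} is not the right mechanism, since those relations commute $X$'s, not $(-1)$'s, past a $K$; the paper instead pairs a stray $(-1)$ with a $\mone{c}$ that passes freely through $\hh{1,2,3,4}$ because $c\notin\s{1,2,3,4}$, and then pushes the resulting even pair through $\hh{A}$ using \cref{prop:derivedrels1}.
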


\begin{proof}
Let $\word{G}$ be one of the words above. Then $\word{G}$ has the form
\[
  \hh{\alpha,\beta,c,d}\mone{\alpha}^{\tau_\alpha}\mone{\beta}^{\tau_\beta}\mone{c}^{\tau_c}\mone{d}^{\tau_d}
  \hh{1,2,3,4} \mone{d}^{\tau_d}
  \mone{c}^{\tau_c}\mone{\gamma}^{\tau_\gamma}\mone{\delta}^{\tau_\delta}\hh{\delta,\gamma,c,d}
\]
for appropriate indices $\alpha$, $\beta$, $\gamma$, and $\delta$. We
want to show that there exist $\word{V}$ and $\word{W}$ over
$\s{\mone{x},\xx{x,y}}$,
$\word{G}\approx \word{V}\hh{1,2,3,4}\word{W}$. By \cref{prop:derivedrels1},
evenly many occurrences of $(-1)$ can be commuted passed $K$. Since
$\hh{1,2,3,4} \approx \mone{c}\hh{1,2,3,4}\mone{c}$, we can thus
assume without loss of generality that $\word{G}$ is in fact of the
form
\[
  \hh{\alpha,\beta,c,d}\mone{\alpha}\hh{1,2,3,4} \hh{\delta,\gamma,c,d} \quad \mbox{or} \quad \hh{\alpha,\beta,c,d}\mone{\beta}\hh{1,2,3,4} \hh{\delta,\gamma,c,d}.
\]  
Using this simplification, we illustrate the rewriting strategy for the first two words.
\begin{enumerate}

\item In this case, without loss of generality, $\word{G}$ is either 
\[
\hh{1,3,c,d} \mone{1}\hh{1,2,3,4}\hh{1,2,c,d} \quad \mbox{or} \quad 
\hh{1,3,c,d} \mone{3}\hh{1,2,3,4}\hh{1,2,c,d}.
\]
By \cref{prop:derivedrels1,derivedrelsxk},
in the first case we get
\begin{align*}
\word{G} &\approx   \hh{1,3,c,d}\mone{1}\hh{1,2,3,4}\hh{1,2,c,d} \\
&\approx \mone{2}\hh{1,3,c,d}\mone{1}\mone{2}\hh{1,2,3,4}\hh{1,2,c,d} \\
&\approx \mone{2}\hh{1,3,c,d}\hh{1,2,3,4}\xx{1,3}\xx{2,4}\mone{1}\mone{2}\mone{3}\mone{4}\hh{1,2,c,d} \\
&\approx \word{V}\hh{1,3,c,d}\hh{1,2,3,4}\hh{3,4,c,d}\word{W}.
\end{align*}
And in the second case we get
\begin{align*}
\word{G} &\approx   \hh{1,3,c,d}\mone{3}\hh{1,2,3,4}\hh{1,2,c,d} \\
&\approx \mone{4}\hh{1,3,c,d}\mone{3}\mone{4}\hh{1,2,3,4}\hh{1,2,c,d} \\
&\approx \mone{4}\hh{1,3,c,d}\hh{1,2,3,4}\xx{1,3}\xx{2,4}\hh{1,2,c,d} \\
&\approx \word{V}\hh{1,3,c,d}\hh{1,2,3,4}\hh{3,4,c,d}\word{W}.
\end{align*}
Hence, to complete the proof it suffices to show that
$\hh{1,3,c,d}\hh{1,2,3,4}\hh{3,4,c,d}$ can be written in the desired
form. This is a consequence of \cref{rel:kcom1} since
\begin{align*}
\hh{1,3,c,d}\hh{1,2,3,4}\hh{3,4,c,d} &\approx \xx{1,2}\xx{1,2}\xx{3,4}\xx{3,4}\hh{1,3,c,d}\hh{1,2,3,4}\hh{3,4,c,d} \\
&\approx \xx{1,2}\xx{3,4}\hh{2,4,c,d}\xx{1,2}\xx{3,4}\hh{1,2,3,4}\hh{3,4,c,d} \\
&\approx \xx{1,2}\xx{3,4}\hh{2,4,c,d}\hh{1,2,3,4}\mone{3}\mone{4}\hh{3,4,c,d} \\
&\approx \word{V}\hh{2,4,c,d}\hh{1,2,3,4}\hh{3,4,c,d}\word{W}.
\end{align*}

\item In this case, without loss of generality, $\word{G}$ is either 
\[
\hh{2,4,c,d} \mone{2}\hh{1,2,3,4}\hh{1,2,c,d} \quad \mbox{or} \quad 
\hh{2,4,c,d} \mone{4}\hh{1,2,3,4}\hh{1,2,c,d}.
\]
 By \cref{prop:derivedrels1,derivedrelsxk},
in the first case we get
\begin{align*}
\word{G} &\approx \hh{2,4,c,d}\mone{2}\hh{1,2,3,4}\hh{1,2,c,d} \\
&\approx \mone{1}\hh{2,4,c,d}\mone{1}\mone{2}\hh{1,2,3,4}\hh{1,2,c,d} \\
&\approx \mone{1}\hh{2,4,c,d}\hh{1,2,3,4}\xx{1,3}\xx{2,4}\mone{1}\mone{2}\mone{3}\mone{4}\hh{1,2,c,d} \\
&\approx \word{V}\hh{2,4,c,d}\hh{1,2,3,4}\hh{3,4,c,d}\word{W}.
\end{align*}
And in the second case we get
\begin{align*}
\word{G} &\approx   \hh{2,4,c,d}\mone{4}\hh{1,2,3,4}\hh{1,2,c,d} \\
&\approx \mone{3}\hh{2,4,c,d}\mone{3}\mone{4}\hh{1,2,3,4}\hh{1,2,c,d} \\
&\approx \mone{3}\hh{2,4,c,d}\hh{1,2,3,4}\xx{1,3}\xx{2,4}\hh{1,2,c,d} \\
&\approx \word{V}\hh{2,4,c,d}\hh{1,2,3,4}\hh{3,4,c,d}\word{W}.
\end{align*}
Hence, to complete the proof it suffices to show that
$\hh{2,4,c,d}\hh{1,2,3,4}\hh{3,4,c,d}$ can be written in the desired
form, which follows directly from \cref{rel:kcom1}.
\end{enumerate}
The remaining cases are treated similarly.
\end{proof}

\begin{proposition}
\label{prop:derivedrels1alt}
  The relation below is derivable.
  \begin{subequations}
    \begin{align}
     \begin{split}
      \hh{e,f,g,h}\hh{a,b,c,d}&\xx{d,e}\hh{a,b,c,d}\hh{e,f,g,h} \\
      &\approx\ \\
      \mone{a}\mone{h}\xx{a,h}\hh{e,f,g,h}\hh{a,b,c,d}&\xx{d,e}\hh{a,b,c,d}\hh{e,f,g,h}\xx{a,h}\mone{a}\mone{h}
    \end{split}\notag
    \end{align}    
  \end{subequations}
\end{proposition}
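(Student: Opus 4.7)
The strategy is to reduce to relation \eqref{rel:x} by establishing two preliminary facts about $M := \hh{e,f,g,h}\hh{a,b,c,d}\xx{d,e}\hh{a,b,c,d}\hh{e,f,g,h} = BAXAB$, using the shorthands $B = \hh{e,f,g,h}$, $A = \hh{a,b,c,d}$, $X = \xx{d,e}$. Let $C = \mone{a}\mone{e}\xx{a,e}$ and $D = \mone{a}\mone{h}\xx{a,h}$; relation \eqref{rel:x} gives $CMC \approx M$, and the goal is to prove $DMD \approx M$.

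The first key fact I would establish is that $\xx{e,h}$ commutes syntactically with $M$: $\xx{e,h}M\xx{e,h} \approx M$. Applying \cref{rel:ksym5} with indices $e,f,g,h$ gives $\xx{e,h}B \approx B\,\xx{f,h}\xx{g,h}\xx{f,h}\mone{f}\mone{g}$, and the subword $\xx{f,h}\xx{g,h}\xx{f,h}$ reduces to $\xx{f,g}$ via \cref{rel:rename2} followed by \cref{rel:orderx}. Setting $R := \xx{f,g}\mone{f}\mone{g}$ thus yields $\xx{e,h}B \approx BR$. Multiplying \cref{rel:ksym5} on both sides by $B$ and using $B^2 \approx \epsilon$ produces the mirror identity $B\xx{e,h} \approx RB$. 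The indices $\{f,g\}$ of $R$ are disjoint from $\{a,b,c,d\}$ and $\{d,e\}$, so $R$ commutes with $A$ and $X$ by the disjointness relations \cref{rel:disjoint1}--\cref{rel:disjoint6}; a short computation using \cref{rel:rename3} twice shows $R^2 \approx \epsilon$. Pushing the outer $\xx{e,h}$'s into $M$ from both sides therefore yields
\[
\xx{e,h}M\xx{e,h} \approx (BR)AXA(RB) \approx BAXAR^2B \approx BAXAB = M.
\]

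The second key fact is $D \approx \xx{e,h}\,C\,\xx{e,h}$, which I would verify by pushing $\xx{e,h}$ through the three letters of $C$: $\mone{a}$ commutes with $\xx{e,h}$ by disjointness, $\xx{e,h}\mone{e} \approx \mone{h}\xx{e,h}$ by \cref{rel:rename3}, and $\xx{e,h}\xx{a,e} \approx \xx{a,h}\xx{e,h}$ by \cref{rel:rename2}; the trailing pair $\xx{e,h}\xx{e,h}$ then cancels. Combining the two facts with \eqref{rel:x} yields
\[
DMD \approx \xx{e,h}C\xx{e,h}M\xx{e,h}C\xx{e,h} \approx \xx{e,h}CMC\xx{e,h} \approx \xx{e,h}M\xx{e,h} \approx M,
\]
invoking the second fact on the outer conjugation, the first fact to strip the inner $\xx{e,h}$'s, \eqref{rel:x} to replace $CMC$ by $M$, and the first fact once more. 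The main obstacle is establishing the first key fact: one must recognize that the seven-letter remainder from \cref{rel:ksym5} collapses to the compact involution $R$ whose support is disjoint from the remaining factors of $M$, enabling the clean cancellation $R^2 \approx \epsilon$ in the middle of the expression.
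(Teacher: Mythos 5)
Your proof is correct, and it takes a genuinely different route from the paper's. The paper's proof is a direct fifteen-step computation which pushes $\xx{e,h}$ into the word from the left, applies \cref{rel:ksym5}, commutes the residual $\xx{f,g}\mone{f}\mone{g}$ past the disjoint factors, applies \cref{rel:x} and \cref{rel:ksym5} again, and reassembles the conjugators on the other side; moreover it carries out the whole calculation with the word $\hh{e,f,g,h}\hh{a,b,c,d}\hh{a,b,c,e}\hh{d,f,g,h}$ rather than with $M = \hh{e,f,g,h}\hh{a,b,c,d}\xx{d,e}\hh{a,b,c,d}\hh{e,f,g,h}$, leaving implicit the conversion $M \approx \hh{e,f,g,h}\hh{a,b,c,d}\hh{a,b,c,e}\hh{d,f,g,h}\xx{d,e}$ via \cref{rel:rename7,rel:rename4} and the fact that $\xx{d,e}$ is disjoint from the conjugating word. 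Your approach instead isolates two clean structural facts: that $\xx{e,h}$ commutes syntactically with $M$ (because the correction term $R = \xx{f,g}\mone{f}\mone{g}$ it produces when pushed through $\hh{e,f,g,h}$ is supported on $\{f,g\}$, disjoint from the middle $\hh{a,b,c,d}\xx{d,e}\hh{a,b,c,d}$, and is an involution), and that the conjugator $\mone{a}\mone{h}\xx{a,h}$ is itself the $\xx{e,h}$-conjugate of the conjugator $\mone{a}\mone{e}\xx{a,e}$ appearing in \cref{rel:x}. The result then drops out in one line as a conjugate of \cref{rel:x}. What your approach buys is conceptual clarity — the roles of \cref{rel:ksym5}, \cref{rel:x}, and the disjointness relations are cleanly separated, and there is no unstated word-conversion — at the cost of introducing the auxiliary element $R$ and verifying its involutivity and disjointness. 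Both proofs rely essentially on the same ingredients (\cref{rel:ksym5}, \cref{rel:x}, disjointness, renaming), but yours organizes them into a reusable conjugation argument rather than a flat calculation.
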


\begin{proof}
Using \cref{rel:ksym5,rel:x}, we get:
\begin{align*}
    &\hh{e,f,g,h}\hh{a,b,c,d}\hh{a,b,c,e}\hh{d,f,g,h}\xx{a,h}\mone{a}\mone{h}\\ 
    &\approx\ \xx{e,h}\xx{e,h}\hh{e,f,g,h}\hh{a,b,c,d}\hh{a,b,c,e}\hh{d,f,g,h}\xx{a,h}\mone{a}\mone{h}\\ 
    &\approx\ \xx{e,h}\hh{e,f,g,h}\xx{f,h}\xx{g,h}\xx{f,h}\mone{f}\mone{g}\hh{a,b,c,d}\hh{a,b,c,e}\hh{d,f,g,h}\xx{a,h}\mone{a}\mone{h}\\ 
    &\approx\ \xx{e,h}\hh{e,f,g,h}\hh{a,b,c,d}\hh{a,b,c,e}\xx{f,h}\xx{g,h}\xx{f,h}\mone{f}\mone{g}\hh{d,f,g,h}\xx{a,h}\mone{a}\mone{h}\\ 
    &\approx\ \xx{e,h}\hh{e,f,g,h}\hh{a,b,c,d}\hh{a,b,c,e}\xx{f,g}\mone{f}\mone{g}\hh{d,f,g,h}\xx{a,h}\mone{a}\mone{h}\\ 
    &\approx\ \xx{e,h}\hh{e,f,g,h}\hh{a,b,c,d}\hh{a,b,c,e}\hh{d,f,g,h}\xx{d,h}\xx{a,h}\mone{a}\mone{h}\\ 
    &\approx\ \xx{e,h}\hh{e,f,g,h}\hh{a,b,c,d}\hh{a,b,c,e}\hh{d,f,g,h}\mone{a}\mone{d}\xx{d,h}\xx{a,h}\\
    &\approx\ \xx{e,h}\hh{e,f,g,h}\hh{a,b,c,d}\hh{a,b,c,e}\hh{d,f,g,h}\xx{a,d}\xx{a,d}\mone{a}\mone{d}\xx{d,h}\xx{a,h}\\
    &\approx\ \xx{e,h}\hh{e,f,g,h}\hh{a,b,c,d}\hh{a,b,c,e}\hh{d,f,g,h}\xx{a,d}\mone{a}\mone{d}\xx{a,d}\xx{d,h}\xx{a,h}\\
    &\approx\ \xx{e,h}\hh{e,f,g,h}\hh{a,b,c,d}\hh{a,b,c,e}\hh{d,f,g,h}\xx{a,d}\mone{a}\mone{d}\xx{d,h}\\
    &\approx\ \xx{e,h}\xx{a,e}\mone{a}\mone{e}\hh{e,f,g,h}\hh{a,b,c,d}\hh{a,b,c,e}\hh{d,f,g,h}\xx{d,h}\\
    &\approx\ \xx{e,h}\xx{a,e}\mone{a}\mone{e}\hh{e,f,g,h}\hh{a,b,c,d}\hh{a,b,c,e}\mone{f}\mone{g}\xx{f,g}\hh{d,f,g,h}\\
    &\approx\ \xx{e,h}\xx{a,e}\mone{a}\mone{e}\hh{e,f,g,h}\mone{f}\mone{g}\xx{f,g}\hh{a,b,c,d}\hh{a,b,c,e}\hh{d,f,g,h}\\
    &\approx\ \xx{e,h}\xx{a,e}\mone{a}\mone{e}\xx{e,h}\hh{e,f,g,h}\hh{a,b,c,d}\hh{a,b,c,e}\hh{d,f,g,h}\\
    &\approx\ \xx{e,h}\xx{a,e}\xx{e,h}\mone{a}\mone{h}\hh{e,f,g,h}\hh{a,b,c,d}\hh{a,b,c,e}\hh{d,f,g,h}\\
    &\approx\ \xx{a,h}\mone{a}\mone{h}\hh{e,f,g,h}\hh{a,b,c,d}\hh{a,b,c,e}\hh{d,f,g,h}.
\end{align*}
\end{proof}

% --------------------------------------------------------------------
\subsection{Basic and Simple Edges}
\label{apps:edges}

\begin{definition}
The subset $\gens_n'\subseteq\gens_n$ is defined as
\begin{equation}
  \label{eq:gens2}
  \gens_n' = \s{\xx{a, a + 1}, \hh{1,2,3,4}, \mone{1} \mid 1\leq a \leq n-1}.
\end{equation}
The elements of $\gens_n'$ are called \emph{basic} generators and an
edge $G:s\to t$ is called a \emph{basic} edge if $G$ is basic.
\end{definition}

\begin{definition}
\label{def:extent}
Let $G\in\gens_n$. The \emph{extent} of $G$ is the largest subscript
appearing in $G$. That is, 
\[
\extent(\xx{a,b}) = b, \quad \extent(\hh{a,b,c,d}) = d, \quad \mbox{and} \quad \extent(\mone{a}) = a. 
\]
The
\emph{extent} of a sequence $\word{G}= G_1 \cdots G_n$ is $\max\s{\extent(G_i)\mid 1 \leq i \leq n}$.
\end{definition}

\begin{lemma}
  \label{lem:simp}
  For any simple edge $G$, there exists a sequence of basic edges
  $\word{G}'$ such that
  \begin{enumerate}
  \item $\word{G}' \approx G$,
  \item $\extent(\word{G}') = \extent(G)$, and
  \item $\level(\word{G}') = \level(G)$.
  \end{enumerate}
\end{lemma}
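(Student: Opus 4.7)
The plan is to handle each generator type $G \in \gens_n$ separately, with an induction reducing to the basic forms $\mone{1}$, $\xx{a,a+1}$, and $\hh{1,2,3,4}$.

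If $G$ is already basic, we take $\word{G}' = G$. Otherwise, we proceed by case analysis. For $G = \mone{a}$ with $a > 1$, \cref{rel:rename3} in the form $\mone{a} \approx \xx{a-1,a}\mone{a-1}\xx{a-1,a}$ reduces to $\mone{a-1}$, which is handled by the induction hypothesis on $a$. For $G = \xx{a,b}$ with $b - a > 1$, \cref{rel:rename1} yields $\xx{a,b} \approx \xx{a,a+1}\xx{a+1,b}\xx{a,a+1}$, and we induct on $b - a$. For $G = \hh{a,b,c,d}$, we apply \cref{rel:rename4,rel:rename5,rel:rename6,rel:rename7} to conjugate $\hh{a,b,c,d}$ by a sequence of $X$-generators, successively renaming the indices $a,b,c,d$ into $1,2,3,4$; doing the renames in the order of the indices ensures that each target is free when it is needed (since $a < b < c < d$ implies $b \geq 3$, $c \geq 4$, $d \geq 5$ whenever $a \geq 2$). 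The result is $\hh{a,b,c,d} \approx \word{X}\,\hh{1,2,3,4}\,\word{X}$ for some product $\word{X}$ of $X$-generators, each of which is reduced to basic form by the previous case.

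Property (1) is immediate from the construction, since only relations from \cref{tab:relations} are used. Property (2) holds because every generator introduced in the rewriting uses indices bounded by $\extent(G)$: for $\mone{a}$ these lie in $\{1,\ldots,a\}$; for $\xx{a,b}$ these lie in $\{a,\ldots,b\}$; for $\hh{a,b,c,d}$ these lie in $\{1,2,3,4\}\cup\{a,b,c,d\} \subseteq \{1,\ldots,d\}$.

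Property (3) is the main obstacle. The key preliminary observation is that for any edge $G : s \to t$ in the state graph, we must have $\extent(G) \leq \max(j_s, j_t)$, where $j_s, j_t$ denote the pivot indices of $\level(s)$ and $\level(t)$: otherwise column $\extent(G)$ of both $s$ and $t$ would equal $e_{\extent(G)}$, forcing $G e_{\extent(G)} = e_{\extent(G)}$, which contradicts the form of each generator type. Since every basic generator introduced in $\word{G}'$ has extent at most $\extent(G)$, each intermediate state $s_i$ agrees with $s$ on all columns beyond $\extent(G)$, so the pivot index of $s_i$ is bounded by $\max(j_s, j_t)$. A direct inspection of each rewrite then confirms that the remaining components $k$ and $\ell$ of the level triple at each intermediate state also do not exceed their maxima at the endpoints, yielding $\level(\word{G}') = \level(G)$.
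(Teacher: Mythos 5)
The paper does not prove this lemma itself, deferring to~\cite{Gr2014}; your decomposition strategy — conjugating each generator by transpositions to reduce it to $\mone{1}$, an adjacent $\xx{a,a+1}$, or $\hhdef$ — is the natural one, and your handling of properties~(1) and~(2) is correct.

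For property~(3), which you correctly identify as the crux, there are two problems. First, the claim that ``each intermediate state $s_i$ agrees with $s$ on all columns beyond $\extent(G)$'' is false as stated: a product $M$ of generators with extent at most $c$ is the identity on \emph{rows} of index greater than $c$, but a column $se_j$ with $j>c$ may still have nonzero components in rows $\leq c$, which $M$ will move, so $Mse_j \neq se_j$ in general. The statement that is both true and sufficient is that $s_i e_j = e_j$ for all $j > \max(j_s,j_t)$ — since $se_j = e_j$ there (as $j>j_s$) and every basic generator in $\word{G}'$ fixes $e_j$ (as $j > \extent(G)$) — and this is what actually bounds the pivot index. Second, the assertion that ``a direct inspection of each rewrite'' confirms the bound on $k$ and $\ell$ is exactly the nontrivial content of the lemma and cannot be waved away; as written you have not proved property~(3). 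The argument, briefly, is that every basic generator in $\word{G}'$ is a transposition or a single $\mone{1}$, except for possibly one occurrence of $\hhdef$; transpositions and sign changes leave both $\lde$ of the pivot column and its odd-entry count invariant, and the single $\hhdef$ transforms the (conjugate-permuted) pivot column exactly as $\hh{a,b,c,d}$ transforms the original, so the intermediate $(k,\ell)$ values are all among $(k_s,\ell_s)$, $(k_t,\ell_t)$, and $(0,0)$. This should be written out, including the case analysis on whether $j_s$ or $j_t$ equals $\max(j_s,j_t)$.
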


\begin{proof}
See \cite{Gr2014}.
\end{proof}

% --------------------------------------------------------------------
\subsection{The Proof}
\label{apps:main}

We start with a version of the Main Lemma for basic edges, from which
the full version of the Main Lemma will follow.

\begin{lemma}
\label{lem:mainbasic}
  Let $s$, $t$, and $r$ be states, $N:s\too t$ be a normal edge, and $G:s\to r$ be a basic edge. Then there exist a state $q$, a sequence of normal edges $\word{N^*}:r\too q$, and a sequence of
  simple edges $\word{G^*}:t\to q$ such that the diagram
  \[
    \begin{tikzcd}[column sep=large, row sep=large]
      s \arrow[d, Rightarrow, "N" left] 
        \arrow[r, "G"] & 
        r\arrow[d, Rightarrow, "\word{N^*}"] \\
      t\arrow[r, "\word{G^*}" below] & q
    \end{tikzcd}
  \]
  commutes equationally and $\level(\word{G^*})<\level(s)$.
\end{lemma}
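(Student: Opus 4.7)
The plan is to prove the basic-edge Main Lemma by an exhaustive case distinction on the pair $(N, G)$. The normal edge $N$ must be a syllable output by \cref{alg:algo}, so it has one of three shapes: $\mone{a}$ (from the $k=0, a=j$ branch), $\xx{a,j}\mone{a}^\tau$ (from the $k=0, a<j$ branch), or $\hh{a,b,c,d}\mone{a}^{\tau_a}\mone{b}^{\tau_b}\mone{c}^{\tau_c}\mone{d}^{\tau_d}$ (from the $k>0$ branch), while $G \in \gens_n'$ is one of $\mone{1}$, $\xx{a,a+1}$, or $\hh{1,2,3,4}$. For each pair I compute the pivot column of $r = Gs$ from that of $s$, read off from \cref{alg:algo} the syllable at $r$ that starts $\word{N}^*$, and then construct an explicit bottom path $\word{G}^*$ of simple edges from $t$ to the common endpoint $q$.

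The easy cases are those where $G$ acts on indices disjoint from, or only permutes/signs indices within, the support of $N$: here $\word{N}^*$ is either the same syllable $N$ or an obvious renaming of it, and $\word{G}^*$ is produced by a short rewriting using the disjointness relations \eqref{rel:disjoint1}--\eqref{rel:disjoint6}, the renaming relations \eqref{rel:rename1}--\eqref{rel:rename7}, the $X$--$K$ commutations \eqref{rel:ksym1}--\eqref{rel:ksym3}, and the derived relations of \cref{derivedrelsxk} and \cref{prop:derivedrels1}. The substantive cases are those in which $G = \hh{1,2,3,4}$ and $N$ is itself a $K$-syllable whose index set meets $\{1,2,3,4\}$ in one, two, or three indices. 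In those cases the new pivot column is analyzed using the modular arithmetic developed in \cref{apps:propk}: \cref{applem:honemod4,applem:evenoddsOddodds,prop:kevenodds,prop:koddodds} determine the number and positions of odd entries, \cref{prop:kevens1,prop:kevens2,applem:norm,applem:normresidue1,applem:normresidue2} determine whether $\lde$ drops by one or two and constrain which residue pattern appears, and \cref{applem:twohsone} pins down the set of four odd indices chosen for the next $K$-syllable. From this data the first one or two syllables of $\word{N}^*$ are read off, and the composition $K \cdot (-1)\cdots \cdot K \cdot (-1)\cdots \cdot K$ that appears on the right side is collapsed to the desired $\word{G}^*$ by invoking \cref{prop:rewriterules} together with \cref{prop:kevenk,prop:koddkoddk}, which absorb the intermediate $(-1)$-exponents into words of the form $\xx{x,y}$ and $\mone{x}$ around a single central $\hh{1,2,3,4}$.

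The main obstacle is precisely this last family: there are many residue subcases to consider, each requiring a separate rewriting in which it must be checked that the syntactic equation $NG \approx \word{G}^* N^*_0$ (where $N^*_0$ is the first syllable of $\word{N}^*$, and $\word{G}^*$ possibly extends along further normal edges on the right) holds in the congruence generated by \cref{tab:relations}. The level bound $\level(\word{G}^*) < \level(s)$ is verified at the end of each case by inspecting the intermediate states along the explicit bottom path: since $\level(t) < \level(s)$ by virtue of $N$ being normal, and each intermediate state constructed has either strictly smaller $\lde$ of pivot column than $s$ or the same $\lde$ with fewer odd entries, every state visited by $\word{G}^*$ has level strictly below $\level(s)$, as required.
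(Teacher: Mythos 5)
Your overall plan—case-distinguish on the pair determining $N$ and $G$, compute the pivot column of $r$, read off the next syllable from \cref{alg:algo}, complete the square with an explicit word $\word{G}^*$, and verify commutativity via the relations while tracking levels—matches the paper's proof structure.

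However, there is a genuine gap in how you classify the cases. You treat all $G=\xx{x,x+1}$ cases as ``easy,'' handled by disjointness, renaming, the $X$--$K$ commutations \eqref{rel:ksym1}--\eqref{rel:ksym3}, and \cref{derivedrelsxk,prop:derivedrels1}. This misses the hardest subcase of the entire lemma: $G=\xx{x,x+1}$ with $k>0$, $x=d$ (the largest of the four odd indices in $N$), and $u_{d+1}$ odd. There $\xx{d,d+1}$ does \emph{not} merely permute within the support of $N$; the pivot column changes, the first four odd indices at $r$ are still $a,b,c,d$, and two successive $K$-syllables appear on each side of the square. Closing that diagram requires the eight-index relation \eqref{rel:x} (and its conjugate \cref{prop:derivedrels1alt}), after a residue analysis of $\overline{u}^\intercal\overline{u}\pmod{16}$ via \cref{applem:norm,applem:normresidue1,applem:normresidue2}. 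Your write-up cites those three lemmas, but places them under the $G=\hh{1,2,3,4}$ cases, where they are not used, and never invokes \eqref{rel:x} at all. Since \eqref{rel:x} is in the relation set precisely to make this one square commute, an outline that omits it would not, if expanded, yield a proof. You also omit $|\s{a,b,c,d}\cap\s{1,2,3,4}|\in\s{0,4}$ from your list of substantive $K$-on-$K$ cases, but these too need nontrivial work (\cref{prop:kevens1,prop:kevens2,applem:evenoddsOddodds,prop:kevenk,prop:koddkoddk}); that is a lesser misattribution than the $X$ case but still worth fixing.
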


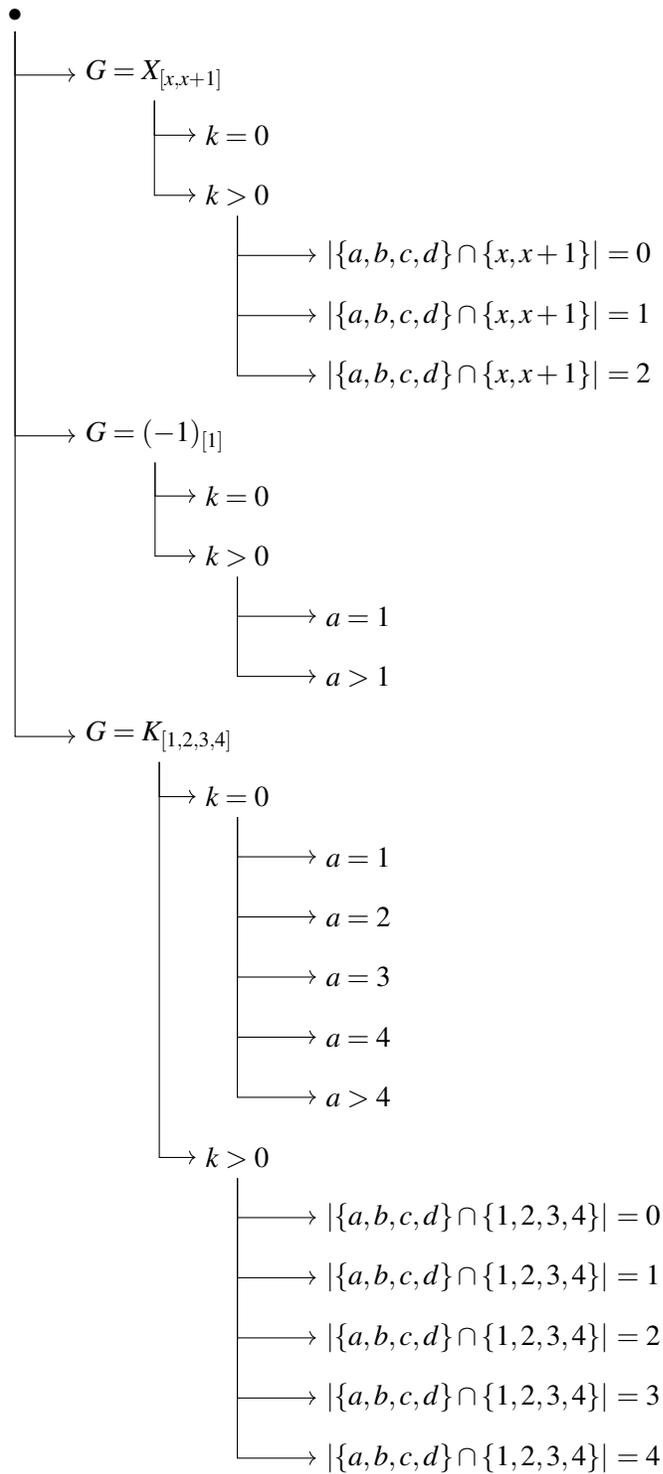
\begin{figure}
  \begin{tikzpicture}[scale=0.8]
  % nodes
  \node (init) at (-1,1) {$\bullet$};
    \node[anchor=west] (x) at (0,0) {$G=\xx{x,x+1}$};
      \node[anchor=west] (x0k) at (2,-1) {$k=0$};
      \node[anchor=west] (x1k) at (2,-2) {$k>0$};
        \node[anchor=west] (x1kabcd0) at (4,-3) {$|\s{a,b,c,d}\cap\s{ x , x +1}|=0$};
        \node[anchor=west] (x1kabcd1) at (4,-4) {$|\s{a,b,c,d}\cap\s{ x , x +1}|=1$};
        \node[anchor=west] (x1kabcd2) at (4,-5) {$|\s{a,b,c,d}\cap\s{ x , x +1}|=2$};    
    \node[anchor=west] (m) at (0,-6) {$G=\mone{1}$};
      \node[anchor=west] (m0k) at (2,-7) {$k=0$};
      \node[anchor=west] (m1k) at (2,-8) {$k>0$};
        \node[anchor=west] (m1ka0) at (4,-9) {$a=1$};
        \node[anchor=west] (m1ka1) at (4,-10) {$a>1$};
    \node[anchor=west] (k) at (0,-11) {$G=\hh{1,2,3,4}$};
      \node[anchor=west] (k0k) at (2,-12) {$k=0$};
        \node[anchor=west] (k0ka1) at (4,-13) {$a=1$};
        \node[anchor=west] (k0ka2) at (4,-14) {$a=2$};
        \node[anchor=west] (k0ka3) at (4,-15) {$a=3$};
        \node[anchor=west] (k0ka4) at (4,-16) {$a=4$};
        \node[anchor=west] (k0ka5) at (4,-17) {$a>4$};
      \node[anchor=west] (k1k) at (2,-18) {$k>0$};
        \node[anchor=west] (k1kabcd0) at (4,-19) {$|\s{a,b,c,d}\cap\s{1,2,3,4}|=0$};
        \node[anchor=west] (k1kabcd1) at (4,-20) {$|\s{a,b,c,d}\cap\s{1,2,3,4}|=1$};
        \node[anchor=west] (k1kabcd2) at (4,-21) {$|\s{a,b,c,d}\cap\s{1,2,3,4}|=2$};
        \node[anchor=west] (k1kabcd3) at (4,-22) {$|\s{a,b,c,d}\cap\s{1,2,3,4}|=3$};
        \node[anchor=west] (k1kabcd4) at (4,-23) {$|\s{a,b,c,d}\cap\s{1,2,3,4}|=4$};                  
  % edges
  \draw[->] (init) |- (x);
  \draw[->] (init) |- (m);
  \draw[->] (init) |- (k);  
  \draw[->] (x) |- (x0k);
  \draw[->] (x) |- (x1k);
  \draw[->] (x1k) |- (x1kabcd0);
  \draw[->] (x1k) |- (x1kabcd1);
  \draw[->] (x1k) |- (x1kabcd2);

  \draw[->] (m) |- (m0k);
  \draw[->] (m) |- (m1k);
  \draw[->] (m1k) |- (m1ka0);
  \draw[->] (m1k) |- (m1ka1);

  \draw[->] (k) |- (k0k);
  \draw[->] (k) |- (k1k);
  \draw[->] (k0k) |- (k0ka1);
  \draw[->] (k0k) |- (k0ka2);
  \draw[->] (k0k) |- (k0ka3);
  \draw[->] (k0k) |- (k0ka4);
  \draw[->] (k0k) |- (k0ka5);
  \draw[->] (k1k) |- (k1kabcd0);
  \draw[->] (k1k) |- (k1kabcd1);
  \draw[->] (k1k) |- (k1kabcd2);
  \draw[->] (k1k) |- (k1kabcd3);
  \draw[->] (k1k) |- (k1kabcd4);      
\end{tikzpicture}
  \caption{The case distinction.}
  \label{fig:casedist}
\end{figure}

\begin{proof}
We proceed by case distinction. Since $r$, $t$ and $N$ are uniquely
determined by $G$ and $s$, it suffices to distinguish cases based on
the pair $(G,s)$. Let $v_s$ and $v_r$ be the pivot columns of $s$ and
$r$, respectively. Let $\level(s) = (j,k,m)$, where $j$ is the index
of $v_s$ in $s$, $k = \lde(v_s)$, and $m$ is the number of odd entries
in $2^k v_s$. We consider the cases $G=\xx{x,x+1}$, $G=\mone{1}$, and
$G=\hh{1,2,3,4}$ in turn. For each choice of $G$, we distinguish
further subcases depending on whether $k=0$ or
$k>0$. \cref{fig:casedist} represents the first three levels of the
case distinction.
% case 1: G = X[x, x+1]
\begin{case}{$G=\xxdef$.}

  % subcase 1.1: k = 0
  \begin{subcase}{$k=0$.}
    Then $v_s=(-1)^{\tau_a} e_a$, where $\tau_a \in \Z_2$ and $1 \leq
    a \leq j$. We now consider the cases $j\leq x$ and $j>x$ in
    turn. For each choice of $j$ we distinguish further subcases
    corresponding to different values of $a$.

      % subcase 1.1.1: j =< x
      \begin{ssubcase}{$j\leq  x $.}
        Then $\xx{x,x+1}$ acts non-trivially on the previously fixed
        columns and this case is therefore retrograde.
      \end{ssubcase}

      % subcase 1.1.2: j >= x+1 
      \begin{ssubcase}{$j>x$.}
        
        % subcase 1.1.2.1: a not in {x, x+1}
        \begin{sssubcase}{$a \notin \{ x ,  x +1\}$.}
          Then $v_r=v_s$. Hence, $\level(r)=\level(s)$ and, from both
          $s$ and $r$, the algorithm prescribes
          $\xx{a,j}\mone{a}^{\tau_a}$. We complete the resulting
          diagrams as follows, depending on whether $ x +1 = j$ (left)
          or $ x +1 < j$ (right).
          \[
          \begin{tikzcd}[column sep=large, row sep=large]
            s \arrow[d, Rightarrow, "\xx{a , x +1}\mone{a }^{\tau_a}" left] \arrow[r, "\xx{ x , x +1}"] & r \arrow[d, Rightarrow, "\xx{a , x +1}\mone{a }^{\tau_a}"] \\
            t \arrow[r, "\xx{a , x }" below] & q
          \end{tikzcd}
          \qquad
          \begin{tikzcd}[column sep=large, row sep=large]
            s \arrow[d, Rightarrow, "\xx{a,j}\mone{a}^{\tau_a}" left] \arrow[r, "\xx{ x , x +1}"] & r \arrow[d, Rightarrow, "\xx{a,j}\mone{a}^{\tau_a}"] \\
            t \arrow[r, "\xx{ x , x +1}" below] & q
          \end{tikzcd}
          \]
          The diagrams commute by \cref{rel:disjoint2,rel:rename1,rel:rename2} since
          \[
          \xx{a,x+1}\mone{a}\xx{x,x+1} \approx \xx{a,x+1}\xx{x,x+1}\mone{a}\approx \xx{x,x+1}\xx{a,x}\mone{a}\approx \xx{a,x}\xx{a,x+1}\mone{a}
          \]
          and 
          \[
          \xx{a,j}\mone{a}\xx{x,x+1} \approx \xx{a,j}\xx{x,x+1}\mone{a}\approx \xx{x,x+1}\xx{a,j}\mone{a}\approx \xx{a,x}\xx{a,x+1}\mone{a}.
          \]
          Moreover, the level property is satisfied since
          $\level(t),\level(q)<\level(s)$.
        \end{sssubcase}
        
        % subcase 1.1.2.2: a in {x, x+1}        
        \begin{sssubcase}{$a \in \{ x , x +1\}$.}
          Then $\xx{ x , x +1}$ acts non-trivially on $v_s$ and
          so $v_r \neq v_s$. If $j= x +1$, then the diagram to
          complete is one of the diagrams below, depending on whether
          $a =  x $ (left) or $a =  x  + 1$ (right).
          \[
          \begin{tikzcd}[column sep=large, row sep=large]
          s \arrow[d, Rightarrow, "\xx{a,j}\mone{a}^{\tau_a}" left] \arrow[r, "\xx{ x , x +1}"] & r \arrow[d, Rightarrow, "\mone{j}^{\tau_a}"] \\
          t & q
          \end{tikzcd}
          \qquad
          \begin{tikzcd}[column sep=large, row sep=large]
          s \arrow[d, Rightarrow, "\mone{a}^{\tau_a}" left] \arrow[r, "\xx{ x , x +1}"] & r \arrow[d, Rightarrow, "\xx{ x ,a}\mone{ x }^{\tau_a}"] \\
          t  & q
          \end{tikzcd}
          \]                      
          We then complete the diagrams as follows.                 
          \[
          \begin{tikzcd}[column sep=large, row sep=large]
          s \arrow[d, Rightarrow, "\xx{ x , x +1}\mone{ x }^{\tau_a}" left] \arrow[r, "\xx{ x , x +1}"] & r \arrow[d, Rightarrow, "\mone{ x +1}^{\tau_a}"] \\
          t \arrow[r, "\epsilon" below] & q
          \end{tikzcd}
          \qquad
          \begin{tikzcd}[column sep=large, row sep=large]
          s \arrow[d, Rightarrow, "\mone{ x +1}^{\tau_a}" left] \arrow[r, "\xx{ x , x +1}"] & r \arrow[d, Rightarrow, "\xx{ x , x +1}\mone{ x }^{\tau_a}"] \\
          t \arrow[r, "\epsilon" below] & q
          \end{tikzcd}
          \]          
          The diagrams commute by \cref{rel:rename3}
          and \cref{rel:orderx,rel:rename3}, respectively. Moreover,
          the level property is satisfied since we have
          $\level(q)=\level(t)<\level(s)$ in both cases. Now if $j > x
          + 1$, then the diagram to complete is one of the diagrams
          below, depending on whether $a = x $ (left) or $a = x + 1$
          (right).
          \[
          \begin{tikzcd}[column sep=large, row sep=large]
          s \arrow[d, Rightarrow, "\xx{ x  ,j}\mone{ x  }^{\tau_a}" left] \arrow[r, "\xx{ x , x +1}"] & r \arrow[d, Rightarrow, "\xx{ x +1,j}\mone{ x +1}^{\tau_a}"] \\
          t & q
          \end{tikzcd}
          \qquad
          \begin{tikzcd}[column sep=large, row sep=large]
          s \arrow[d, Rightarrow, "\xx{ x +1,j}\mone{ x +1}^{\tau_a}" left] \arrow[r, "\xx{ x , x +1}"] & r \arrow[d, Rightarrow, "\xx{ x ,j}\mone{ x }^{\tau_a}"] \\
          t  & q
          \end{tikzcd}
          \]                      
          We then complete the diagrams as follows.
          \[
          \begin{tikzcd}[column sep=large, row sep=large]
          s \arrow[d, Rightarrow, "\xx{ x ,j}\mone{ x }^{\tau_a}" left] \arrow[r, "\xx{ x , x +1}"] & r \arrow[d, Rightarrow, "\xx{ x +1,j}\mone{ x +1}^{\tau_a}"] \\
          t \arrow[r, "\xx{ x , x +1}" below]& q
          \end{tikzcd}
          \qquad
          \begin{tikzcd}[column sep=large, row sep=large]
          s \arrow[d, Rightarrow, "\xx{ x +1,j}\mone{ x +1}^{\tau_a}" left] \arrow[r, "\xx{ x , x +1}"] & r \arrow[d, Rightarrow, "\xx{ x ,j}\mone{ x }^{\tau_a}"] \\
          t  \arrow[r, "\xx{ x , x +1}" below] & q
          \end{tikzcd}
          \]
          Both diagrams commute by \cref{rel:rename1,rel:rename3} and
          the level property is satisfied since we have
          $\level(t)<\level(s)$ and $\level(q)<\level(s)$ in both
          cases.
        \end{sssubcase}
      \end{ssubcase}
  \end{subcase}

  % subcase 1.2: k>0
  \begin{subcase}{$k>0$.}
    Let $u = 2^k v_s$ and let $a,b,c,d$ be the indices of the first
    four odd entries of $u$. In this case, $N$ is of the form
    \[
    \hh{a,b,c,d}\mone{a}^{\tau_a} \mone{b}^{\tau_b}
    \mone{c}^{\tau_c}\mone{d}^{\tau_d},
    \]
    where $\tau_a, \tau_b, \tau_c, \tau_d \in \Z_2$. We have
    $|\s{a,b,c,d} \cap \s{x,x+1}| \in \s{0,1,2}$. We consider each one
    of these cases in turn.

    % subcase 1.2.1: |{a,b,c,d} cap {x, x+1}| = 0
    \begin{ssubcase}{$|\s{a,b,c,d} \cap \s{x,x+1}| = 0$.}
      If $ x +1\leq j$, then $\xx{x,x+1}$ acts trivially
      on the previously fixed columns and doesn't affect the number of
      odd entries in $u$. Hence $\level(r)=\level(s)$ and the first
      four odd entries in the integral part of $v_r$ also have indices
      $a$, $b$, $c$, and $d$. Thus the diagram to complete is the one
      shown below.
      \[
      \begin{tikzcd}[column sep=large, row sep=large]
      s \arrow[d, Rightarrow, "\hh{a,b,c,d}\mone{a}^{\tau_a}\mone{b}^{\tau_b}\mone{c}^{\tau_c}\mone{d}^{\tau_d}" left] \arrow[r, "\xx{x,x+1}"] & r \arrow[d, Rightarrow, "\hh{a,b,c,d}\mone{a}^{\tau_a}\mone{b}^{\tau_b}\mone{c}^{\tau_c}\mone{d}^{\tau_d}"] \\
       t & q
      \end{tikzcd}
      \]
      We then complete the diagram as follows.
      \[
      \begin{tikzcd}[column sep=large, row sep=large]
      s \arrow[d, Rightarrow, "\hh{a,b,c,d}\mone{a}^{\tau_a}\mone{b}^{\tau_b}\mone{c}^{\tau_c}\mone{d}^{\tau_d}" left] \arrow[r, "\xx{x,x+1}"] & r \arrow[d, Rightarrow, "\hh{a,b,c,d}\mone{a}^{\tau_a}\mone{b}^{\tau_b}\mone{c}^{\tau_c}\mone{d}^{\tau_d}"] \\
       t \arrow[r, "\xx{x,x+1}" below] & q
      \end{tikzcd}
      \]
      The diagram commutes by
      \cref{rel:disjoint2,rel:disjoint3}. Moreover, the level property
      is satisfied because $\level(t)<\level(s)$ and
      $\level(q)<\level(r)=\level(s)$. Now if $ x +1>j$, then
      $\xx{x,x+1}$ acts non-trivially on the previously
      fixed columns and the case is retrograde.
    \end{ssubcase}

    % subcase 1.2.2: |{a,b,c,d} cap {x, x+1}| = 1    
    \begin{ssubcase}{$|\s{a,b,c,d} \cap \s{x,x+1}| = 1$.}

      % subcase 1.2.2.1: |{a,b,c} cap {x, x+1}| = 1          
      \begin{sssubcase}\label{xssc:abcx}{$|\s{a,b,c} \cap \s{x,x+1}|=1$.}
        Then $ x +1<d$ and there are six subcases to consider,
        depending on whether $ x \in\s{a,b,c}$ or
        $ x +1\in\s{a,b,c}$. These cases can be uniformly
        represented by the diagram
        \[
        \begin{tikzcd}[column sep=large, row sep=large]
        s \arrow[d, Rightarrow, "\hh{a,b,c,d}\mone{a}^{\tau_a}\mone{b}^{\tau_b}\mone{c}^{\tau_c}\mone{d}^{\tau_d}" left] \arrow[r, "\xx{x,x+1}"] & r \arrow[d, Rightarrow, "\hh{a',b',c',d}\mone{a'}^{\tau_{a'}}\mone{b'}^{\tau_{b'}}\mone{c'}^{\tau_{c'}}\mone{d}^{\tau_{d}}"] \\
        t & q
        \end{tikzcd}
        \]
        where, for an index $p\in\s{a,b,c}$, we have $p'= x $ if
        $p=x+1$ and $p'=x+1$ if $p=x$. We can then
        complete the diagram as follows.
        \[
        \begin{tikzcd}[column sep=large, row sep=large]
        s \arrow[d, Rightarrow, "\hh{a,b,c,d}\mone{a}^{\tau_a}\mone{b}^{\tau_b}\mone{c}^{\tau_c}\mone{d}^{\tau_d}" left] \arrow[r, "\xx{x,x+1}"] & r \arrow[d, Rightarrow, "\hh{a',b',c',d}\mone{a'}^{\tau_{a'}}\mone{b'}^{\tau_{b'}}\mone{c'}^{\tau_{c'}}\mone{d}^{\tau_{d}}"] \\
        t\arrow[r, "\xx{x,x+1}" below] & q
        \end{tikzcd}
        \]
        The diagram commutes by
        \cref{rel:disjoint2,rel:rename3,rel:rename4,rel:rename5,rel:rename6,rel:rename7}
        and the level property is satisfied since $ x +1<d$ implies
        that $\level(r)=\level(s)$ and therefore that
        $\level(t)<\level(s)$ and $\level(q)<\level(r)=\level(s)$.
      \end{sssubcase}

      % subcase 1.2.2.2: d = x+1                
      \begin{sssubcase}\label{xssc:ldegzerod}{$d= x +1$.}
        This case is similar to the previous one and the completed
        diagram is given below.
        \[
        \begin{tikzcd}[column sep=large, row sep=large]
         s \arrow[d, Rightarrow, "\hh{a,b,c, x +1}\mone{a}^{\tau_a}\mone{b}^{\tau_b}\mone{c}^{\tau_c}\mone{d}^{\tau_{d}}" left] 
        \arrow[r, "\xx{ x ,d}"] & 
        r\arrow[d, Rightarrow, "\hh{a,b,c, x }\mone{a}^{\tau_a}\mone{b}^{\tau_b}\mone{c}^{\tau_c}\mone{ x }^{\tau_{d}}"] \\
        t\arrow[r, "\xx{ x ,d}" below] & q
        \end{tikzcd}\\
        \]
        To see that the diagram commutes and that the level property
        is satisfied, one can reason in the same way as in
        \cref{xssc:abcx}.
      \end{sssubcase}            

      % subcase 1.2.2.3: d = x      
      \begin{sssubcase}{$d= x $.}
        Let $e=d+1$ and let $u_e$ be the $e$-th component of $u$. If
        $u_e$ is even, then the indices of the first four odd entries
        of the integral part of $v_r$ are $a$, $b$, $c$, and $e$. We
        can then reason as \cref{xssc:abcx}, using the completed
        diagram below.
        \[
        \begin{tikzcd}[column sep=large, row sep=large]
        s \arrow[d, Rightarrow, "\hh{a,b,c,d}\mone{a}^{\tau_a}\mone{b}^{\tau_b}\mone{c}^{\tau_c}\mone{d}^{\tau_d}" left] \arrow[r, "\xx{d,e}"] & r \arrow[d, Rightarrow, "\hh{a,b,c,e}\mone{a}^{\tau_a}\mone{b}^{\tau_b}\mone{c}^{\tau_c}\mone{e}^{\tau_d}"] \\
        t \arrow[r, "\xx{d,e}" below]& q
        \end{tikzcd}
        \]        
        If $u_e$ is odd, then the indices of the first four odd
        entries of the integral part of $v_r$ are $a$, $b$, $c$, and
        $d$, and the diagram to complete is the one given below.
        \[
        \begin{tikzcd}[column sep=large, row sep=large]
        s \arrow[d, Rightarrow, "\hh{a,b,c,d}\mone{a}^{\tau_a}\mone{b}^{\tau_b}\mone{c}^{\tau_c}\mone{d}^{\tau_d}" left] \arrow[r, "\xx{d,e}"] & r \arrow[d, Rightarrow, "\hh{a,b,c,d}\mone{a}^{\tau_a}\mone{b}^{\tau_b}\mone{c}^{\tau_c}\mone{d}^{\tau_e}"] \\
        t & q
        \end{tikzcd}
        \]        
        In this case, there are at least two quadruples of odd entries
        in $u$. Let $f$, $g$, and $h$ be the indices of the first
        three odd entries of $u$ after $e$ and write $\overline{u}$
        for the vector composed of the first eight odd entries of
        $u$. Then, by \cref{applem:norm}, we have $\overline{u}^\intercal
        \overline{u}\equiv 0 \pmod{16}$ or $\overline{u}^\intercal
        \overline{u} \equiv 8 \pmod{16}$. We consider both of these
        cases in turn.

        % subcase 1.2.2.3.1: ||u8|| = 0 (mod 16)
        \begin{ssssubcase}{$\overline{u}^\intercal \overline{u} \equiv 0 \pmod{16}$.}           
          Then we consider the diagram below.
          \[
          \begin{tikzcd}[column sep=large, row sep=large]
          s \arrow[d, Rightarrow, "\hh{a,b,c,d}\mone{a}^{\tau_a}\mone{b}^{\tau_b}\mone{c}^{\tau_c}\mone{d}^{\tau_d}" left] \arrow[r, "\xx{d,e}"] & r \arrow[d, Rightarrow, "\hh{a,b,c,d}\mone{a}^{\tau_a}\mone{b}^{\tau_b}\mone{c}^{\tau_c}\mone{d}^{\tau_e}"] \\
          t \arrow[d, Rightarrow, "\hh{e,f,g,h}\mone{e}^{\tau_e}\mone{f}^{\tau_f}\mone{g}^{\tau_g}\mone{h}^{\tau_h}" left] & q \arrow[d, Rightarrow, "\hh{e,f,g,h}\mone{e}^{\tau_d}\mone{f}^{\tau_f}\mone{g}^{\tau_g}\mone{h}^{\tau_h}"] \\
          t_1 \arrow[d, "\mone{a}\mone{e}\xx{a,e}" left] & q_1\arrow[d, "\mone{a}\mone{e}\xx{a,e}"] \\
          t_2 \arrow[d, "\hh{a,b,c,d}" left]& q_2 \arrow[d, "\hh{a,b,c,d}"] \\
          t_3 \arrow[d, "\hh{e,f,g,h}" left]& q_3 \arrow[d, "\hh{e,f,g,h}"] \\
          t_4 \arrow[r, "\xx{d,e}" above]& q_4
          \end{tikzcd}
          \]          
          To see that the diagram commutes, note that the occurrences
          of $(-1)$ in the top part of the diagram can be commuted
          past $X$ and $K$ and cancelled,
          using \cref{rel:ordermone,rel:disjoint2,rel:disjoint4,rel:disjoint5}. The
          fact that the diagram commutes is then a consequence
          of \cref{rel:x}. We now verify that the diagram satisfies
          the level property. The first two edges descending from $s$
          are prescribed by the algorithm. Thus
          \[
          \hh{a,b,c,d}          
          \mone{a}^{\tau_a}\mone{b}^{\tau_b}\mone{c}^{\tau_c}\mone{d}^{\tau_d}
          \overline{u} = \overline{w}
          \]
          with $\overline{w}\equiv 00001111 \pmod{2}$, so that
          $\level(t_1)\leq (j,k,m-1)<\level(s)$. Similarly,
          \[
          \hh{e,f,g,h}
          \mone{e}^{\tau_e}\mone{f}^{\tau_f}\mone{g}^{\tau_g}\mone{h}^{\tau_h}          
          \overline{w} = \overline{w}_1
          \]
          with $\overline{w}_1 \equiv 00000000 \pmod{2}$ so that
          $\level(t_1)\leq (j,k,m-2)<\level(s)$. Moreover, by
          \cref{applem:normresidue1}, we know that $\overline{w}_1 =
          2\overline{w}_1'$ with $\overline{w}_1'\equiv 10000111
          \pmod{2}$ or $\overline{w}_1'\equiv 01111000 \pmod{2}$.
          Hence,
          \[
          \mone{a}\mone{e}\xx{a,e} \overline{w}_1 = \overline{w}_2
          \]
          with $\overline{w}_2 = 2\overline{w}_2'$ and
          $\overline{w}_2'\equiv 00001111 \pmod{2}$ or
          $\overline{w}_2'\equiv 11110000 \pmod{2}$. Thus
          $\level(t_2)=\level(t_1)\leq (j,k,m-2) < \level(s)$. By
          \cref{applem:honemod4}, we have
          \[
          \hh{a,b,c,d} \overline{w}_2 = \overline{w}_3 \quad \mbox{
            and } \quad \hh{e,f,g,h} \overline{w}_3 = \overline{w}_4
          \]
          with $\overline{w}_3 = 2\overline{w}_3'$ and $\overline{w}_4
          = 2\overline{w}_4'$ for some
          $\overline{w}_3',\overline{w}_4'\in\Z^8$. Hence, we get
          \[
          \level(t_4),\level(t_3) \leq \level(t_2)<\level(s).
          \]
          We can reason analogously with the right hand side of the
          diagram to show that
          \[
          \level(q_4), \level(q_3),\level(q_2), \level(q_1),\level(q),
          \level(r)<\level(s).
          \]
          This proves that the diagram satisfies the level property.
        \end{ssssubcase}      
        
        % subcase 1.2.2.3.2: ||u8|| = 8 (mod 16)        
        \begin{ssssubcase}{$\overline{u}^\intercal \overline{u} \equiv 8 \pmod{16}$.}                        
          Then we consider the diagram below.
          \[
          \begin{tikzcd}[column sep=large, row sep=large]
          s \arrow[d, Rightarrow, "\hh{a,b,c,d}\mone{a}^{\tau_a}\mone{b}^{\tau_b}\mone{c}^{\tau_c}\mone{d}^{\tau_d}" left] \arrow[r, "\xx{d,e}"] & r \arrow[d, Rightarrow, "\hh{a,b,c,d}\mone{a}^{\tau_a}\mone{b}^{\tau_b}\mone{c}^{\tau_c}\mone{d}^{\tau_e}"] \\
          t \arrow[d, Rightarrow, "\hh{e,f,g,h}\mone{e}^{\tau_e}\mone{f}^{\tau_f}\mone{g}^{\tau_g}\mone{h}^{\tau_h}" left] & q \arrow[d, Rightarrow, "\hh{e,f,g,h}\mone{e}^{\tau_d}\mone{f}^{\tau_f}\mone{g}^{\tau_g}\mone{h}^{\tau_h}"] \\
          t_1 \arrow[d,"\mone{a}\mone{h}\xx{a,h}" left] & q_1\arrow[d,"\mone{a}\mone{h}\xx{a,h}"] \\
          t_2 \arrow[d, "\hh{a,b,c,d}" left]& q_2 \arrow[d, "\hh{a,b,c,d}"] \\
          t_3 \arrow[d, "\hh{e,f,g,h}" left]& q_3 \arrow[d, "\hh{e,f,g,h}"] \\
          t_4 \arrow[r, "\xx{d,e}" above]& q_4
          \end{tikzcd}
          \]          
          To see that the diagram commutes, note that the occurrences
          of $(-1)$ in the top part of the diagram can be commuted
          past $X$ and $K$ and cancelled,
          using \cref{rel:ordermone,rel:disjoint2,rel:disjoint4,rel:disjoint5}. The
          fact that the diagram commutes is then a consequence
          of \cref{prop:derivedrels1alt}. We now verify that the diagram
          satisfies the level property. The first two edges descending
          from $s$ are prescribed by the algorithm. Thus
          \[
          \hh{a,b,c,d}          
          \mone{a}^{\tau_a}\mone{b}^{\tau_b}\mone{c}^{\tau_c}\mone{d}^{\tau_d}
          \overline{u} = \overline{w}
          \]
          with $\overline{w}\equiv 00001111 \pmod{2}$, so that
          $\level(t_1)\leq (j,k,m-1)<\level(s)$. Similarly,
          \[
          \hh{e,f,g,h}
          \mone{e}^{\tau_e}\mone{f}^{\tau_f}\mone{g}^{\tau_g}\mone{h}^{\tau_h}          
          \overline{w} = \overline{w}_1
          \]
          with $\overline{w}_1 \equiv 00000000 \pmod{2}$ so that
          $\level(t_1)\leq (j,k,m-2)<\level(s)$. Moreover, by
          \cref{applem:normresidue2}, we know that $\overline{w}_1 =
          2\overline{w}_1'$ with $\overline{w}_1'\equiv 10001000
          \pmod{2}$ or $\overline{w}_1'\equiv 01110111 \pmod{2}$.
          Hence,
          \[
          \mone{a}\mone{h}\xx{a,h} \overline{w}_1 = \overline{w}_2
          \]
          with $\overline{w}_2 = 2\overline{w}_2'$ and
          $\overline{w}_2'\equiv 00001001 \pmod{2}$ or
          $\overline{w}_2'\equiv 11110110 \pmod{2}$. Thus
          $\level(t_2)=\level(t_1)\leq (j,k,m-2) < \level(s)$. By
          \cref{applem:honemod4}, we have
          \[
          \hh{a,b,c,d} \overline{w}_2 = \overline{w}_3 \quad \mbox{
            and } \quad \hh{e,f,g,h} \overline{w}_3 = \overline{w}_4
          \]
          with $\overline{w}_3 = 2\overline{w}_3'$ and $\overline{w}_4
          = 2\overline{w}_4'$ for some
          $\overline{w}_3',\overline{w}_4'\in\Z^8$. Hence, we get
          \[
          \level(t_4),\level(t_3) \leq \level(t_2)<\level(s).
          \]
          We can reason analogously with the right hand side of the
          diagram to show that
          \[
          \level(q_4), \level(q_3),\level(q_2), \level(q_1),\level(q),
          \level(r)<\level(s).
          \]
          This proves that the diagram satisfies the level property.
        \end{ssssubcase}            
      \end{sssubcase}        
    \end{ssubcase}

    % subcase 1.2.3: |{a,b,c,d} cap {x, x+1}| = 2        
    \begin{ssubcase}{$|\s{a,b,c,d} \cap \s{x,x+1}| = 2$.}
      Then the diagram to complete is one of the diagrams below,
      depending on whether $\s{x,x+1}=\s{a,b}$ (top),
      $\s{x,x+1}=\s{b,c}$ (center), or
      $\s{x,x+1}=\s{c,d}$ (bottom).
      \[
      \begin{tikzcd}[column sep=large, row sep=large]
       s \arrow[d, Rightarrow, "\hh{a,b,c,d}\mone{a}^{\tau_a}\mone{b}^{\tau_{b}}\mone{c}^{\tau_c}\mone{d}^{\tau_d}" left] 
      \arrow[r, "\xx{a,b}"] & 
      r\arrow[d, Rightarrow, "\hh{a,b,c,d}\mone{a}^{\tau_b}\mone{b}^{\tau_a}\mone{c}^{\tau_c}\mone{d}^{\tau_d}"] \\
      t                                   & q
      \end{tikzcd}\\
      \]
      \[
      \begin{tikzcd}[column sep=large, row sep=large]
       s \arrow[d, Rightarrow, "\hh{a,b,c,d}\mone{a}^{\tau_a}\mone{b}^{\tau_b}\mone{c}^{\tau_c}\mone{d}^{\tau_d}" left] 
      \arrow[r, "\xx{b,c}"] & 
      r\arrow[d, Rightarrow, "\hh{a,b,c,d}\mone{a}^{\tau_a}\mone{b}^{\tau_c}\mone{c}^{\tau_b}\mone{d}^{\tau_d}"] \\
      t                                   & q
      \end{tikzcd}\\
      \]
      \[
      \begin{tikzcd}[column sep=large, row sep=large]
       s \arrow[d, Rightarrow, "\hh{a,b,c,d}\mone{a}^{\tau_a}\mone{b}^{\tau_b}\mone{c}^{\tau_c}\mone{d}^{\tau_d}" left] 
      \arrow[r, "\xx{c,d}"] & 
      r\arrow[d, Rightarrow, "\hh{a,b,c,d}\mone{a}^{\tau_a}\mone{b}^{\tau_b}\mone{c}^{\tau_d}\mone{d}^{\tau_c}"] \\
      t                                   & q
      \end{tikzcd}\\
      \]
      We then complete the diagrams as follows.
      \[
      \begin{tikzcd}[column sep=7em, row sep=large]
       s \arrow[d, Rightarrow, "\hh{a,b,c,d}\mone{a}^{\tau_a}\mone{b}^{\tau_{b}}\mone{c}^{\tau_c}\mone{d}^{\tau_d}" left] 
      \arrow[r, "\xx{a,b}"] & 
      r\arrow[d, Rightarrow, "\hh{a,b,c,d}\mone{a}^{\tau_b}\mone{b}^{\tau_a}\mone{c}^{\tau_c}\mone{d}^{\tau_d}"] \\
      t\arrow[r, "\mone{d}\mone{b}\xx{b,d}" below] & q
      \end{tikzcd}\\
      \]
      \[
      \begin{tikzcd}[column sep=large, row sep=large]
       s \arrow[d, Rightarrow, "\hh{a,b,c,d}\mone{a}^{\tau_a}\mone{b}^{\tau_b}\mone{c}^{\tau_c}\mone{d}^{\tau_d}" left] 
      \arrow[r, "\xx{b,c}"] & 
      r\arrow[d, Rightarrow, "\hh{a,b,c,d}\mone{a}^{\tau_a}\mone{b}^{\tau_c}\mone{c}^{\tau_b}\mone{d}^{\tau_d}"] \\
      t\arrow[r, "\xx{b,c}" below] & q
      \end{tikzcd}\\
      \]
      \[
      \begin{tikzcd}[column sep=large, row sep=large]
       s \arrow[d, Rightarrow, "\hh{a,b,c,d}\mone{a}^{\tau_a}\mone{b}^{\tau_b}\mone{c}^{\tau_c}\mone{d}^{\tau_d}" left] 
      \arrow[r, "\xx{c,d}"] & 
      r\arrow[d, Rightarrow, "\hh{a,b,c,d}\mone{a}^{\tau_a}\mone{b}^{\tau_b}\mone{c}^{\tau_d}\mone{d}^{\tau_c}"] \\
      t\arrow[r, "\xx{b,d}" below] & q
      \end{tikzcd}\\
      \]
      The diagrams commute by
      \cref{rel:disjoint2,rel:disjoint4,rel:rename3,rel:ksym1,rel:ksym2,rel:ksym3}. Moreover,
      the level property is satisfied in the three diagrams since the
      level of $s$ is unaffected by $\xx{x,x+1}$ so that
      $\level(t)<\level(s)$ and $\level(q)<\level(r)=\level(s)$.
    \end{ssubcase}
  \end{subcase}
\end{case}

% case 2: G = (-1)[1]
\begin{case}{$G=\monedef$.}

  % subcase 2.1: k = 0
  \begin{subcase}{$k=0$.}
    Then $v_s=(-1)^{\tau_a} e_a$, where $\tau_a \in \Z_2$ and $1 \leq
    a \leq j$. We now consider the cases $a=1$ and $a>1$ in turn. For
    each choice of $a$ we distinguish further subcases corresponding
    to different values of $j$.

    % subcase 2.1.1: a = 1
    \begin{ssubcase}{$a=1$.}

      % subcase 2.1.1.1: j = 1
      \begin{sssubcase}{$j=1$.}
        Then $\tau_a=1$, $r=I$, and the completed diagram is given
        below.
        \[
        \begin{tikzcd}[column sep=large, row sep=large]
        s \arrow[d, Rightarrow, "\monedef" left] \arrow[r, "\monedef"]
        & r \arrow[d, Rightarrow, "\epsilon"] \\ t \arrow[r,
          "\epsilon" below]& q
        \end{tikzcd}
        \]
        The diagram commutes since $\approx$ is reflexive and the
        level property is satisfied since
        $\level(t)=\level(q)<\level(s)$.
      \end{sssubcase}

      % subcase 2.1.1.2: j > 1
      \begin{sssubcase}{$j>1$.}
        Then $v_r=(-1)^{\tau_a+1}e_1$ and the completed diagram is given
        below.
        \[
        \begin{tikzcd}[column sep=large, row sep=large]
        s \arrow[d, Rightarrow, "\xx{1,j}\mone{1}^{\tau_a}" left]
        \arrow[r, "\monedef"] & r \arrow[d, Rightarrow,
          "\xx{1,j}\monedef^{\tau_a+1}"] \\ t \arrow[r, "\epsilon"
          below]& q
        \end{tikzcd}
        \]
        The diagram commutes by \cref{rel:ordermone} and the level
        property is satisfied since $\level(t)=\level(q)<\level(s)$.
      \end{sssubcase}
    \end{ssubcase}

    % subcase 2.1.2: a > 1
    \begin{ssubcase}{$a>1$.}

      % subcase 2.1.2.1: j = a
      \begin{sssubcase}{$j=a$.}
        Then $\mone{1}$ acts trivially on $v_s$ and so
        $v_r=v_s$. Hence, the completed diagram is given below.
        \[
        \begin{tikzcd}[column sep=large, row sep=large]
        s \arrow[d, Rightarrow, "\mone{a}^{\tau_a}" left] \arrow[r,
          "\monedef"] & r \arrow[d, Rightarrow, "\mone{a}^{\tau_a}"]
        \\ t \arrow[r, "\monedef" below]& q
        \end{tikzcd}
        \]
        The diagram commutes by \cref{rel:disjoint4} and the level
        property is satisfied since $\level(t)< \level(s)$ and
        $\level(q)<\level(r) = \level(s)$.
      \end{sssubcase}

      % subcase 2.1.2.2: j > a
      \begin{sssubcase}{$j > a$.}
        Then $\mone{1}$ acts trivially on $v_s$ and so
        $v_r=v_s$. Hence, the completed diagram is given below.
        \[
        \begin{tikzcd}[column sep=large, row sep=large]
        s \arrow[d, Rightarrow, "\xx{a,j}\mone{a}^{\tau_a}" left]
        \arrow[r, "\monedef"] & r \arrow[d, Rightarrow,
          "\xx{a,j}\mone{a}^{\tau_a}"] \\ t \arrow[r, "\monedef"
          below]& q
        \end{tikzcd}
        \]
        The diagram commutes by \cref{rel:disjoint2,rel:disjoint4}
        and the level property is satisfied since
        $\level(t)< \level(s)$ and $\level(q)<\level(r) = \level(s)$.
      \end{sssubcase}
    \end{ssubcase}
  \end{subcase}

  % subcase 2.2: k > 0
  \begin{subcase}{$k>0$.}
    Let $u = 2^k v_s$ and let $a,b,c,d$ be the indices of the first
    four odd entries of $u$. In this case, $N$ is of the form
    \[
    \hh{a,b,c,d}\mone{a}^{\tau_a} \mone{b}^{\tau_b}
    \mone{c}^{\tau_c}\mone{d}^{\tau_d},
    \]
    where $\tau_a, \tau_b, \tau_c, \tau_d \in \Z_2$. We have $a=1$ or
    $a>1$. We consider each one of these cases in turn.

    % subcase 2.2.1: a = 1
    \begin{ssubcase}{$a = 1$.}
      Then $\mone{1}$ acts non-trivially on $v_s$ and so $v_r\neq
      v_s$. Hence, the completed diagram is given below.
      \[
      \begin{tikzcd}[column sep=large, row sep=large]
      s \arrow[d, Rightarrow, "\hh{1,b,c,d} \mone{1}^{\tau_1}
        \mone{b}^{\tau_b}\mone{c}^{\tau_c}\mone{d}^{\tau_d}" left]
      \arrow[r, "\monedef"] & r \arrow[d, Rightarrow,"\hh{1,b,c,d}
        \mone{1}^{\tau_1+1} \mone{b}^{\tau_b}
        \mone{c}^{\tau_c}\mone{d}^{\tau_d}"] \\ t \arrow[r, "\epsilon"
        below]& q
      \end{tikzcd}
      \]
      The diagram commutes by \cref{rel:disjoint4,rel:ordermone}
      and the level property is satisfied since
      $\level(t)=\level(q)<\level(r)=\level(s)$.
    \end{ssubcase}  

    % subcase 2.2.2: a > 1
    \begin{ssubcase}{$a > 1$.}
      Then $\mone{1}$ does not affect the odd entries of $v_s$. Hence,
      the completed diagram is given below.
      \[
      \begin{tikzcd}[column sep=large, row sep=large]
      s \arrow[d, Rightarrow, "\hh{a,b,c,d} \mone{a}^{\tau_a}
        \mone{b}^{\tau_b}\mone{c}^{\tau_c}\mone{d}^{\tau_d}" left]
      \arrow[r, "\monedef"] & r \arrow[d, Rightarrow, "\hh{a,b,c,d}
        \mone{a}^{\tau_a}
        \mone{b}^{\tau_b}\mone{c}^{\tau_c}\mone{d}^{\tau_d}~~"] \\ t
      \arrow[r, "\mone{1}" below]& q
      \end{tikzcd}
      \]
      The diagram commutes by
      \cref{rel:disjoint4,rel:disjoint5,rel:ordermone}
      and the level property is satisfied since
      $\level(t)=\level(q)<\level(r)=\level(s)$.
    \end{ssubcase}
  \end{subcase}
\end{case}

% case 3: G = K[1,2,3,4]
\begin{case}{$G=\hhdef$.}

  % subcase 3.1: k = 0
  \begin{subcase}{$k=0$.}
    Then $v_s = (-1)^{\tau_a} e_a$, where $\tau_a \in \Z_2$ and $1\leq
    a \leq j$. We now consider the cases $a=1$, $a=2$, $a=3$, $a=4$,
    and $a>4$ in turn. For each choice of $a$ we distinguish further
    subcases corresponding to different values of $j$.

    % subcase 3.1.1: a = 1
    \begin{ssubcase}{$a = 1$.}

      % subcase 3.1.1.1: j = 1
      \begin{sssubcase}{$j=1$.}
        Then $\tau_1=1$. Hence, from $s$, the algorithm prescribes
        $\mone{1}$. The level of $r$ is $(4,1,4)$ and, from $r$, the
        algorithm prescribes
        \[
        \hh{1,2,3,4}\mone{2}\mone{3}, \quad \xx{1,4}, \quad \xx{2,3},
        \quad \mbox{ and } \quad \mone{1}.
        \]
        We complete the resulting diagram as follows.
        \[
        \begin{tikzcd}[column sep=large, row sep=large]
        s \arrow[d, Rightarrow,"\mone{1}" left] \arrow[r, "\hhdef"] & r \arrow[d, Rightarrow, "\hh{1,2,3,4}\mone{2} \mone{3}"] \\
        t  \arrow[dddr,"\varepsilon" swap]& q_1 \arrow[d, Rightarrow, "\xx{1,4}"]&\\      
            & q_2 \arrow[d, Rightarrow, "\xx{2,3}"]\\
            & q_3 \arrow[d, Rightarrow, "\mone{1}"] \\
            & q_4
        \end{tikzcd}      
        \]
        The diagram commutes by \cref{rel:k31,rel:orderx,rel:orderk} since
        \begin{align*}
        \mone{1}\xx{2,3}\xx{1,4}\hh{1,2,3,4}\mone{2}\mone{3}\hh{1,2,3,4} &\approx
          \mone{1}\xx{2,3}\xx{1,4}\xx{1,4}\xx{2,3} \\
          &\approx \mone{1}.
        \end{align*}
        Moreover, the level property is satisfied since
        $\level(t),\level(q_4)<(1,0,0)=\level(s)$.
      \end{sssubcase}

      % subcase 3.1.1.2: j = 2      
      \begin{sssubcase}{$j=2$.}
        Then, from $s$, the algorithm prescribes
        $\xx{1,2}\mone{2}^{\tau_1}$.  The level of $r$ is $(4,1,4)$
        and, from $r$, the algorithm prescribes
        \[
        \hh{1,2,3,4}\mone{2}\mone{3}, \quad \xx{1,4}, \quad \xx{2,3},
        \quad \mbox{ and } \quad \xx{1,2}\mone{1}^{\tau_1}.
        \]
        We complete the resulting diagram as follows.
        \[
        \begin{tikzcd}[column sep=large, row sep=large]
        s \arrow[d, Rightarrow,"\xx{1,2}\mone{1}^{\tau_1}" left] \arrow[r, "\hhdef"] & r \arrow[d, Rightarrow, "\hh{1,2,3,4}\mone{2} \mone{3}"] \\
        t  \arrow[dddr,"\varepsilon" swap]& q_1 \arrow[d, Rightarrow, "\xx{1,4}"]&\\      
            & q_2 \arrow[d, Rightarrow, "\xx{2,3}"]\\
            & q_3 \arrow[d, Rightarrow, "\xx{1,2}\mone{1}^{\tau_1}"] \\
            & q_4
        \end{tikzcd}      
        \]
        The diagram commutes by reasoning as in the previous case.
        Moreover, the level property is satisfied since
        $\level(t),\level(q_4) < (2,0,0)=\level(s)$.
      \end{sssubcase}

      % subcase 3.1.1.3: j = 3      
      \begin{sssubcase}{$j=3$.}
        Then, from $s$, the algorithm prescribes
        $\xx{1,3}\mone{1}^{\tau_1}$.  The level of $r$ is $(4,1,4)$
        and, from $r$, the algorithm prescribes
        \[
        \hh{1,2,3,4}\mone{2}\mone{3}, \quad \xx{1,4}, \quad \mbox{ and
        } \quad \xx{1,3}\mone{1}^{\tau_1} .
        \]
        We complete the resulting diagram as follows.
        \[
        \begin{tikzcd}[column sep=large, row sep=large]
        s \arrow[d, Rightarrow,"\xx{1,3}\mone{1}^{\tau_1}" left] \arrow[r, "\hhdef"] & r \arrow[d, Rightarrow, "\hh{1,2,3,4}\mone{2} \mone{3}"] \\
        t  \arrow[ddr,"\xx{1,2}" swap]& q_1 \arrow[d, Rightarrow, "\xx{1,4}"]&\\      
            & q_2 \arrow[d, Rightarrow, "\xx{1,3}\mone{1}^{\tau_1}"]\\
            & q_3 
        \end{tikzcd}      
        \]
        The diagram commutes by
        \cref{rel:k31,rel:orderx,rel:orderk,rel:rename1,rel:rename2}
        since
        \begin{align*}
        \xx{1,3}\mone{1}^{\tau_1}\xx{1,4}\hh{1,2,3,4}\mone{2}\mone{3}\hh{1,2,3,4} &\approx
          \xx{1,3}\mone{1}^{\tau_1}\xx{1,4}\xx{1,4}\xx{2,3}\\
          &\approx \xx{1,3}\xx{2,3}\mone{1}^{\tau_1}\\
          &\approx \xx{1,2}\xx{1,3}\mone{1}^{\tau_1}.          
        \end{align*}
        Moreover, the level property is satisfied since
        $\level(t),\level(q_3) < (3,0,0)=\level(s)$.
      \end{sssubcase}

      % subcase 3.1.1.4: j = 4      
      \begin{sssubcase}{$j=4$.}
        Then, from $s$, the algorithm prescribes
        $\xx{1,4}\mone{1}^{\tau_1}$.  The level of $r$ is $(4,1,4)$
        and, from $r$, the algorithm prescribes
        \[
        \hh{1,2,3,4}\mone{1}^{\tau_1}\mone{2}^{\tau_1}\mone{3}^{\tau_1}\mone{4}^{\tau_1},
        \quad \mbox{ and } \quad \xx{1,4}.
        \]
        We complete the resulting diagram as follows.
        \[
        \begin{tikzcd}[column sep=large, row sep=large]
        s \arrow[d, Rightarrow,"\xx{1,4}\mone{1}^{\tau_1}" left] \arrow[r, "\hhdef"] & r \arrow[d, Rightarrow, "\hh{1,2,3,4}\mone{1}^{\tau_1}\mone{2}^{\tau_1}\mone{3}^{\tau_1}\mone{4}^{\tau_1}"] \\
        t  \arrow[dr,"\mone{1}^{\tau_1}\mone{2}^{\tau_1}\mone{3}^{\tau_1}" swap]& q_1 \arrow[d, Rightarrow, "\quad \xx{1,4}"]&\\      
            & q_2 
        \end{tikzcd}      
        \]
        The diagram commutes by
        \cref{itm:relk4,rel:orderk,rel:rename3,rel:disjoint2,rel:disjoint4}
        since
        \begin{align*}
        \xx{1,4} \hh{1,2,3,4}\mone{1}^{\tau_1}\mone{2}^{\tau_1}\mone{3}^{\tau_1}\mone{4}^{\tau_1} \hh{1,2,3,4}  &\approx \xx{1,4} \mone{1}^{\tau_1}\mone{2}^{\tau_1}\mone{3}^{\tau_1}\mone{4}^{\tau_1} \\
        &\approx \mone{1}^{\tau_1}\mone{2}^{\tau_1}\mone{3}^{\tau_1} \xx{1,4}\mone{1}^{\tau_1}.         
        \end{align*}
        Moreover, the level property is satisfied since
        $\level(t),\level(q_2) < (4,0,0)=\level(s)$ and the extent of
        $\mone{1}^{\tau_1}\mone{2}^{\tau_1}\mone{3}^{\tau_1}$ is
        strictly less than 4.
      \end{sssubcase}

      % subcase 3.1.1.5: j > 4  
      \begin{sssubcase}{$j>4$.}
        Then, from $s$, the algorithm prescribes
        $\xx{1,j}\mone{1}^{\tau_1}$.  The level of $r$ is $(j,1,4)$
        and, from $r$, the algorithm prescribes
        \[
        \hh{1,2,3,4}\mone{1}^{\tau_1}\mone{2}^{\tau_1}\mone{3}^{\tau_1}\mone{4}^{\tau_1},
        \quad \mbox{ and } \quad \xx{1,j}.
        \]
        We complete the resulting diagram as follows.
        \[
        \begin{tikzcd}[column sep=large, row sep=large]
        s \arrow[d, Rightarrow,"\xx{1,j}\mone{1}^{\tau_1}" left] \arrow[r, "\hhdef"] & r \arrow[d, Rightarrow, "\hh{1,2,3,4}\mone{1}^{\tau_1}\mone{2}^{\tau_1}\mone{3}^{\tau_1}\mone{4}^{\tau_1}"] \\
        t  \arrow[dr,"\mone{2}^{\tau_1}\mone{3}^{\tau_1}\mone{4}^{\tau_1}" swap]& q_1 \arrow[d, Rightarrow, "\quad \xx{1,j}"]&\\      
            & q_2 
        \end{tikzcd}      
        \]
        The diagram commutes by reasoning as in the previous case.
        Moreover, the level property is satisfied since
        $\level(t),\level(q_2) < (j,0,0)=\level(s)$ and the
        extent of
        $\mone{2}^{\tau_1}\mone{3}^{\tau_1}\mone{4}^{\tau_1}$ is
        strictly less than $j$.
      \end{sssubcase}      
    \end{ssubcase}

    % subcase 3.1.2: a = 2
    \begin{ssubcase}{$a = 2$.}

      % subcase 3.1.2.1: j = 2      
      \begin{sssubcase}{$j=2$.}
        Then $\tau_2=1$. Hence, from $s$, the algorithm prescribes
        $\mone{2}$. The level of $r$ is $(4,1,4)$ and, from $r$, the algorithm
        prescribes
        \[
        \hh{1,2,3,4}\mone{2}\mone{3}, \quad \xx{1,4}, \quad \xx{2,3} \quad \mbox{ and
        } \quad \mone{2}.
        \]
        We complete the resulting diagram as follows.
        \[
        \begin{tikzcd}[column sep=large, row sep=large]
        s \arrow[d, Rightarrow,"\mone{2}" left] \arrow[r, "\hhdef"] & r \arrow[d, Rightarrow, "\hh{1,2,3,4}\mone{2} \mone{3}"] \\
        t  \arrow[dddr,"\varepsilon" swap]& q_1 \arrow[d, Rightarrow, "\xx{1,4}"]&\\      
            & q_2 \arrow[d, Rightarrow, "\xx{2,3}"]\\
            & q_3 \arrow[d, Rightarrow, "\mone{2}"]\\
            & q_4 
        \end{tikzcd}      
        \]
        The diagram commutes by \cref{rel:k31,rel:orderx}
        \begin{align*}
        \mone{2}\xx{2,3}\xx{1,4}\hh{1,2,3,4}\mone{2}\mone{3}\hh{1,2,3,4} &\approx
          \mone{2}\xx{2,3}\xx{1,4}\xx{1,4}\xx{2,3}\\
          &\approx \mone{2}.
        \end{align*}
        Moreover, the level property is satisfied since
        $\level(t),\level(q_4) < (2,0,0)=\level(s)$.        
      \end{sssubcase}

      % subcase 3.1.2.2: j = 3      
      \begin{sssubcase}{$j=3$.}
        Then, from $s$, the algorithm prescribes
        $\xx{2,3}\mone{2}^{\tau_2}$. The level of $r$ is $(4,1,4)$
        and, from $r$, the algorithm prescribes
        \[
        \hh{1,2,3,4}\mone{2}\mone{3}, \quad \xx{1,4}, \quad \mbox{ and
        } \quad \mone{3}^{\tau_2}.
        \]
        We complete the resulting diagram as follows.
        \[
        \begin{tikzcd}[column sep=large, row sep=large]
        s \arrow[d, Rightarrow,"\xx{2,3}\mone{2}^{\tau_2}" left] \arrow[r, "\hhdef"] & r \arrow[d, Rightarrow, "\hh{1,2,3,4}\mone{2} \mone{3}"] \\
        t  \arrow[ddr,"\varepsilon" swap]& q_1 \arrow[d, Rightarrow, "\xx{1,4}"]&\\      
            & q_2 \arrow[d, Rightarrow, "\mone{3}^{\tau_2}"]\\
            & q_3 
        \end{tikzcd}      
        \]
        The diagram commutes by \cref{rel:k31,rel:orderx,rel:rename3}
        \begin{align*}
        \mone{3}^{\tau_2}\xx{1,4}\hh{1,2,3,4}\mone{2}\mone{3}\hh{1,2,3,4} &\approx
          \mone{3}^{\tau_2}\xx{1,4}\xx{1,4}\xx{2,3}\\
          &\approx \mone{3}^{\tau_2}\xx{2,3}\\          
          &\approx \xx{2,3}\mone{2}^{\tau_2}.          
        \end{align*}
        Moreover, the level property is satisfied since
        $\level(t),\level(q_3) < (3,0,0)=\level(s)$.                
      \end{sssubcase}

      % subcase 3.1.2.2: j = 4      
      \begin{sssubcase}{$j=4$.}
        Then, from $s$, the algorithm prescribes
        $\xx{2,4}\mone{2}^{\tau_2}$. The level of $r$ is $(4,1,4)$
        and, from $r$, the algorithm prescribes
        \[
        \hh{1,2,3,4}\mone{1}^{\tau_2}\mone{2}^{\tau_2+1}\mone{3}^{\tau_2}\mone{4}^{\tau_2+1},
        \quad \mbox{ and } \quad \xx{1,4}.
        \]
        We complete the resulting diagram as follows.
        \[
        \begin{tikzcd}[column sep=large, row sep=large]
        s \arrow[d, Rightarrow,"\xx{2,4}\mone{2}^{\tau_2}" left] \arrow[r, "\hhdef"] & r \arrow[d, Rightarrow, "\hh{1,2,3,4}\mone{1}^{\tau_2}\mone{2}^{\tau_2+1}\mone{3}^{\tau_2}\mone{4}^{\tau_2+1}"] \\
        t  \arrow[dr,"\mone{1}^{\tau_2}\mone{2}^{\tau_2}\mone{3}^{\tau_2}\xx{1,2}\xx{2,3}" swap]& q_1 \arrow[d, Rightarrow, "\xx{1,4}"]&\\      
            & q_2 \\
        \end{tikzcd}      
        \]
        The diagram commutes by
        \cref{rel:k21,rel:orderx,rel:rename1,rel:rename2,rel:rename3}.
        Indeed, when $\tau_2=0$, 
        \begin{align*}
        \xx{1,4}\hh{1,2,3,4}\mone{2}\mone{4}\hh{1,2,3,4} &\approx \xx{1,4}\xx{1,2}\xx{3,4} \\
        &\approx \xx{1,2}\xx{2,3}\xx{2,4}
        \end{align*}
        and when $\tau_2=1$
        \begin{align*}
        \xx{1,4}\hh{1,2,3,4}\mone{1}\mone{3}\hh{1,2,3,4} &\approx \xx{1,4}\xx{1,2}\xx{3,4}\mone{1}\mone{2}\mone{3}\mone{4} \\
        &\approx \mone{1}\mone{2}\mone{3}\xx{1,2}\xx{2,3}\xx{2,4}\mone{2}.
        \end{align*}
        Moreover, the level property is satisfied since
        $\level(t),\level(q_3) < (4,0,0)=\level(s)$ and the extent of
        $\mone{1}^{\tau_2}\mone{2}^{\tau_2}\mone{3}^{\tau_2}\xx{1,2}\xx{2,3}$
        is strictly less than $4$.
      \end{sssubcase}

      % subcase 3.1.2.4: j > 4      
      \begin{sssubcase}{$j>4$.}
        Then, from $s$, the algorithm prescribes
        $\xx{2,j}\mone{2}^{\tau_2}$. The level of $r$ is $(4,1,4)$
        and, from $r$, the algorithm prescribes
        \[
        \hh{1,2,3,4}\mone{1}^{\tau_2}\mone{2}^{\tau_2+1}\mone{3}^{\tau_2}\mone{4}^{\tau_2+1},
        \quad \mbox{ and } \quad \xx{1,j}.
        \]
        We complete the resulting diagram as follows.
        \[
        \begin{tikzcd}[column sep=large, row sep=large]
        s \arrow[d, Rightarrow,"\xx{2,j}\mone{2}^{\tau_2}" left] \arrow[r, "\hhdef"] & r \arrow[d, Rightarrow, "\hh{1,2,3,4}\mone{1}^{\tau_2}\mone{2}^{\tau_2+1}\mone{3}^{\tau_2}\mone{4}^{\tau_2+1}"] \\
        t  \arrow[dr,"\mone{2}^{\tau_2}\mone{3}^{\tau_2}\mone{4}^{\tau_2}\xx{1,2}\xx{3,4}" swap]& q_1 \arrow[d, Rightarrow, "\xx{1,j}"]&\\      
            & q_2 \\
        \end{tikzcd}      
        \]
        The diagram commutes by \cref{rel:k21,rel:orderx,rel:rename1,rel:rename2,rel:rename3}.
        Indeed, when $\tau_2=0$, 
        \begin{align*}
        \xx{1,j}\hh{1,2,3,4}\mone{2}\mone{4}\hh{1,2,3,4} &\approx \xx{1,j}\xx{1,2}\xx{3,4} \\
        &\approx \xx{2,j}\xx{1,2}\xx{3,4}
        \end{align*}
        and when $\tau_2=1$
        \begin{align*}
        \xx{1,j}\hh{1,2,3,4}\mone{1}\mone{3}\hh{1,2,3,4} &\approx \xx{1,j}\xx{1,2}\xx{3,4}\mone{1}\mone{2}\mone{3}\mone{4} \\
        &\approx \mone{2}\mone{3}\mone{4}\xx{1,2}\xx{3,4}\xx{2,j}\mone{2}.
        \end{align*}
        Moreover, the level property is satisfied since
        $\level(t),\level(q_3) < (j,0,0)=\level(s)$ and the extent of
        $\mone{2}^{\tau_2}\mone{3}^{\tau_2}\mone{4}^{\tau_2}\xx{1,2}\xx{3,4}$
        is strictly less than $j$.
      \end{sssubcase}      
    \end{ssubcase}

    % subcase 3.1.3: a = 3    
    \begin{ssubcase}{$a = 3$.}

      % subcase 3.1.3.1: j = 3      
      \begin{sssubcase}{$j=3$.}
        Then $\tau_a=1$. Hence, from $s$, the algorithm prescribes
        $\mone{3}$. The level of $r$ is $(4,1,4)$ and, from $r$, the
        algorithm prescribes
        \[
        \hh{1,2,3,4}\mone{2}\mone{3} \quad \xx{1,4} \quad
        \mbox{ and } \quad \xx{2,3}\mone{2}.
        \]
        We complete the resulting diagram as follows.
        \[
        \begin{tikzcd}[column sep=large, row sep=large]
        s \arrow[d, Rightarrow,"\mone{3}" left] \arrow[r, "\hhdef"] & r \arrow[d, Rightarrow, "\hh{1,2,3,4}\mone{2} \mone{3}"] \\
        t  \arrow[ddr,"\varepsilon" swap]& q_1 \arrow[d, Rightarrow, "\xx{1,4}"]&\\      
            & q_2 \arrow[d, Rightarrow, "\xx{2,3}\mone{2}"]\\
            & q_3 \\        
        \end{tikzcd}      
        \]
        The diagram commutes by \cref{rel:k31,rel:orderx,rel:rename3}
        \begin{align*}
        \xx{2,3}\mone{2}\xx{1,4}\hh{1,2,3,4}\mone{2}\mone{3}\hh{1,2,3,4} &\approx \xx{2,3}\mone{2}\xx{1,4}\xx{1,4}\xx{2,3}\\
        &\approx \mone{3}.
        \end{align*}
        Moreover, the level property is satisfied since
        $\level(t),\level(q_3)<(3,0,0)=\level(s)$.
      \end{sssubcase}

      % subcase 3.1.3.2: j = 4            
      \begin{sssubcase}{$j= 4$.}
        Then, from $s$, the algorithm prescribes
        $\xx{3,4}\mone{3}^{\tau_3}$. The level of $r$ is $(4,1,4)$
        and, from $r$, the algorithm prescribes
        \[
        \hh{1,2,3,4}\mone{1}^{\tau_3} \mone{2}^{\tau_3}
        \mone{3}^{\tau_3+1} \mone{4}^{\tau_3+1} \quad \mbox{ and } \xx{1,4}.
        \]
        We complete the resulting diagrams as follows.
        \[
        \begin{tikzcd}[column sep=large, row sep=large]
        s \arrow[d, Rightarrow,"\xx{3,4}\mone{3}^{\tau_3}" left] \arrow[r, "\hhdef"] & r \arrow[d, Rightarrow, "\hh{1,2,3,4}\mone{1}^{\tau_3} \mone{2}^{\tau_3}
        \mone{3}^{\tau_3+1} \mone{4}^{\tau_3+1}"] \\
        t_1 \arrow[d,"\mone{1}^{\tau_3}\mone{2}^{\tau_3}\mone{3}^{\tau_3}" left]& q_1 \arrow[d, Rightarrow, "\xx{1,4}"]&\\      
        t_2 \arrow[r,"\xx{1,3}\xx{2,3}" swap]  & q_2
        \end{tikzcd}      
        \]
        The diagram commutes by
        \cref{rel:rename1,rel:rename2,rel:orderk,rel:k11,rel:k12,rel:disjoint2,rel:disjoint4}.
        Indeed, when $\tau_3=0$, 
        \begin{align*}
        \xx{1,4}\hh{1,2,3,4}\mone{3}\mone{4}\hh{1,2,3,4} &\approx \xx{1,4}\xx{1,3}\xx{2,4} \\
        &\approx \xx{1,3}\xx{2,3}\xx{3,4}
        \end{align*}
        and when $\tau_3=1$
        \begin{align*}
        \xx{1,4}\hh{1,2,3,4}\mone{1}\mone{2}\hh{1,2,3,4} &\approx \xx{1,4}\xx{1,3}\xx{2,4}\mone{1}\mone{2}\mone{3}\mone{4} \\
        &\approx \xx{1,3}\xx{2,3}\mone{1}\mone{2}\mone{3}\xx{3,4}\mone{3}.
        \end{align*}
        Moreover, the level property is satisfied since
        $\level(t_1),\level(q_2)<(4,0,0)=\level(s)$ and the extent of
        $\xx{1,3}\xx{2,3}\mone{1}^{\tau_3}\mone{2}^{\tau_3}\mone{3}^{\tau_3}$
        is strictly less than $4$.
      \end{sssubcase}    
    
      % subcase 3.1.3.3: j > 4            
      \begin{sssubcase}{$j>4$.}
        Then, from $s$, the algorithm prescribes
        $\xx{3,j}\mone{3}^{\tau_3}$. The level of $r$ is $(4,1,4)$
        and, from $r$, the algorithm prescribes
        \[
        \hh{1,2,3,4}\mone{1}^{\tau_3} \mone{2}^{\tau_3}
        \mone{3}^{\tau_3+1} \mone{4}^{\tau_3+1} \quad \mbox{ and }
        \quad \xx{1,j}.
        \]
        We complete the resulting diagrams as follows.
        \[
        \begin{tikzcd}[column sep=large, row sep=large]
        s \arrow[d, Rightarrow,"\xx{3,j}\mone{3}^{\tau_3}" left] \arrow[r, "\hhdef"] & r \arrow[d, Rightarrow, "\hh{1,2,3,4}\mone{1}^{\tau_3} \mone{2}^{\tau_3}
        \mone{3}^{\tau_3+1} \mone{4}^{\tau_3+1}"] \\
        t_1 \arrow[d,"\mone{1}^{\tau_3}\mone{2}^{\tau_3}\mone{4}^{\tau_3}" left]& q_1 \arrow[d, Rightarrow, "\xx{1,j}"]&\\      
        t_2 \arrow[r,"\xx{1,3}\xx{2,4}" swap]  & q_2
        \end{tikzcd}      
        \]
        The diagram commutes by
        \cref{rel:rename1,rel:rename2,rel:orderk,rel:k11,rel:k12,rel:disjoint2,rel:disjoint4}.
        Indeed, when $\tau_3=0$, 
        \begin{align*}
        \xx{1,j}\hh{1,2,3,4}\mone{3}\mone{4}\hh{1,2,3,4} &\approx \xx{1,j}\xx{1,3}\xx{2,4} \\
        &\approx \xx{1,3}\xx{2,4}\xx{3,j}
        \end{align*}
        and when $\tau_3=1$
        \begin{align*}
        \xx{1,j}\hh{1,2,3,4}\mone{1}\mone{2}\hh{1,2,3,4} &\approx \xx{1,j}\xx{1,3}\xx{2,4}\mone{1}\mone{2}\mone{3}\mone{4} \\
        &\approx \xx{1,3}\xx{2,4}\mone{1}\mone{2}\mone{4}\xx{3,j}\mone{3}.
        \end{align*}
        Moreover, the level property is satisfied since
        $\level(t_1),\level(q_2)<(j,0,0)=\level(s)$ and the extent of
        $\xx{1,3}\xx{2,4}\mone{1}^{\tau_3}\mone{2}^{\tau_3}\mone{4}^{\tau_3}$
        is strictly less than $j$.
      \end{sssubcase}      
    \end{ssubcase}             

    % subcase 3.1.3: a = 4    
    \begin{ssubcase}{$a = 4$.}

      % subcase 3.1.3.1: j = 4            
      \begin{sssubcase}{$j=4$.}      
        Then, $\tau_a=1$. Hence, from $s$, the algorithm prescribes
        $\mone{4}$. The level of $r$ is $(4,1,4)$ and, from $r$, the
        algorithm prescribes
        \[
        \hh{1,2,3,4}\mone{1} \mone{4} \quad \mbox{ and } \quad
        \xx{1,j}.
        \]
        We complete the resulting diagram as follows.
        \[
        \begin{tikzcd}[column sep=large, row sep=large]
        s \arrow[d, Rightarrow,"\mone{4}" left] \arrow[r, "\hhdef"] & r \arrow[d, Rightarrow, "\hh{1,2,3,4}\mone{1} \mone{4}"] \\
        t_1 \arrow[d, "\mone{1}\mone{2}\mone{3}" left]& q_1 \arrow[d, Rightarrow, "\xx{1,4}"]&\\
        t_2 \arrow[r,"\xx{2,3}" swap] & q_2
        \end{tikzcd}
        \]
        The diagram commutes by \cref{rel:k32,rel:orderx}
        \begin{align*}
        \xx{1,4}\hh{1,2,3,4}\mone{1}\mone{4}\hh{1,2,3,4} &\approx \xx{1,4}\xx{1,4}\xx{2,3}\mone{1}\mone{2}\mone{3}\mone{4} \\
        &\approx \xx{2,3}\mone{1}\mone{2}\mone{3}\mone{4}.
        \end{align*}
        Moreover, the level property is satisfied since
        $\level(t_1),\level(q_2)<(4,0,0)=\level(s)$ and the extent of
        $\xx{2,3}\mone{1}\mone{2}\mone{3}$ is strictly less than $4$.
      \end{sssubcase}

      % subcase 3.1.3.2: j > 4            
      \begin{sssubcase}{$j>4$.}              
        Then, from $s$, the algorithm prescribes
        $\xx{4,j}\mone{4}^{\tau_4}$. The level of $r$ is $(4,1,4)$
        and, from $r$, the algorithm prescribes
        \[
        \hh{1,2,3,4}\mone{1}^{\tau_4} \mone{2}^{\tau_4+1}
        \mone{3}^{\tau_4+1} \mone{4}^{\tau_4} \quad \mbox{ and } \quad
        \xx{1,j}.
        \]
        We complete the resulting diagrams as follows.
        \[
        \begin{tikzcd}[column sep=large, row sep=large]
        s \arrow[d, Rightarrow,"\xx{4,j}\mone{4}^{\tau_4}" left] \arrow[r, "\hhdef"] & r \arrow[d, Rightarrow, "\hh{1,2,3,4}\mone{1}^{\tau_4} \mone{2}^{\tau_4+1}
        \mone{3}^{\tau_4+1} \mone{4}^{\tau_4}"] \\
        t_1 \arrow[d,"\mone{1}\mone{2}\mone{3}" left]& q_1 \arrow[d, Rightarrow, "\xx{1,j}"]&\\      
        t_2 \arrow[r,"\xx{2,3}\xx{1,4}" swap]  & q_2
        \end{tikzcd}      
        \]
        The diagram commutes by
        \cref{rel:rename1,rel:rename2,rel:orderk,rel:k31,rel:k32,rel:disjoint2,rel:disjoint4}.
        Indeed, when $\tau_4=0$, 
        \begin{align*}
        \xx{1,j}\hh{1,2,3,4}\mone{2}\mone{3}\hh{1,2,3,4} &\approx \xx{1,j}\xx{1,4}\xx{2,3} \\
        &\approx \xx{2,3}\xx{1,4}\xx{4,j}
        \end{align*}
        and when $\tau_4=1$
        \begin{align*}
        \xx{1,j}\hh{1,2,3,4}\mone{1}\mone{4}\hh{1,2,3,4} &\approx \xx{1,j}\xx{1,4}\xx{2,3}\mone{1}\mone{2}\mone{3}\mone{4} \\
        &\approx \xx{2,3}\xx{1,4}\mone{1}\mone{2}\mone{3}\xx{4,j}\mone{4}.
        \end{align*}
        Moreover, the level property is satisfied since
        $\level(t_1),\level(q_2)<(j,0,0)=\level(s)$ and the extent of
        $\xx{2,3}\xx{1,4}\mone{1}\mone{2}\mone{3}$ is strictly less
        than $j$.
      \end{sssubcase}        
    \end{ssubcase}        

    % subcase 3.1.4: a > 4        
    \begin{ssubcase}{$a>4$.}

      % subcase 3.1.4.1: j = a                  
      \begin{sssubcase}{$j=a$.}
        Then, $\tau_a=1$ and $v_r=v_s$. Hence, $\level(r)=\level(s)$
        and, from both $s$ and $r$, the algorithm prescribes
        $\mone{a}$. We complete the resulting diagram as follows.
        \[
        \begin{tikzcd}[column sep=large, row sep=large]
          s \arrow[d, Rightarrow,"\mone{a}" left] \arrow[r, "\hhdef"] & r \arrow[d, Rightarrow, "\mone{a}"] \\
          t \arrow[r, "\hhdef" swap] & q
        \end{tikzcd}
        \]
        The diagram commutes by \cref{rel:disjoint5}. And the level property is satisfied since $\level(t),
        \level(q)<\level(s)$.
      \end{sssubcase}

      % subcase 3.1.4.2: j > a                  
      \begin{sssubcase}{$j>a$.}
        Then, $v_r=v_s$. Hence $\level(r)=\level(s)$ and, from both
        $s$ and $r$, the algorithm prescribes
        $\xx{a,j}\mone{a}^{\tau_a}$. We complete the resulting
        diagrams as follows.
        \[
        \begin{tikzcd}[column sep=large, row sep=large]
          s \arrow[d, Rightarrow,"\xx{a,j}\mone{a}^\tau" left] \arrow[r, "\hhdef"] & r \arrow[d, Rightarrow, "\xx{a,j}\mone{a}^\tau~~"] \\
          t \arrow[r, "\hhdef" swap]  & q
        \end{tikzcd}
        \]
        The diagram commutes by
        \cref{rel:disjoint3,rel:disjoint5}.
        Moreover, the level property is satisfied because
        $\level(t),\level(q)<\level(s)$.
      \end{sssubcase}
    \end{ssubcase}
  \end{subcase}

  % subcase 3.2: k > 0
  \begin{subcase}{$k>0$.}
    Let $u = 2^k v_s$ and let $a,b,c,d$ be the indices of the first
    four odd entries of $u$. In this case, $N$ is of the form
    \[
    \hh{a,b,c,d}\mone{a}^{\tau_a} \mone{b}^{\tau_b}
    \mone{c}^{\tau_c}\mone{d}^{\tau_d},
    \]
    where $\tau_a, \tau_b, \tau_c, \tau_d \in \Z_2$. We have
    $|\s{a,b,c,d} \cap \s{1,2,3,4}| \in \s{0,1,2,3,4}$. We consider
    each one of these cases in turn.

    % subcase 3.2.1: |{a,b,c,d} cap {1,2,3,4}| = 0    
    \begin{ssubcase}{$|\s{a,b,c,d} \cap \s{1,2,3,4}| = 0$.}
      Then $5 \leq a < b < c < d$ so that $u_1 \equiv u_2 \equiv u_3
      \equiv u_4 \equiv 0 \pmod{2}$. Write $\overline{u}$ for the
      vector composed of the first four entries of $u$. Then since all
      of the entries of $\overline{u}$ are even and since the square
      of even number is either $0$ or $4$ modulo 8, we have
      $\overline{u}^\intercal \overline{u}\equiv 0 \pmod{8}$ or
      $\overline{u}^\intercal \overline{u}\equiv 4 \pmod{8}$. We
      consider both of these cases in turn.
              
        % subcase 3.2.1.1: ||u4|| = 0 (mod 8)
        \begin{sssubcase}{$\overline{u}^\intercal \overline{u}\equiv 0 \pmod{8}$.}
          Then, by \cref{prop:kevens1}, the first four entries of the
          integral part of the pivot column of $r$ are even. Hence
          $\level(r)=\level(s)$ and, from $r$, the algorithm
          prescribes
          \[
          \hh{a,b,c,d}\mone{a}^{\tau_a} \mone{b}^{\tau_b}
          \mone{c}^{\tau_c}\mone{d}^{\tau_d}.
          \]
          We complete the resulting diagram as follows.
          \[
          \begin{tikzcd}[column sep=large] 
            s \arrow[d, Rightarrow, "\hh{a,b,c,d}\mone{a}^{\tau_a}\mone{b}^{\tau_b}\mone{c}^{\tau_c}\mone{d}^{\tau_d}" left]
            \arrow[r, "\hh{1,2,3,4}"]& 
            r \arrow[d, Rightarrow, "\hh{a,b,c,d}\mone{a}^{\tau_a}\mone{b}^{\tau_b}\mone{c}^{\tau_c}\mone{d}^{\tau_d}" right]\\
            t\arrow[r, "\hh{1,2,3,4}" swap] & q
          \end{tikzcd}
          \]
          The diagram commutes by \cref{rel:disjoint5,rel:disjoint6}
          and the level property is satisfied since
          $\level(t)<\level(s)$ and $\level(q)<\level(r)=\level(s)$.
        \end{sssubcase}
              
        % subcase 3.2.1.2: ||u4|| = 4 (mod 8)
        \begin{sssubcase}{$\overline{u}^\intercal \overline{u}\equiv 4 \pmod{8}$.}
          Then, by \cref{prop:kevens2}, the first four entries of the
          pivot column of $r$ are odd. Moreover, by \cref{prop:kevens2}
          evenly many of these entries are congruent to 1 modulo
          4. Hence $\level(r)=(j,k,m+1)$ and, from $r$, the algorithm
          prescribes
          \[
          \hh{1,2,3,4}\mone{1}^{\tau_1'} \mone{2}^{\tau_2'}
          \mone{3}^{\tau_3'}\mone{4}^{\tau_4'} \quad \mbox{and} \quad
          \hh{a,b,c,d}\mone{a}^{\tau_a}\mone{b}^{\tau_b}
          \mone{c}^{\tau_c}\mone{d}^{\tau_d}
          \]
          for some $\tau_1', \tau_2', \tau_3', \tau_4' \in \Z_2$ such
          that evenly many of $\tau_1', \tau_2', \tau_3', \tau_4'$ are
          even. As result, by \cref{prop:kevenk}, there is a word $W$
          over $\s{X, (-1)}$ such that $\extent(W)\leq 4$ and
          \[
          \hh{1,2,3,4}\mone{1}^{\tau_1'} \mone{2}^{\tau_2'}
          \mone{3}^{\tau_3'}\mone{4}^{\tau_4'} \hh{1,2,3,4} \approx W.
          \]
          We complete the diagram as follows.
          \[
          \begin{tikzcd}[column sep=large]
          s \arrow[d, Rightarrow, "\hh{a,b,c,d}\mone{a}^{\tau_a}\mone{b}^{\tau_b}\mone{c}^{\tau_c}\mone{d}^{\tau_d}",swap] \arrow[r, "\hh{1,2,3,4}"] & r \arrow[d, Rightarrow,
            "\hh{1,2,3,4}\mone{1}^{\tau'_1}\mone{2}^{\tau'_2}\mone{3}^{\tau'_3}\mone{4}^{\tau'_4}"
            right]\\
          t \arrow[dr, "W", swap] & q_1 \arrow[d, Rightarrow, "\hh{a,b,c,d}\mone{a}^{\tau_a}\mone{b}^{\tau_b}
          \mone{c}^{\tau_c}\mone{d}^{\tau_d}" right] \\
          & q_2
          \end{tikzcd}
          \]
          The diagram commutes by \cref{rel:disjoint3,rel:disjoint5},
          since
          \[
          \hh{1,2,3,4}\mone{1}^{\tau_1'} \mone{2}^{\tau_2'}
          \mone{3}^{\tau_3'}\mone{4}^{\tau_4'} \hh{1,2,3,4} \approx W.
          \]
          Moreover, the level property is satisfied since
          $\level(t),\level(q_2)<\level(q_1)=\level(s)$ and the level
          of $t$ is invariant under the action of $W$, because $W$ is a
          word over $\s{X,(-1)}$ and $\extent(W)<a$.
        \end{sssubcase}
    \end{ssubcase}

    % subcase 3.2.2: |{a,b,c,d} cap {1,2,3,4}| = 1
    \begin{ssubcase}{$|\s{a,b,c,d} \cap \s{1,2,3,4}| = 1$.}
      Then $1 \leq a \leq 4$ and $5 \leq b < c < d$. We now consider
      the cases $a=1$, $a=2$, $a=3$, and $a=4$ in turn.

        % subcase 3.2.2.1: a = 1
        \begin{sssubcase}{$a=1$.}
          Then, from $s$, the algorithm prescribes
          \[
          \hh{1,b,c,d}\mone{1}^{\tau_1}\mone{b}^{\tau_b}\mone{c}^{\tau_c}\mone{d}^{\tau_d}.
          \]
          Moreover, by \cref{prop:koddodds}, $\level(r)=(j,k+1,1)$
          and, writing $\overline{r}$ for the first four entries of
          the integral part of $r$, we have $\overline{r}\equiv 1111
          \pmod{4}$ or $\overline{r}\equiv 3333 \pmod{4}$. Hence, from
          $r$ the algorithm prescribes
          \[
          \hh{1,2,3,4}\mone{1}^{\tau}\mone{2}^{\tau}\mone{3}^{\tau}\mone{4}^{\tau}
          \]
          where the value of $\tau$ depends on whether
          $\overline{r}\equiv 1111 \pmod{4}$ or $\overline{r}\equiv
          3333 \pmod{4}$. Now, since
          \[
          \hh{1,2,3,4}\mone{1}^{\tau}\mone{2}^{\tau} \mone{3}^{\tau}
          \mone{4}^{\tau}\hh{1,2,3,4}\approx
          \mone{1}^{\tau}\mone{2}^{\tau}\mone{3}^{\tau}\mone{4}^{\tau},
          \]
          by \cref{itm:relk4}, we know that from $q_1=
          (\hh{1,2,3,4}\mone{1}^{\tau}\mone{2}^{\tau}\mone{3}^{\tau}\mone{4}^{\tau})
          r$ the algorithm prescribes
          \[
          \hh{1,b,c,d}\mone{1}^{\tau_1+\tau}\mone{b}^{\tau_b}\mone{c}^{\tau_c}\mone{d}^{\tau_d}.
          \]
          We therefore complete the resulting diagram as follows.
          \[
          \begin{tikzcd}[column sep=large]
          s \arrow[d, Rightarrow, "\hh{1,b,c,d}\mone{1}^{\tau_1}\mone{b}^{\tau_b}\mone{c}^{\tau_c}\mone{d}^{\tau_d}" left] \arrow[r, "\hh{1,2,3,4}"] & r \arrow[d, Rightarrow, "\hh{1,2,3,4}\mone{1}^{\tau}\mone{2}^{\tau}\mone{3}^{\tau}\mone{4}^{\tau}" right]\\
          t\arrow[dr, "\mone{2}^\tau\mone{3}^\tau\mone{4}^\tau", swap] & q_1 \arrow[d, Rightarrow, "\hh{1,b,c,d}\mone{1}^{\tau_1+\tau}\mone{b}^{\tau_b}\mone{c}^{\tau_c}\mone{d}^{\tau_d}" right] \\
          & q_2
          \end{tikzcd}
          \]
          The diagram commutes by
          \cref{itm:relk4,rel:orderk,rel:ordermone,rel:disjoint4,rel:disjoint5}
          \begin{align*}
            \begin{split}
\hh{1,b,c,d}&\mone{1}^{\tau_1+\tau}\mone{b}^{\tau_b}\mone{c}^{\tau_c}\mone{d}^{\tau_d}
          \hh{1,2,3,4}\mone{1}^{\tau}\mone{2}^{\tau} \mone{3}^{\tau} \mone{4}^{\tau}\hh{1,2,3,4} \\             
          &\approx \hh{1,b,c,d}\mone{1}^{\tau_1+\tau}\mone{b}^{\tau_b}\mone{c}^{\tau_c}\mone{d}^{\tau_d}
          \mone{1}^{\tau}\mone{2}^{\tau} \mone{3}^{\tau}
          \mone{4}^{\tau} \\
          &\approx \mone{2}^{\tau} \mone{3}^{\tau}
          \mone{4}^{\tau}\hh{1,b,c,d}\mone{1}^{\tau_1}\mone{b}^{\tau_b}\mone{c}^{\tau_c}\mone{d}^{\tau_d}. 
            \end{split}
          \end{align*}
          Moreover, the level property is satisfied
          since $\level(t)<\level(s)$, $\level(q_2) < \level (q_1) =
          \level(s)$ and $\mone{2}^\tau\mone{3}^\tau\mone{4}^\tau$
          cannot increase the number of odd entries.
        \end{sssubcase}

        % subcase 3.2.2.2: a = 2
        \begin{sssubcase}{$a=2$.}
          Then, from $s$, the algorithm prescribes
          \[
          \hh{2,b,c,d}\mone{2}^{\tau_2}\mone{b}^{\tau_b}\mone{c}^{\tau_c}\mone{d}^{\tau_d}.
          \]
          Moreover, by \cref{prop:koddodds}, $\level(r)=(j,k+1,1)$
          and, writing $\overline{r}$ for the first four entries of
          the integral part of $r$, we have $\overline{r}\equiv 1313
          \pmod{4}$ or $\overline{r}\equiv 3131 \pmod{4}$. Hence, from
          $r$ the algorithm prescribes
          \[
          \hh{1,2,3,4}\mone{1}^{\tau}\mone{2}^{\tau+1}\mone{3}^{\tau}\mone{4}^{\tau+1}
          \]
          where the value of $\tau$ depends on whether
          $\overline{r}\equiv 1313 \pmod{4}$ or $\overline{r}\equiv
          3131 \pmod{4}$.  Now, since
          \[
          \hh{1,2,3,4}\mone{1}^{\tau}\mone{2}^{\tau+1} \mone{3}^{\tau}
          \mone{4}^{\tau+1}\hh{1,2,3,4}\approx
          \xx{1,2}\xx{3,4}\mone{1}^{\tau}\mone{2}^{\tau}\mone{3}^{\tau}\mone{4}^{\tau},
          \]
          by \cref{rel:k22,rel:k21}, we know that from $q_1=
          (\hh{1,2,3,4}\mone{1}^{\tau}\mone{2}^{\tau+1}\mone{3}^{\tau}\mone{4}^{\tau+1})
          r$ the algorithm prescribes
          \[
          \hh{1,b,c,d}\mone{1}^{\tau_2+\tau}\mone{b}^{\tau_b}\mone{c}^{\tau_c}\mone{d}^{\tau_d}.
          \]
          We therefore complete the resulting diagram as follows.
          \[
          \begin{tikzcd}[column sep=large]
          s \arrow[d, Rightarrow, "\hh{2,b,c,d}\mone{2}^{\tau_2}\mone{b}^{\tau_b}\mone{c}^{\tau_c}\mone{d}^{\tau_d}" left] \arrow[r, "\hh{1,2,3,4}"] & r \arrow[d, Rightarrow, "\hh{1,2,3,4}\mone{1}^{\tau}\mone{2}^{\tau+1}\mone{3}^{\tau}\mone{4}^{\tau+1}" right]\\
          t\arrow[dr, "\xx{1,2}\xx{3,4}\mone{1}^{\tau}\mone{3}^\tau\mone{4}^{\tau}", swap] & q_1 \arrow[d, Rightarrow, "\hh{1,b,c,d}\mone{1}^{\tau_2+\tau}\mone{b}^{\tau_b}\mone{c}^{\tau_c}\mone{d}^{\tau_d}" right] \\
          & q_2
          \end{tikzcd}
          \]
          The diagram commutes by
          \cref{rel:k22,rel:k21,rel:orderk,rel:ordermone,rel:disjoint2,rel:disjoint3,rel:disjoint4,rel:disjoint5,rel:rename3,rel:rename4}
          \begin{align*}
            \begin{split}
\hh{1,b,c,d}&\mone{1}^{\tau_2+\tau}\mone{b}^{\tau_b}\mone{c}^{\tau_c}\mone{d}^{\tau_d}
          \hh{1,2,3,4}\mone{1}^{\tau}\mone{2}^{\tau+1} \mone{3}^{\tau} \mone{4}^{\tau+1}\hh{1,2,3,4} \\             
          &\approx \hh{1,b,c,d}\mone{1}^{\tau_2+\tau}\mone{b}^{\tau_b}\mone{c}^{\tau_c}\mone{d}^{\tau_d}
           \xx{1,2}\xx{3,4}\mone{1}^{\tau}\mone{2}^{\tau}\mone{3}^{\tau}\mone{4}^{\tau}\\
          &\approx \xx{1,2}\xx{3,4}\mone{1}^{\tau} \mone{3}^{\tau}
          \mone{4}^{\tau}\hh{2,b,c,d}\mone{2}^{\tau_2}\mone{b}^{\tau_b}\mone{c}^{\tau_c}\mone{d}^{\tau_d}. 
            \end{split}
          \end{align*}
          Moreover, the level property is satisfied since
          $\level(t)<\level(s)$, $\level(q_2) < \level (q_1) =
          \level(s)$ and
          $\xx{1,2}\xx{3,4}\mone{1}^\tau\mone{2}^\tau\mone{3}^\tau\mone{4}^\tau$
          cannot increase the number of odd entries .
        \end{sssubcase}
        
        % subcase 3.2.2.3: a = 3
        \begin{sssubcase}{$a=3$.}
          Then, from $s$, the algorithm prescribes
          \[
          \hh{3,b,c,d}\mone{3}^{\tau_3}\mone{b}^{\tau_b}\mone{c}^{\tau_c}\mone{d}^{\tau_d}.
          \]
          Moreover, by \cref{prop:koddodds}, $\level(r)=(j,k+1,1)$
          and, writing $\overline{r}$ for the first four entries of
          the integral part of $r$, we have $\overline{r}\equiv 1133
          \pmod{4}$ or $\overline{r}\equiv 3311 \pmod{4}$. Hence, from
          $r$ the algorithm prescribes
          \[
          \hh{1,2,3,4}\mone{1}^{\tau}\mone{2}^{\tau}\mone{3}^{\tau+1}\mone{4}^{\tau+1}
          \]
          where the value of $\tau$ depends on whether
          $\overline{r}\equiv 1133 \pmod{4}$ or $\overline{r}\equiv
          3311 \pmod{4}$.  Now, since
          \[
          \hh{1,2,3,4}\mone{1}^{\tau}\mone{2}^{\tau} \mone{3}^{\tau+1}
          \mone{4}^{\tau+1}\hh{1,2,3,4}\approx
          \xx{1,3}\xx{2,4}\mone{1}^{\tau}\mone{2}^{\tau}\mone{3}^{\tau}\mone{4}^{\tau},
          \]
          by \cref{rel:k12,rel:k11}, we know that from $q_1=
          (\hh{1,2,3,4}\mone{1}^{\tau}\mone{2}^{\tau}\mone{3}^{\tau+1}\mone{4}^{\tau+1})
          r$ the algorithm prescribes
          \[
          \hh{1,b,c,d}\mone{1}^{\tau_3+\tau}\mone{b}^{\tau_b}\mone{c}^{\tau_c}\mone{d}^{\tau_d}.
          \]
          We therefore complete the resulting diagram as follows.
          \[
          \begin{tikzcd}[column sep=large]
          s \arrow[d, Rightarrow, "\hh{3,b,c,d}\mone{3}^{\tau_3}\mone{b}^{\tau_b}\mone{c}^{\tau_c}\mone{d}^{\tau_d}" left] \arrow[r, "\hh{1,2,3,4}"] & r \arrow[d, Rightarrow, "\hh{1,2,3,4}\mone{1}^{\tau}\mone{2}^{\tau}\mone{3}^{\tau+1}\mone{4}^{\tau+1}" right]\\
          t\arrow[dr, "\xx{1,3}\xx{2,4}\mone{1}^{\tau}\mone{2}^\tau\mone{4}^{\tau}", swap] & q_1 \arrow[d, Rightarrow, "\hh{1,b,c,d}\mone{1}^{\tau_3+\tau}\mone{b}^{\tau_b}\mone{c}^{\tau_c}\mone{d}^{\tau_d}" right] \\
          & q_2
          \end{tikzcd}
          \]
          The diagram commutes by
          \cref{rel:k12,rel:k11,rel:orderk,rel:ordermone,rel:disjoint2,rel:disjoint3,rel:disjoint4,rel:disjoint5,rel:rename3,rel:rename4}
          \begin{align*}
            \begin{split}
            \hh{1,b,c,d}&\mone{1}^{\tau_3+\tau}\mone{b}^{\tau_b}\mone{c}^{\tau_c}\mone{d}^{\tau_d}
            \hh{1,2,3,4}\mone{1}^{\tau}\mone{2}^{\tau} \mone{3}^{\tau+1} \mone{4}^{\tau+1}\hh{1,2,3,4} \\             
            &\approx \hh{1,b,c,d}\mone{1}^{\tau_3+\tau}\mone{b}^{\tau_b}\mone{c}^{\tau_c}\mone{d}^{\tau_d}
            \xx{1,3}\xx{2,4}\mone{1}^{\tau}\mone{2}^{\tau}\mone{3}^{\tau}\mone{4}^{\tau}\\
            &\approx \xx{1,3}\xx{2,4}\mone{1}^{\tau} \mone{2}^{\tau}
            \mone{4}^{\tau}\hh{3,b,c,d}\mone{3}^{\tau_3}\mone{b}^{\tau_b}\mone{c}^{\tau_c}\mone{d}^{\tau_d}. 
            \end{split}
          \end{align*}
          Moreover, the level property is satisfied since
          $\level(t)<\level(s)$, $\level(q_2) < \level (q_1) =
          \level(s)$ and $\xx{1,3}\xx{2,4}\mone{1}^{\tau}
          \mone{2}^{\tau} \mone{4}^{\tau}$ cannot increase the number
          of odd entries.
        \end{sssubcase}

        % subcase 3.2.2.4: a = 4
        \begin{sssubcase}{$a=4$.}
          Then, from $s$, the algorithm prescribes
          \[
          \hh{4,b,c,d}\mone{4}^{\tau_4}\mone{b}^{\tau_b}\mone{c}^{\tau_c}\mone{d}^{\tau_d}.
          \]
          Moreover, by \cref{prop:koddodds}, $\level(r)=(j,k+1,1)$
          and, writing $\overline{r}$ for the first four entries of
          the integral part of $r$, we have $\overline{r}\equiv 1331
          \pmod{4}$ or $\overline{r}\equiv 3113 \pmod{4}$. Hence, from
          $r$ the algorithm prescribes
          \[
          \hh{1,2,3,4}\mone{1}^{\tau}\mone{2}^{\tau+1}\mone{3}^{\tau+1}\mone{4}^{\tau}
          \]
          where the value of $\tau$ depends on whether
          $\overline{r}\equiv 1331 \pmod{4}$ or $\overline{r}\equiv
          3113 \pmod{4}$.  Now, since
          \[
          \hh{1,2,3,4}\mone{1}^{\tau}\mone{2}^{\tau+1} \mone{3}^{\tau+1}
          \mone{4}^{\tau}\hh{1,2,3,4}\approx
          \xx{1,4}\xx{2,3}\mone{1}^{\tau}\mone{2}^{\tau}\mone{3}^{\tau}\mone{4}^{\tau},
          \]
          by \cref{rel:k32,rel:k31}, we know that from $q_1=
          (\hh{1,2,3,4}\mone{1}^{\tau}\mone{2}^{\tau+1}\mone{3}^{\tau+1}\mone{4}^{\tau})
          r$ the algorithm prescribes
          \[
          \hh{1,b,c,d}\mone{1}^{\tau_4+\tau}\mone{b}^{\tau_b}\mone{c}^{\tau_c}\mone{d}^{\tau_d}.
          \]
          We therefore complete the resulting diagram as follows.
          \[
          \begin{tikzcd}[column sep=large]
          s \arrow[d, Rightarrow, "\hh{4,b,c,d}\mone{4}^{\tau_4}\mone{b}^{\tau_b}\mone{c}^{\tau_c}\mone{d}^{\tau_d}" left] \arrow[r, "\hh{1,2,3,4}"] & r \arrow[d, Rightarrow, "\hh{1,2,3,4}\mone{1}^{\tau}\mone{2}^{\tau+1}\mone{3}^{\tau+1}\mone{4}^{\tau}" right]\\
          t\arrow[dr, "\xx{1,4}\xx{2,3}\mone{1}^{\tau}\mone{2}^\tau\mone{3}^{\tau}", swap] & q_1 \arrow[d, Rightarrow, "\hh{1,b,c,d}\mone{1}^{\tau_4+\tau}\mone{b}^{\tau_b}\mone{c}^{\tau_c}\mone{d}^{\tau_d}" right] \\
          & q_2
          \end{tikzcd}
          \]
          The diagram commutes by
          \cref{rel:k32,rel:k31,rel:orderk,rel:ordermone,rel:disjoint2,rel:disjoint3,rel:disjoint4,rel:disjoint5,rel:rename3,rel:rename4}
          \begin{align*}
          \begin{split}
          \hh{1,b,c,d}&\mone{1}^{\tau_4+\tau}\mone{b}^{\tau_b}\mone{c}^{\tau_c}\mone{d}^{\tau_d}
          \hh{1,2,3,4}\mone{1}^{\tau}\mone{2}^{\tau+1} \mone{3}^{\tau+1} \mone{4}^{\tau}\hh{1,2,3,4} \\             
          &\approx \hh{1,b,c,d}\mone{1}^{\tau_4+\tau}\mone{b}^{\tau_b}\mone{c}^{\tau_c}\mone{d}^{\tau_d}
           \xx{1,4}\xx{2,3}\mone{1}^{\tau}\mone{2}^{\tau}\mone{3}^{\tau}\mone{4}^{\tau}\\
          &\approx \xx{1,4}\xx{2,3}\mone{1}^{\tau} \mone{2}^{\tau}
          \mone{3}^{\tau}\hh{4,b,c,d}\mone{4}^{\tau_4}\mone{b}^{\tau_b}\mone{c}^{\tau_c}\mone{d}^{\tau_d}. 
          \end{split}
          \end{align*}
          Moreover, the level property is satisfied since
          $\level(t)<\level(s)$, $\level(q_2) < \level (q_1) =
          \level(s)$ and
          $\xx{1,2}\xx{3,4}\mone{1}^\tau\mone{2}^\tau\mone{3}^\tau\mone{4}^\tau$
          cannot increase the number of odd entries .
        \end{sssubcase}
      \end{ssubcase}
      
    % subcase 3.2.3: |{a,b,c,d} cap {1,2,3,4}| = 2
    \begin{ssubcase}{$|\s{a,b,c,d} \cap \s{1,2,3,4}| = 2$.}
      Then $a,b\in \s{1,2,3,4}$ and $5\leq c<d$. We now consider the
      cases $\s{a,b}=\s{1,2}$, $\s{a,b}=\s{1,3}$, $\s{a,b}=\s{1,4}$,
      $\s{a,b}=\s{2,3}$, $\s{a,b}=\s{2,4}$, and $\s{a,b}=\s{3,4}$ in
      turn.

        % subcase 3.2.3.1: {a,b} = {1,2}
        \begin{sssubcase}{$\s{a,b} = \s{1,2}$.}
          Then, from $s$, the algorithm prescribes
          \[
          \hh{1,2,c,d}\mone{1}^{\tau_1}\mone{2}^{\tau_2}\mone{c}^{\tau_c}\mone{d}^{\tau_d}.
          \]
          Moreover, by \cref{prop:kevenodds}, $\level(r)=\level(s)$
          and, writing $\overline{r}$ for the first four entries of
          the integral part of $r$, we have $\overline{r}\equiv 1010
          \pmod{2}$ or $\overline{r}\equiv 0101 \pmod{2}$. We consider
          both cases in turn.

          % subcase 3.2.3.1.1: r = 1010
          \begin{ssssubcase}{$\overline{r}\equiv 0101\pmod{2}$.}
            In this case, from $r$, the algorithm prescribes
            \[
            \hh{1,3,c,d}\mone{1}^{\tau_1'}\mone{3}^{\tau_3'}\mone{c}^{\tau_c}\mone{d}^{\tau_d}.
            \]
            By \cref{prop:rewriterules}, there exist words
            $\word{V}$ and $\word{W}$ over $\s{\mone{x},\xx{x,y}}$,
            with $x,y\in\s{1,2,3,4,c,d}$, such that
            \[
            \hh{1,3,c,d}\mone{1}^{\tau_1'}\mone{3}^{\tau_3'}\mone{c}^{\tau_c}\mone{d}^{\tau_d}
            \hh{1,2,3,4} \mone{d}^{\tau_d}
            \mone{c}^{\tau_c}\mone{2}^{\tau_2}\mone{1}^{\tau_1}\hh{1,2,c,d}
            \approx \word{V}\hh{1,2,3,4}\word{W}.
            \]
            Hence, we can complete the diagram as follows.
            \[
              \begin{tikzcd}[column sep=2.5cm, row sep=large]
                s \arrow[d, Rightarrow, "\hh{1,2,c,d}\mone{1}^{\tau_1}\mone{2}^{\tau_2}\mone{c}^{\tau_c}\mone{d}^{\tau_d}" left] 
                  \arrow[r, "\hh{1,2,3,4}"] & 
                  r\arrow[d, Rightarrow, "\hh{1,3,c,d}\mone{1}^{\tau_1'}\mone{3}^{\tau_3'}\mone{c}^{\tau_c}\mone{d}^{\tau_d}"] \\
                t\arrow[r, "\word{V}\hh{1,2,3,4}\word{W}", swap] & q
              \end{tikzcd}
            \]
            The diagram commutes by construction. To see that the
            level property is satisfied, first note that
            \[
            \level(t),\level(q)\leq (j,k,m-1)<\level(s). 
            \]
            Now, since
            $\word{V}$ and $\word{W}$ are words over
            $\s{\mone{x},\xx{x,y}}$, they can neither increase nor
            decrease the number of odd entries. As a result, because
            $\word{V}\hh{1,2,3,4}\word{W}$ contains a single
            occurrence of $K$, it cannot raise the level of the state
            to $\level(s)$ and also lower it back to
            $\level(q)$. Thus,
            \[
            \level(\word{V}\hh{1,2,3,4}\word{W})<\level(s)
            \]
            as desired.
          \end{ssssubcase}

          % subcase 3.2.3.1.2: r = 0101
          \begin{ssssubcase}{$\overline{r}\equiv 0101\pmod{2}$.}
            In this case, from $r$, the algorithm prescribes
            \[
            \hh{2,4,c,d}\mone{2}^{\tau_2'}\mone{4}^{\tau_4'}\mone{c}^{\tau_c}\mone{d}^{\tau_d}.
            \]
            By \cref{prop:rewriterules}, there exist words
            $\word{V}$ and $\word{W}$ over $\s{\mone{x},\xx{x,y}}$,
            with $x,y\in\s{1,2,3,4,c,d}$, such that
            \[
            \hh{2,4,c,d}\mone{2}^{\tau_2'}\mone{4}^{\tau_4'}\mone{c}^{\tau_c}\mone{d}^{\tau_d}
            \hh{1,2,3,4} \mone{d}^{\tau_d}
            \mone{c}^{\tau_c}\mone{2}^{\tau_2}\mone{1}^{\tau_1}\hh{1,2,c,d}
            \approx \word{V}\hh{1,2,3,4}\word{W}.
            \]
            Hence, we can complete the diagram as follows.
            \[
              \begin{tikzcd}[column sep=2.5cm, row sep=large]
                s \arrow[d, Rightarrow, "\hh{1,2,c,d}\mone{1}^{\tau_1}\mone{2}^{\tau_2}\mone{c}^{\tau_c}\mone{d}^{\tau_d}" left] 
                  \arrow[r, "\hh{1,2,3,4}"] & 
                  r\arrow[d, Rightarrow, "\hh{2,4,c,d}\mone{2}^{\tau_2'}\mone{4}^{\tau_4'}\mone{c}^{\tau_c}\mone{d}^{\tau_d}"] \\
                t\arrow[r, "\word{V}\hh{1,2,3,4}\word{W}", swap] & q
              \end{tikzcd}
            \]
            The diagram commutes by construction. To see that the
            level property is satisfied, first note that
            \[
            \level(t),\level(q)\leq (j,k,m-1)<\level(s).
            \]
            Now, since
            $\word{V}$ and $\word{W}$ are words over
            $\s{\mone{x},\xx{x,y}}$, they can neither increase nor
            decrease the number of odd entries. As a result, because
            $\word{V}\hh{1,2,3,4}\word{W}$ contains a single
            occurrence of $K$, it cannot raise the level of the state
            to $\level(s)$ and also lower it back to
            $\level(q)$. Thus,
            \[
            \level(\word{V}\hh{1,2,3,4}\word{W})<\level(s)
            \]
            as desired.
          \end{ssssubcase}          
        \end{sssubcase}      

        % subcase 3.2.3.2: {a,b} = {1,3}
        \begin{sssubcase}{$\s{a,b} = \s{1,3}$.}
          Then, from $s$, the algorithm prescribes
          \[
          \hh{1,3,c,d}\mone{1}^{\tau_1}\mone{3}^{\tau_3}\mone{c}^{\tau_c}\mone{d}^{\tau_d}.
          \]
          Moreover, by \cref{prop:kevenodds}, $\level(r)=\level(s)$
          and, writing $\overline{r}$ for the first four entries of
          the integral part of $r$, we have $\overline{r}\equiv 1100
          \pmod{2}$ or $\overline{r}\equiv 0011 \pmod{2}$. We consider
          both cases in turn.

          % subcase 3.2.3.2.1: r = 1100
          \begin{ssssubcase}{$\overline{r}\equiv 1100\pmod{2}$.}
            In this case, from $r$, the algorithm prescribes
            \[
            \hh{1,2,c,d}\mone{1}^{\tau_1'}\mone{2}^{\tau_2'}\mone{c}^{\tau_c}\mone{d}^{\tau_d}.
            \]
            By \cref{prop:rewriterules}, there exist words
            $\word{V}$ and $\word{W}$ over $\s{\mone{x},\xx{x,y}}$,
            with $x,y\in\s{1,2,3,4,c,d}$, such that
            \[
            \hh{1,2,c,d}\mone{1}^{\tau_1'}\mone{2}^{\tau_2'}\mone{c}^{\tau_c}\mone{d}^{\tau_d}
            \hh{1,2,3,4} \mone{d}^{\tau_d}
            \mone{c}^{\tau_c}\mone{3}^{\tau_3}\mone{1}^{\tau_1}\hh{1,3,c,d}
            \approx \word{V}\hh{1,2,3,4}\word{W}.
            \]
            Hence, we can complete the diagram as follows.
            \[
              \begin{tikzcd}[column sep=2.5cm, row sep=large]
                s \arrow[d, Rightarrow, "\hh{1,3,c,d}\mone{1}^{\tau_1}\mone{3}^{\tau_3}\mone{c}^{\tau_c}\mone{d}^{\tau_d}" left] 
                  \arrow[r, "\hh{1,2,3,4}"] & 
                  r\arrow[d, Rightarrow, "\hh{1,2,c,d}\mone{1}^{\tau_1'}\mone{2}^{\tau_2'}\mone{c}^{\tau_c}\mone{d}^{\tau_d}"] \\
                t\arrow[r, "\word{V}\hh{1,2,3,4}\word{W}", swap] & q
              \end{tikzcd}
            \]
            The diagram commutes by construction. To see that the
            level property is satisfied, first note that
            \[
            \level(t),\level(q)\leq (j,k,m-1)<\level(s).
            \]
            Now, since
            $\word{V}$ and $\word{W}$ are words over
            $\s{\mone{x},\xx{x,y}}$, they can neither increase nor
            decrease the number of odd entries. As a result, because
            $\word{V}\hh{1,2,3,4}\word{W}$ contains a single
            occurrence of $K$, it cannot raise the level of the state
            to $\level(s)$ and also lower it back to
            $\level(q)$. Thus,
            \[
            \level(\word{V}\hh{1,2,3,4}\word{W})<\level(s)
            \]
            as desired.
          \end{ssssubcase}

          % subcase 3.2.3.2.2: r = 0011
          \begin{ssssubcase}{$\overline{r}\equiv 0011\pmod{2}$.}
            In this case, from $r$, the algorithm prescribes
            \[
            \hh{3,4,c,d}\mone{3}^{\tau_3'}\mone{4}^{\tau_4'}\mone{c}^{\tau_c}\mone{d}^{\tau_d}.
            \]
            By \cref{prop:rewriterules}, there exist words
            $\word{V}$ and $\word{W}$ over $\s{\mone{x},\xx{x,y}}$,
            with $x,y\in\s{1,2,3,4,c,d}$, such that
            \[
            \hh{3,4,c,d}\mone{3}^{\tau_3'}\mone{4}^{\tau_4'}\mone{c}^{\tau_c}\mone{d}^{\tau_d}
            \hh{1,2,3,4} \mone{d}^{\tau_d}
            \mone{c}^{\tau_c}\mone{3}^{\tau_3}\mone{1}^{\tau_1}\hh{1,3,c,d}
            \approx \word{V}\hh{1,2,3,4}\word{W}.
            \]
            Hence, we can complete the diagram as follows.
            \[
              \begin{tikzcd}[column sep=2.5cm, row sep=large]
                s \arrow[d, Rightarrow, "\hh{1,3,c,d}\mone{1}^{\tau_1}\mone{3}^{\tau_3}\mone{c}^{\tau_c}\mone{d}^{\tau_d}" left] 
                  \arrow[r, "\hh{1,2,3,4}"] & 
                  r\arrow[d, Rightarrow, "\hh{3,4,c,d}\mone{3}^{\tau_3'}\mone{4}^{\tau_4'}\mone{c}^{\tau_c}\mone{d}^{\tau_d}"] \\
                t\arrow[r, "\word{V}\hh{1,2,3,4}\word{W}", swap] & q
              \end{tikzcd}
            \]
            The diagram commutes by construction. To see that the
            level property is satisfied, first note that
            \[
            \level(t),\level(q)\leq (j,k,m-1)<\level(s).
            \]
            Now, since
            $\word{V}$ and $\word{W}$ are words over
            $\s{\mone{x},\xx{x,y}}$, they can neither increase nor
            decrease the number of odd entries. As a result, because
            $\word{V}\hh{1,2,3,4}\word{W}$ contains a single
            occurrence of $K$, it cannot raise the level of the state
            to $\level(s)$ and also lower it back to
            $\level(q)$. Thus,
            \[
            \level(\word{V}\hh{1,2,3,4}\word{W})<\level(s)
            \]
            as desired.
          \end{ssssubcase}          
        \end{sssubcase}      

        % subcase 3.2.3.3: {a,b} = {1,4}
        \begin{sssubcase}{$\s{a,b} = \s{1,4}$.}
          Then, from $s$, the algorithm prescribes
          \[
          \hh{1,4,c,d}\mone{1}^{\tau_1}\mone{4}^{\tau_4}\mone{c}^{\tau_c}\mone{d}^{\tau_d}.
          \]
          Moreover, by \cref{prop:kevenodds}, $\level(r)=\level(s)$
          and, writing $\overline{r}$ for the first four entries of
          the integral part of $r$, we have $\overline{r}\equiv 1001
          \pmod{2}$ or $\overline{r}\equiv 0110 \pmod{2}$. We consider
          both cases in turn.

          % subcase 3.2.3.3.1: r = 1001
          \begin{ssssubcase}{$\overline{r}\equiv 1001\pmod{2}$.}
            In this case, from $r$, the algorithm prescribes
            \[
            \hh{1,4,c,d}\mone{1}^{\tau_1'}\mone{4}^{\tau_4'}\mone{c}^{\tau_c}\mone{d}^{\tau_d}.
            \]
            By \cref{prop:rewriterules}, there exist words
            $\word{V}$ and $\word{W}$ over $\s{\mone{x},\xx{x,y}}$,
            with $x,y\in\s{1,2,3,4,c,d}$, such that
            \[
            \hh{1,4,c,d}\mone{1}^{\tau_1'}\mone{4}^{\tau_4'}\mone{c}^{\tau_c}\mone{d}^{\tau_d}
            \hh{1,2,3,4} \mone{d}^{\tau_d}
            \mone{c}^{\tau_c}\mone{4}^{\tau_4}\mone{1}^{\tau_1}\hh{1,4,c,d}
            \approx \word{V}\hh{1,2,3,4}\word{W}.
            \]
            Hence, we can complete the diagram as follows.
            \[
              \begin{tikzcd}[column sep=2.5cm, row sep=large]
                s \arrow[d, Rightarrow, "\hh{1,4,c,d}\mone{1}^{\tau_1}\mone{4}^{\tau_4}\mone{c}^{\tau_c}\mone{d}^{\tau_d}" left] 
                  \arrow[r, "\hh{1,2,3,4}"] & 
                  r\arrow[d, Rightarrow, "\hh{1,4,c,d}\mone{1}^{\tau_1'}\mone{4}^{\tau_4'}\mone{c}^{\tau_c}\mone{d}^{\tau_d}"] \\
                t\arrow[r, "\word{V}\hh{1,2,3,4}\word{W}", swap] & q
              \end{tikzcd}
            \]
            The diagram commutes by construction. To see that the
            level property is satisfied, first note that
            \[
            \level(t),\level(q)\leq (j,k,m-1)<\level(s).
            \] 
            Now, since
            $\word{V}$ and $\word{W}$ are words over
            $\s{\mone{x},\xx{x,y}}$, they can neither increase nor
            decrease the number of odd entries. As a result, because
            $\word{V}\hh{1,2,3,4}\word{W}$ contains a single
            occurrence of $K$, it cannot raise the level of the state
            to $\level(s)$ and also lower it back to
            $\level(q)$. Thus,
            \[
            \level(\word{V}\hh{1,2,3,4}\word{W})<\level(s)
            \]
            as desired.
          \end{ssssubcase}

          % subcase 3.2.3.3.2: r = 0110
          \begin{ssssubcase}{$\overline{r}\equiv 0110\pmod{2}$.}
            In this case, from $r$, the algorithm prescribes
            \[
            \hh{2,3,c,d}\mone{2}^{\tau_2'}\mone{3}^{\tau_3'}\mone{c}^{\tau_c}\mone{d}^{\tau_d}.
            \]
            By \cref{prop:rewriterules}, there exist words
            $\word{V}$ and $\word{W}$ over $\s{\mone{x},\xx{x,y}}$,
            with $x,y\in\s{1,2,3,4,c,d}$, such that
            \[
            \hh{2,3,c,d}\mone{2}^{\tau_2'}\mone{3}^{\tau_3'}\mone{c}^{\tau_c}\mone{d}^{\tau_d}
            \hh{1,2,3,4} \mone{d}^{\tau_d}
            \mone{c}^{\tau_c}\mone{4}^{\tau_4}\mone{1}^{\tau_1}\hh{1,4,c,d}
            \approx \word{V}\hh{1,2,3,4}\word{W}.
            \]
            Hence, we can complete the diagram as follows.
            \[
              \begin{tikzcd}[column sep=2.5cm, row sep=large]
                s \arrow[d, Rightarrow, "\hh{1,4,c,d}\mone{1}^{\tau_1}\mone{4}^{\tau_4}\mone{c}^{\tau_c}\mone{d}^{\tau_d}" left] 
                  \arrow[r, "\hh{1,2,3,4}"] & 
                  r\arrow[d, Rightarrow, "\hh{2,3,c,d}\mone{2}^{\tau_2'}\mone{3}^{\tau_3'}\mone{c}^{\tau_c}\mone{d}^{\tau_d}"] \\
                t\arrow[r, "\word{V}\hh{1,2,3,4}\word{W}", swap] & q
              \end{tikzcd}
            \]
            The diagram commutes by construction. To see that the
            level property is satisfied, first note that
            \[
            \level(t),\level(q)\leq (j,k,m-1)<\level(s). 
            \]
            Now, since
            $\word{V}$ and $\word{W}$ are words over
            $\s{\mone{x},\xx{x,y}}$, they can neither increase nor
            decrease the number of odd entries. As a result, because
            $\word{V}\hh{1,2,3,4}\word{W}$ contains a single
            occurrence of $K$, it cannot raise the level of the state
            to $\level(s)$ and also lower it back to
            $\level(q)$. Thus,
            \[
            \level(\word{V}\hh{1,2,3,4}\word{W})<\level(s)
            \]
            as desired.
          \end{ssssubcase}
        \end{sssubcase}      

        % subcase 3.2.3.4: {a,b} = {2,3}
        \begin{sssubcase}{$\s{a,b} = \s{2,3}$.}
          Then, from $s$, the algorithm prescribes
          \[
          \hh{2,3,c,d}\mone{2}^{\tau_2}\mone{3}^{\tau_3}\mone{c}^{\tau_c}\mone{d}^{\tau_d}.
          \]
          Moreover, by \cref{prop:kevenodds}, $\level(r)=\level(s)$
          and, writing $\overline{r}$ for the first four entries of
          the integral part of $r$, we have $\overline{r}\equiv 1001
          \pmod{2}$ or $\overline{r}\equiv 0110 \pmod{2}$. We consider
          both cases in turn.

          % subcase 3.2.3.4.1: r = 1001
          \begin{ssssubcase}{$\overline{r}\equiv 1001\pmod{2}$.}
            In this case, from $r$, the algorithm prescribes
            \[
            \hh{1,4,c,d}\mone{1}^{\tau_1'}\mone{4}^{\tau_4'}\mone{c}^{\tau_c}\mone{d}^{\tau_d}.
            \]
            By \cref{prop:rewriterules}, there exist words
            $\word{V}$ and $\word{W}$ over $\s{\mone{x},\xx{x,y}}$,
            with $x,y\in\s{1,2,3,4,c,d}$, such that
            \[
            \hh{1,4,c,d}\mone{1}^{\tau_1'}\mone{4}^{\tau_4'}\mone{c}^{\tau_c}\mone{d}^{\tau_d}
            \hh{1,2,3,4} \mone{d}^{\tau_d}
            \mone{c}^{\tau_c}\mone{3}^{\tau_3}\mone{2}^{\tau_2}\hh{2,3,c,d}
            \approx \word{V}\hh{1,2,3,4}\word{W}.
            \]
            Hence, we can complete the diagram as follows.
            \[
              \begin{tikzcd}[column sep=2.5cm, row sep=large]
                s \arrow[d, Rightarrow, "\hh{2,3,c,d}\mone{2}^{\tau_2}\mone{3}^{\tau_3}\mone{c}^{\tau_c}\mone{d}^{\tau_d}" left] 
                  \arrow[r, "\hh{1,2,3,4}"] & 
                  r\arrow[d, Rightarrow, "\hh{1,4,c,d}\mone{1}^{\tau_1'}\mone{4}^{\tau_4'}\mone{c}^{\tau_c}\mone{d}^{\tau_d}"] \\
                t\arrow[r, "\word{V}\hh{1,2,3,4}\word{W}", swap] & q
              \end{tikzcd}
            \]
            The diagram commutes by construction. To see that the
            level property is satisfied, first note that
            \[
            \level(t),\level(q)\leq (j,k,m-1)<\level(s). 
            \]
            Now, since
            $\word{V}$ and $\word{W}$ are words over
            $\s{\mone{x},\xx{x,y}}$, they can neither increase nor
            decrease the number of odd entries. As a result, because
            $\word{V}\hh{1,2,3,4}\word{W}$ contains a single
            occurrence of $K$, it cannot raise the level of the state
            to $\level(s)$ and also lower it back to
            $\level(q)$. Thus,
            \[
            \level(\word{V}\hh{1,2,3,4}\word{W})<\level(s)
            \]
            as desired.
          \end{ssssubcase}

          % subcase 3.2.3.4.2: r = 0110
          \begin{ssssubcase}{$\overline{r}\equiv 0110\pmod{2}$.}
            In this case, from $r$, the algorithm prescribes
            \[
            \hh{2,3,c,d}\mone{2}^{\tau_2'}\mone{3}^{\tau_3'}\mone{c}^{\tau_c}\mone{d}^{\tau_d}.
            \]
            By \cref{prop:rewriterules}, there exist words
            $\word{V}$ and $\word{W}$ over $\s{\mone{x},\xx{x,y}}$,
            with $x,y\in\s{1,2,3,4,c,d}$, such that
            \[
            \hh{2,3,c,d}\mone{2}^{\tau_2'}\mone{3}^{\tau_3'}\mone{c}^{\tau_c}\mone{d}^{\tau_d}
            \hh{1,2,3,4} \mone{d}^{\tau_d}
            \mone{c}^{\tau_c}\mone{3}^{\tau_3}\mone{2}^{\tau_2}\hh{2,3,c,d}
            \approx \word{V}\hh{1,2,3,4}\word{W}.
            \]
            Hence, we can complete the diagram as follows.
            \[
              \begin{tikzcd}[column sep=2.5cm, row sep=large]
                s \arrow[d, Rightarrow, "\hh{2,3,c,d}\mone{2}^{\tau_2}\mone{3}^{\tau_3}\mone{c}^{\tau_c}\mone{d}^{\tau_d}" left] 
                  \arrow[r, "\hh{1,2,3,4}"] & 
                  r\arrow[d, Rightarrow, "\hh{2,3,c,d}\mone{2}^{\tau_2'}\mone{3}^{\tau_3'}\mone{c}^{\tau_c}\mone{d}^{\tau_d}"] \\
                t\arrow[r, "\word{V}\hh{1,2,3,4}\word{W}", swap] & q
              \end{tikzcd}
            \]
            The diagram commutes by construction. To see that the
            level property is satisfied, first note that
            \[
            \level(t),\level(q)\leq (j,k,m-1)<\level(s). 
            \]
            Now, since
            $\word{V}$ and $\word{W}$ are words over
            $\s{\mone{x},\xx{x,y}}$, they can neither increase nor
            decrease the number of odd entries. As a result, because
            $\word{V}\hh{1,2,3,4}\word{W}$ contains a single
            occurrence of $K$, it cannot raise the level of the state
            to $\level(s)$ and also lower it back to
            $\level(q)$. Thus,
            \[
            \level(\word{V}\hh{1,2,3,4}\word{W})<\level(s)
            \]
            as desired.
          \end{ssssubcase}
        \end{sssubcase}      

        % subcase 3.2.3.5: {a,b} = {2,4}
        \begin{sssubcase}{$\s{a,b} = \s{2,4}$.}
          Then, from $s$, the algorithm prescribes
          \[
          \hh{2,4,c,d}\mone{2}^{\tau_2}\mone{4}^{\tau_4}\mone{c}^{\tau_c}\mone{d}^{\tau_d}.
          \]
          Moreover, by \cref{prop:kevenodds}, $\level(r)=\level(s)$
          and, writing $\overline{r}$ for the first four entries of
          the integral part of $r$, we have $\overline{r}\equiv 1100
          \pmod{2}$ or $\overline{r}\equiv 0011 \pmod{2}$. We consider
          both cases in turn.

          % subcase 3.2.3.5.1: r = 1100
          \begin{ssssubcase}{$\overline{r}\equiv 1100\pmod{2}$.}
            In this case, from $r$, the algorithm prescribes
            \[
            \hh{1,2,c,d}\mone{1}^{\tau_1'}\mone{2}^{\tau_2'}\mone{c}^{\tau_c}\mone{d}^{\tau_d}.
            \]
            By \cref{prop:rewriterules}, there exist words
            $\word{V}$ and $\word{W}$ over $\s{\mone{x},\xx{x,y}}$,
            with $x,y\in\s{1,2,3,4,c,d}$, such that
            \[
            \hh{1,2,c,d}\mone{1}^{\tau_1'}\mone{2}^{\tau_2'}\mone{c}^{\tau_c}\mone{d}^{\tau_d}
            \hh{1,2,3,4} \mone{d}^{\tau_d}
            \mone{c}^{\tau_c}\mone{4}^{\tau_4}\mone{2}^{\tau_2}\hh{2,4,c,d}
            \approx \word{V}\hh{1,2,3,4}\word{W}.
            \]
            Hence, we can complete the diagram as follows.
            \[
              \begin{tikzcd}[column sep=2.5cm, row sep=large]
                s \arrow[d, Rightarrow, "\hh{2,4,c,d}\mone{2}^{\tau_2}\mone{4}^{\tau_4}\mone{c}^{\tau_c}\mone{d}^{\tau_d}" left] 
                  \arrow[r, "\hh{1,2,3,4}"] & 
                  r\arrow[d, Rightarrow, "\hh{1,2,c,d}\mone{1}^{\tau_1'}\mone{2}^{\tau_2'}\mone{c}^{\tau_c}\mone{d}^{\tau_d}"] \\
                t\arrow[r, "\word{V}\hh{1,2,3,4}\word{W}", swap] & q
              \end{tikzcd}
            \]
            The diagram commutes by construction. To see that the
            level property is satisfied, first note that
            \[
            \level(t),\level(q)\leq (j,k,m-1)<\level(s).
            \]
            Now, since
            $\word{V}$ and $\word{W}$ are words over
            $\s{\mone{x},\xx{x,y}}$, they can neither increase nor
            decrease the number of odd entries. As a result, because
            $\word{V}\hh{1,2,3,4}\word{W}$ contains a single
            occurrence of $K$, it cannot raise the level of the state
            to $\level(s)$ and also lower it back to
            $\level(q)$. Thus,
            \[
            \level(\word{V}\hh{1,2,3,4}\word{W})<\level(s)
            \]
            as desired.
          \end{ssssubcase}

          % subcase 3.2.3.5.2: r = 0011
          \begin{ssssubcase}{$\overline{r}\equiv 0011\pmod{2}$.}
            In this case, from $r$, the algorithm prescribes
            \[
            \hh{3,4,c,d}\mone{3}^{\tau_3'}\mone{4}^{\tau_4'}\mone{c}^{\tau_c}\mone{d}^{\tau_d}.
            \]
            By \cref{prop:rewriterules}, there exist words
            $\word{V}$ and $\word{W}$ over $\s{\mone{x},\xx{x,y}}$,
            with $x,y\in\s{1,2,3,4,c,d}$, such that
            \[
            \hh{3,4,c,d}\mone{3}^{\tau_3'}\mone{4}^{\tau_4'}\mone{c}^{\tau_c}\mone{d}^{\tau_d}
            \hh{1,2,3,4} \mone{d}^{\tau_d}
            \mone{c}^{\tau_c}\mone{4}^{\tau_4}\mone{2}^{\tau_2}\hh{2,4,c,d}
            \approx \word{V}\hh{1,2,3,4}\word{W}.
            \]
            Hence, we can complete the diagram as follows.
            \[
              \begin{tikzcd}[column sep=2.5cm, row sep=large]
                s \arrow[d, Rightarrow, "\hh{2,4,c,d}\mone{2}^{\tau_2}\mone{4}^{\tau_4}\mone{c}^{\tau_c}\mone{d}^{\tau_d}" left] 
                  \arrow[r, "\hh{1,2,3,4}"] & 
                  r\arrow[d, Rightarrow, "\hh{3,4,c,d}\mone{3}^{\tau_3'}\mone{4}^{\tau_4'}\mone{c}^{\tau_c}\mone{d}^{\tau_d}"] \\
                t\arrow[r, "\word{V}\hh{1,2,3,4}\word{W}", swap] & q
              \end{tikzcd}
            \]
            The diagram commutes by construction. To see that the
            level property is satisfied, first note that
            \[
            \level(t),\level(q)\leq (j,k,m-1)<\level(s). 
            \]
            Now, since
            $\word{V}$ and $\word{W}$ are words over
            $\s{\mone{x},\xx{x,y}}$, they can neither increase nor
            decrease the number of odd entries. As a result, because
            $\word{V}\hh{1,2,3,4}\word{W}$ contains a single
            occurrence of $K$, it cannot raise the level of the state
            to $\level(s)$ and also lower it back to
            $\level(q)$. Thus,
            \[
            \level(\word{V}\hh{1,2,3,4}\word{W})<\level(s)
            \]
            as desired.
          \end{ssssubcase}          
        \end{sssubcase}      

        % subcase 3.2.3.6: {a,b} = {3,4}
        \begin{sssubcase}{$\s{a,b} = \s{3,4}$.}
          Then, from $s$, the algorithm prescribes
          \[
          \hh{3,4,c,d}\mone{3}^{\tau_3}\mone{4}^{\tau_4}\mone{c}^{\tau_c}\mone{d}^{\tau_d}.
          \]
          Moreover, by \cref{prop:kevenodds}, $\level(r)=\level(s)$
          and, writing $\overline{r}$ for the first four entries of
          the integral part of $r$, we have $\overline{r}\equiv 1010
          \pmod{2}$ or $\overline{r}\equiv 0101 \pmod{2}$. We consider
          both cases in turn.

          % subcase 3.2.3.6.1: r = 1010
          \begin{ssssubcase}{$\overline{r}\equiv 1010\pmod{2}$.}
            In this case, from $r$, the algorithm prescribes
            \[
            \hh{1,3,c,d}\mone{1}^{\tau_1'}\mone{3}^{\tau_3'}\mone{c}^{\tau_c}\mone{d}^{\tau_d}.
            \]
            By \cref{prop:rewriterules}, there exist words
            $\word{V}$ and $\word{W}$ over $\s{\mone{x},\xx{x,y}}$,
            with $x,y\in\s{1,2,3,4,c,d}$, such that
            \[
            \hh{1,3,c,d}\mone{1}^{\tau_1'}\mone{3}^{\tau_3'}\mone{c}^{\tau_c}\mone{d}^{\tau_d}
            \hh{1,2,3,4} \mone{d}^{\tau_d}
            \mone{c}^{\tau_c}\mone{4}^{\tau_4}\mone{3}^{\tau_3}\hh{3,4,c,d}
            \approx \word{V}\hh{1,2,3,4}\word{W}.
            \]
            Hence, we can complete the diagram as follows.
            \[
              \begin{tikzcd}[column sep=2.5cm, row sep=large]
                s \arrow[d, Rightarrow, "\hh{3,4,c,d}\mone{3}^{\tau_3}\mone{4}^{\tau_4}\mone{c}^{\tau_c}\mone{d}^{\tau_d}" left] 
                  \arrow[r, "\hh{1,2,3,4}"] & 
                  r\arrow[d, Rightarrow, "\hh{1,3,c,d}\mone{1}^{\tau_1'}\mone{3}^{\tau_3'}\mone{c}^{\tau_c}\mone{d}^{\tau_d}"] \\
                t\arrow[r, "\word{V}\hh{1,2,3,4}\word{W}", swap] & q
              \end{tikzcd}
            \]
            The diagram commutes by construction. To see that the
            level property is satisfied, first note that
            \[
            \level(t),\level(q)\leq (j,k,m-1)<\level(s).
            \] 
            Now, since
            $\word{V}$ and $\word{W}$ are words over
            $\s{\mone{x},\xx{x,y}}$, they can neither increase nor
            decrease the number of odd entries. As a result, because
            $\word{V}\hh{1,2,3,4}\word{W}$ contains a single
            occurrence of $K$, it cannot raise the level of the state
            to $\level(s)$ and also lower it back to
            $\level(q)$. Thus,
            \[
            \level(\word{V}\hh{1,2,3,4}\word{W})<\level(s)
            \]
            as desired.
          \end{ssssubcase}

          % subcase 3.2.3.6.2: r = 0101
          \begin{ssssubcase}{$\overline{r}\equiv 0101\pmod{2}$.}
            In this case, from $r$, the algorithm prescribes
            \[
            \hh{2,4,c,d}\mone{2}^{\tau_2'}\mone{4}^{\tau_4'}\mone{c}^{\tau_c}\mone{d}^{\tau_d}.
            \]
            By \cref{prop:rewriterules}, there exist words
            $\word{V}$ and $\word{W}$ over $\s{\mone{x},\xx{x,y}}$,
            with $x,y\in\s{1,2,3,4,c,d}$, such that
            \[
            \hh{2,4,c,d}\mone{2}^{\tau_2'}\mone{4}^{\tau_4'}\mone{c}^{\tau_c}\mone{d}^{\tau_d}
            \hh{1,2,3,4} \mone{d}^{\tau_d}
            \mone{c}^{\tau_c}\mone{4}^{\tau_4}\mone{3}^{\tau_3}\hh{3,4,c,d}
            \approx \word{V}\hh{1,2,3,4}\word{W}.
            \]
            Hence, we can complete the diagram as follows.
            \[
              \begin{tikzcd}[column sep=2.5cm, row sep=large]
                s \arrow[d, Rightarrow, "\hh{3,4,c,d}\mone{3}^{\tau_3}\mone{4}^{\tau_4}\mone{c}^{\tau_c}\mone{d}^{\tau_d}" left] 
                  \arrow[r, "\hh{1,2,3,4}"] & 
                  r\arrow[d, Rightarrow, "\hh{2,4,c,d}\mone{2}^{\tau_2'}\mone{4}^{\tau_4'}\mone{c}^{\tau_c}\mone{d}^{\tau_d}"] \\
                t\arrow[r, "\word{V}\hh{1,2,3,4}\word{W}", swap] & q
              \end{tikzcd}
            \]
            The diagram commutes by construction. To see that the
            level property is satisfied, first note that
            \[
            \level(t),\level(q)\leq (j,k,m-1)<\level(s). 
            \]
            Now, since
            $\word{V}$ and $\word{W}$ are words over
            $\s{\mone{x},\xx{x,y}}$, they can neither increase nor
            decrease the number of odd entries. As a result, because
            $\word{V}\hh{1,2,3,4}\word{W}$ contains a single
            occurrence of $K$, it cannot raise the level of the state
            to $\level(s)$ and also lower it back to
            $\level(q)$. Thus,
            \[
            \level(\word{V}\hh{1,2,3,4}\word{W})<\level(s)
            \]
            as desired.
          \end{ssssubcase}          
        \end{sssubcase}
    \end{ssubcase}
      
    % subcase 3.2.4: |{a,b,c,d} cap {1,2,3,4}| = 3
    \begin{ssubcase}{$|\s{a,b,c,d} \cap \s{1,2,3,4}| = 3$.}      
      Then $a,b,c \in \s{1,2,3,4}$ and $5 \leq d$. We now consider the
      cases $\s{a,b,c}=\s{1,2,3}$, $\s{a,b,c}=\s{1,2,4}$,
      $\s{a,b,c}=\s{1,3,4}$, and $\s{a,b,c}=\s{2,3,4}$ in turn.

        % subcase 3.2.4.1: {a,b,c} = {1,2,3}
        \begin{sssubcase}{$\s{a,b,c}=\s{1,2,3}$.}
          Then, from $s$, the algorithm prescribes
          \[
          \hh{1,2,3,d}\mone{1}^{\tau_1}\mone{2}^{\tau_2}\mone{3}^{\tau_3}\mone{d}^{\tau_d}.
          \]
          Moreover, by \cref{prop:koddodds}, $\level(r)=(j,k+1,1)$
          and, writing $\overline{r}$ for the first four entries of
          the integral part of $r$, we have $\overline{r}\equiv 1331
          \pmod{4}$ or $\overline{r}\equiv 3113 \pmod{4}$. Hence, from
          $r$ the algorithm prescribes
          \[
          \hh{1,2,3,4}\mone{1}^{\tau}\mone{2}^{\tau+1}\mone{3}^{\tau+1}\mone{4}^{\tau}
          \]
          where the value of $\tau$ depends on whether
          $\overline{r}\equiv 1331 \pmod{4}$ or $\overline{r}\equiv
          3113 \pmod{4}$. Now, since
          \[
          \hh{1,2,3,4}\mone{1}^{\tau}\mone{2}^{\tau+1} \mone{3}^{\tau+1}
          \mone{4}^{\tau}\hh{1,2,3,4}\approx
          \xx{1,4}\xx{2,3}\mone{1}^{\tau}\mone{2}^{\tau}\mone{3}^{\tau}\mone{4}^{\tau},
          \]
          by \cref{rel:k32,rel:k31}, we know that from $q_1=
          (\hh{1,2,3,4}\mone{1}^{\tau}\mone{2}^{\tau+1}\mone{3}^{\tau+1}\mone{4}^{\tau})
          r$ the algorithm prescribes
          \[
          \hh{2,3,4,d}\mone{2}^{\tau_3+\tau}\mone{3}^{\tau_2+\tau}\mone{4}^{\tau_1+\tau}\mone{d}^{\tau_d}.
          \]
          We therefore complete the resulting diagram as follows.
          \[
          \begin{tikzcd}[column sep=large]
          s \arrow[d, Rightarrow, "\hh{1,2,3,d}\mone{1}^{\tau_1}\mone{2}^{\tau_2}\mone{3}^{\tau_3}\mone{d}^{\tau_d}" left] \arrow[r, "\hh{1,2,3,4}"] & r \arrow[d, Rightarrow, "\hh{1,2,3,4}\mone{1}^{\tau}\mone{2}^{\tau+1}\mone{3}^{\tau+1}\mone{4}^{\tau}" right]\\
          t\arrow[dr, "\mone{1}^{\tau}\mone{4}\mone{d}\xx{4,d}\xx{1,2}\xx{2,3}\xx{3,4}", swap] & q_1 \arrow[d, Rightarrow, "\hh{2,3,4,d}\mone{2}^{\tau_3+\tau}\mone{3}^{\tau_2+\tau}\mone{4}^{\tau_1+\tau}\mone{d}^{\tau_d}" right] \\
          & q_2
          \end{tikzcd}
          \]
          The diagram commutes by \cref{rel:k32,rel:k31,rel:orderx,rel:ordermone,rel:disjoint2,rel:disjoint3,rel:disjoint4,rel:disjoint5,rel:rename1,rel:rename2,rel:rename3,rel:rename4,rel:rename5,rel:rename6,rel:rename7,rel:ksym5,rel:k11}
          \begin{align*}
          \begin{split}
          \hh{2,3,4,d}&\mone{2}^{\tau_3+\tau}\mone{3}^{\tau_2+\tau}\mone{4}^{\tau_1+\tau}\mone{d}^{\tau_d}
          \hh{1,2,3,4}\mone{1}^{\tau}\mone{2}^{\tau+1} \mone{3}^{\tau+1} \mone{4}^{\tau}\hh{1,2,3,4} \\             
          &\approx\hh{2,3,4,d}\mone{2}^{\tau_3+\tau}\mone{3}^{\tau_2+\tau}\mone{4}^{\tau_1+\tau}\mone{d}^{\tau_d}
          \xx{1,4}\xx{2,3}\mone{1}^{\tau}\mone{2}^{\tau}\mone{3}^{\tau}\mone{4}^{\tau}\\
          &\approx\hh{2,3,4,d}\mone{1}^{\tau}\mone{2}^{\tau_3}\mone{3}^{\tau_2}\mone{4}^{\tau_1}\mone{d}^{\tau_d}
          \xx{1,4}\xx{2,3}\\
          &\approx\hh{2,3,4,d}\mone{1}^{\tau}\xx{1,4}\xx{2,3}\mone{1}^{\tau_1}\mone{2}^{\tau_2}\mone{3}^{\tau_3}\mone{d}^{\tau_d}\\                    
          &\approx\mone{1}^{\tau}\hh{2,3,4,d}\xx{1,4}\xx{2,4}\xx{2,4}\xx{2,3}\mone{1}^{\tau_1}\mone{2}^{\tau_2}\mone{3}^{\tau_3}\mone{d}^{\tau_d}\\
          &\approx\mone{1}^{\tau}\hh{2,3,4,d}\xx{2,4}\xx{1,2}\xx{2,3}\xx{3,4}\mone{1}^{\tau_1}\mone{2}^{\tau_2}\mone{3}^{\tau_3}\mone{d}^{\tau_d}\\
          &\approx\mone{1}^{\tau}\hh{2,3,4,d}\xx{2,4}\xx{3,d}\xx{3,d}\xx{1,2}\xx{2,3}\xx{3,4}\mone{1}^{\tau_1}\mone{2}^{\tau_2}\mone{3}^{\tau_3}\mone{d}^{\tau_d}\\
          &\approx\mone{1}^{\tau}\mone{4}\mone{d}\hh{2,3,4,d}\xx{3,d}\xx{1,2}\xx{2,3}\xx{3,4}\mone{1}^{\tau_1}\mone{2}^{\tau_2}\mone{3}^{\tau_3}\mone{d}^{\tau_d}\\
          &\approx\mone{1}^{\tau}\mone{4}\mone{d}\xx{4,d}\xx{1,2}\xx{2,3}\xx{3,4}\hh{1,2,3,d}\mone{1}^{\tau_1}\mone{2}^{\tau_2}\mone{3}^{\tau_3}\mone{d}^{\tau_d}.                                        
          \end{split}
          \end{align*}
          Moreover, the level property is satisfied since
          $\level(t)<\level(s)$, $\level(q_2) < \level (q_1) =
          \level(s)$ and
          $\mone{1}^{\tau}\mone{4}\mone{d}\xx{4,d}\xx{1,2}\xx{2,3}\xx{3,4}$
          cannot increase the number of odd entries .
        \end{sssubcase}

        % subcase 3.2.4.2: {a,b,c} = {1,2,4}
        \begin{sssubcase}{$\s{a,b,c} = \s{1,2,4}$.}
          Then, from $s$, the algorithm prescribes
          \[
          \hh{1,2,4,d}\mone{1}^{\tau_1}\mone{2}^{\tau_2}\mone{4}^{\tau_4}\mone{d}^{\tau_d}.
          \]
          Moreover, by \cref{prop:koddodds}, $\level(r)=(j,k+1,1)$
          and, writing $\overline{r}$ for the first four entries of
          the integral part of $r$, we have $\overline{r}\equiv 1133
          \pmod{4}$ or $\overline{r}\equiv 3311 \pmod{4}$. Hence, from
          $r$ the algorithm prescribes
          \[
          \hh{1,2,3,4}\mone{1}^{\tau}\mone{2}^{\tau}\mone{3}^{\tau+1}\mone{4}^{\tau+1}
          \]
          where the value of $\tau$ depends on whether
          $\overline{r}\equiv 1133 \pmod{4}$ or $\overline{r}\equiv
          3311 \pmod{4}$.  Now, since
          \[
          \hh{1,2,3,4}\mone{1}^{\tau}\mone{2}^{\tau} \mone{3}^{\tau+1}
          \mone{4}^{\tau+1}\hh{1,2,3,4}\approx
          \xx{1,3}\xx{2,4}\mone{1}^{\tau}\mone{2}^{\tau}\mone{3}^{\tau}\mone{4}^{\tau},
          \]
          by \cref{rel:k12,rel:k11}, we know that from $q_1=
          (\hh{1,2,3,4}\mone{1}^{\tau}\mone{2}^{\tau}\mone{3}^{\tau+1}\mone{4}^{\tau+1})
          r$ the algorithm prescribes
          \[
          \hh{2,3,4,d}\mone{2}^{\tau_4+\tau}\mone{3}^{\tau_1+\tau}\mone{4}^{\tau_2+\tau}\mone{d}^{\tau_d}.
          \]
          We therefore complete the resulting diagram as follows.
          \[
          \begin{tikzcd}[column sep=large]
          s \arrow[d, Rightarrow, "\hh{1,2,4,d}\mone{1}^{\tau_1}\mone{2}^{\tau_2}\mone{4}^{\tau_4}\mone{d}^{\tau_d}" left] \arrow[r, "\hh{1,2,3,4}"] & r \arrow[d, Rightarrow, "\hh{1,2,3,4}\mone{1}^{\tau}\mone{2}^{\tau}\mone{3}^{\tau+1}\mone{4}^{\tau+1}" right]\\
          t\arrow[dr, "\mone{1}^\tau\mone{3}\mone{d}\xx{3,d}\xx{3,4}\xx{1,3}\xx{1,2}", swap] & q_1 \arrow[d, Rightarrow, "\hh{2,3,4,d}\mone{2}^{\tau_4+\tau}\mone{3}^{\tau_1+\tau}\mone{4}^{\tau_2+\tau}\mone{d}^{\tau_d}" right] \\
          & q_2
          \end{tikzcd}
          \]
          The diagram commutes by
          \cref{rel:k12,rel:k11,rel:orderk,rel:ordermone,rel:orderx,rel:disjoint2,rel:disjoint4,rel:disjoint5,rel:rename3,rel:rename4,rel:rename1,rel:rename2,rel:rename5,rel:ksym1,rel:ksym2}
          \begin{align*}
          \begin{split}
          \hh{2,3,4,d}&\mone{2}^{\tau_4+\tau}\mone{3}^{\tau_1+\tau}\mone{4}^{\tau_2+\tau}\mone{d}^{\tau_d}
          \hh{1,2,3,4}\mone{1}^{\tau}\mone{2}^{\tau} \mone{3}^{\tau+1} \mone{4}^{\tau+1}\hh{1,2,3,4} \\             
          &\approx\hh{2,3,4,d}\mone{2}^{\tau_4+\tau}\mone{3}^{\tau_1+\tau}\mone{4}^{\tau_2+\tau}\mone{d}^{\tau_d}
          \xx{1,3}\xx{2,4}\mone{1}^{\tau}\mone{2}^{\tau}\mone{3}^{\tau}\mone{4}^{\tau}\\
          &\approx\hh{2,3,4,d}\xx{1,3}\xx{2,4}\mone{1}^{\tau_1}\mone{2}^{\tau_2}\mone{3}^{\tau}\mone{4}^{\tau_4}\mone{d}^{\tau_d}\\
          &\approx\hh{2,3,4,d}\xx{1,2}\xx{1,2}\xx{1,3}\xx{2,4}\mone{1}^{\tau_1}\mone{2}^{\tau_2}\mone{3}^{\tau}\mone{4}^{\tau_4}\mone{d}^{\tau_d}\\
          &\approx\xx{1,2}\hh{1,3,4,d}\xx{2,3}\xx{1,2}\xx{2,4}\mone{1}^{\tau_1}\mone{2}^{\tau_2}\mone{3}^{\tau}\mone{4}^{\tau_4}\mone{d}^{\tau_d}\\
          &\approx\xx{1,2}\xx{2,3}\hh{1,2,4,d}\xx{1,2}\xx{2,4}\mone{3}^\tau\mone{1}^{\tau_1}\mone{2}^{\tau_2}\mone{4}^{\tau_4}\mone{d}^{\tau_d}\\
          &\approx\xx{1,2}\xx{2,3}\mone{2}\mone{d}\xx{2,d}\xx{2,4}\mone{3}^\tau\hh{1,2,4,d}\mone{1}^{\tau_1}\mone{2}^{\tau_2}\mone{4}^{\tau_4}\mone{d}^{\tau_d}\\
          &\approx\mone{1}^\tau\mone{3}\mone{d}\xx{1,2}\xx{2,3}\xx{2,d}\xx{2,4}\hh{1,2,4,d}\mone{1}^{\tau_1}\mone{2}^{\tau_2}\mone{4}^{\tau_4}\mone{d}^{\tau_d}\\
          &\approx\mone{1}^\tau\mone{3}\mone{d}\xx{3,d}\xx{3,4}\xx{1,3}\xx{1,2}\hh{1,2,4,d}\mone{1}^{\tau_1}\mone{2}^{\tau_2}\mone{4}^{\tau_4}\mone{d}^{\tau_d}.          
          \end{split}
          \end{align*}
          Moreover, the level property is satisfied since
          $\level(t)<\level(s)$, $\level(q_2) < \level (q_1) =
          \level(s)$ and
          $\mone{1}^\tau\mone{3}\mone{d}\xx{3,d}\xx{3,4}\xx{1,3}\xx{1,2}$
          cannot increase the number of odd entries.
        \end{sssubcase}
      
        % subcase 3.2.4.2: {a,b,c} = {1,3,4}      
        \begin{sssubcase}{$\s{a,b,c} = \s{1,3,4}$.}
          Then, from $s$, the algorithm prescribes
          \[
          \hh{1,3,4,d}\mone{1}^{\tau_1}\mone{3}^{\tau_3}\mone{4}^{\tau_4}\mone{d}^{\tau_d}.
          \]
          Moreover, by \cref{prop:koddodds}, $\level(r)=(j,k+1,1)$
          and, writing $\overline{r}$ for the first four entries of
          the integral part of $r$, we have $\overline{r}\equiv 1313
          \pmod{4}$ or $\overline{r}\equiv 3131 \pmod{4}$. Hence, from
          $r$ the algorithm prescribes
          \[
          \hh{1,2,3,4}\mone{1}^{\tau}\mone{2}^{\tau+1}\mone{3}^{\tau}\mone{4}^{\tau+1}
          \]
          where the value of $\tau$ depends on whether
          $\overline{r}\equiv 1313 \pmod{4}$ or $\overline{r}\equiv
          3131 \pmod{4}$.  Now, since
          \[
          \hh{1,2,3,4}\mone{1}^{\tau}\mone{2}^{\tau+1} \mone{3}^{\tau}
          \mone{4}^{\tau+1}\hh{1,2,3,4}\approx
          \xx{1,2}\xx{3,4}\mone{1}^{\tau}\mone{2}^{\tau}\mone{3}^{\tau}\mone{4}^{\tau},
          \]
          by \cref{rel:k22,rel:k21}, we know that from $q_1=
          (\hh{1,2,3,4}\mone{1}^{\tau}\mone{2}^{\tau+1}\mone{3}^{\tau}\mone{4}^{\tau+1})
          r$ the algorithm prescribes
          \[
          \hh{2,3,4,d}\mone{2}^{\tau_1+\tau}\mone{3}^{\tau_4+\tau}\mone{4}^{\tau_3+\tau}\mone{d}^{\tau_d}.
          \]
          We therefore complete the resulting diagram as follows.
          \[
          \begin{tikzcd}[column sep=large]
          s \arrow[d, Rightarrow, "\hh{1,3,4,d}\mone{1}^{\tau_1}\mone{3}^{\tau_3}\mone{4}^{\tau_4}\mone{d}^{\tau_d}" left] \arrow[r, "\hh{1,2,3,4}"] & r \arrow[d, Rightarrow, "\hh{1,2,3,4}\mone{1}^{\tau}\mone{2}^{\tau+1}\mone{3}^{\tau}\mone{4}^{\tau+1}" right]\\
          t\arrow[dr, "\mone{1}^\tau\xx{1,2}\xx{3,4}", swap] & q_1 \arrow[d, Rightarrow, "\hh{2,3,4,d}\mone{2}^{\tau_1+\tau}\mone{3}^{\tau_4+\tau}\mone{4}^{\tau_3+\tau}\mone{d}^{\tau_d}" right] \\
          & q_2
          \end{tikzcd}
          \]
          The diagram commutes by
          \cref{rel:k22,rel:k21,rel:ordermone,rel:disjoint2,rel:disjoint4,rel:disjoint5,rel:rename3,rel:rename4,rel:ksym2}
          \begin{align*}
          \begin{split}
          \hh{2,3,4,d}&\mone{2}^{\tau_1+\tau}\mone{3}^{\tau_4+\tau}\mone{4}^{\tau_3+\tau}\mone{d}^{\tau_d}
          \hh{1,2,3,4}\mone{1}^{\tau}\mone{2}^{\tau+1} \mone{3}^{\tau} \mone{4}^{\tau+1}\hh{1,2,3,4} \\
          &\approx \hh{2,3,4,d}\mone{2}^{\tau_1+\tau}\mone{3}^{\tau_4+\tau}\mone{4}^{\tau_3+\tau}\mone{d}^{\tau_d}\xx{1,2}\xx{3,4}
          \mone{1}^{\tau}\mone{2}^{\tau} \mone{3}^{\tau} \mone{4}^{\tau} \\
          &\approx \hh{2,3,4,d}\xx{1,2}\xx{3,4}\mone{1}^{\tau_1}\mone{2}^{\tau}\mone{3}^{\tau_3}\mone{4}^{\tau_4}\mone{d}^{\tau_d}\\
          &\approx \xx{1,2}\xx{3,4}\hh{1,3,4,d}\mone{1}^{\tau_1}\mone{2}^{\tau}\mone{3}^{\tau_3}\mone{4}^{\tau_4}\mone{d}^{\tau_d}\\
          &\approx \mone{1}^\tau\xx{1,2}\xx{3,4}\hh{1,3,4,d}\mone{1}^{\tau_1}\mone{3}^{\tau_3}\mone{4}^{\tau_4}\mone{d}^{\tau_d}.
          \end{split}
          \end{align*}
          Moreover, the level property is satisfied since
          $\level(t)<\level(s)$, $\level(q_2) < \level (q_1) =
          \level(s)$ and $\mone{1}^\tau\xx{1,2}\xx{3,4}$ cannot
          increase the number of odd entries.
        \end{sssubcase}
      
        % subcase 3.2.4.2: {a,b,c} = {2,3,4}
        \begin{sssubcase}{$\s{a,b,c} = \s{2,3,4}$.}
          Then, from $s$, the algorithm prescribes
          \[
          \hh{2,3,4,d}\mone{2}^{\tau_2}\mone{3}^{\tau_3}\mone{4}^{\tau_4}\mone{d}^{\tau_d}.
          \]
          Moreover, by \cref{prop:koddodds}, $\level(r)=(j,k+1,1)$ and,
          writing $\overline{r}$ for the first four entries of the
          integral part of $r$, we have $\overline{r}\equiv 1111
          \pmod{4}$ or $\overline{r}\equiv 3333 \pmod{4}$. Hence, from
          $r$ the algorithm prescribes
          \[
          \hh{1,2,3,4}\mone{1}^{\tau}\mone{2}^{\tau}\mone{3}^{\tau}\mone{4}^{\tau}
          \]
          where the value of $\tau$ depends on whether
          $\overline{r}\equiv 1111 \pmod{4}$ or $\overline{r}\equiv
          3333 \pmod{4}$. Now, since
          \[
          \hh{1,2,3,4}\mone{1}^{\tau}\mone{2}^{\tau} \mone{3}^{\tau}
          \mone{4}^{\tau}\hh{1,2,3,4}\approx
          \mone{1}^{\tau}\mone{2}^{\tau}\mone{3}^{\tau}\mone{4}^{\tau},
          \]
          by \cref{itm:relk4}, we know that from $q_1=
          (\hh{1,2,3,4}\mone{1}^{\tau}\mone{2}^{\tau}\mone{3}^{\tau}\mone{4}^{\tau})
          r$ the algorithm prescribes
          \[
          \hh{2,3,4,d}\mone{2}^{\tau_2+\tau}\mone{3}^{\tau_3+\tau}\mone{4}^{\tau_4+\tau}\mone{d}^{\tau_d}.
          \]
          We therefore complete the resulting diagram as follows.
          \[
          \begin{tikzcd}[column sep=large]
          s \arrow[d, Rightarrow, "\hh{2,3,4,d}\mone{2}^{\tau_2}\mone{3}^{\tau_3}\mone{4}^{\tau_4}\mone{d}^{\tau_d}" left] \arrow[r, "\hh{1,2,3,4}"] & r \arrow[d, Rightarrow, "\hh{1,2,3,4}\mone{1}^{\tau}\mone{2}^{\tau}\mone{3}^{\tau}\mone{4}^{\tau}" right]\\
          t\arrow[dr, "\mone{1}^\tau", swap] & q_1 \arrow[d, Rightarrow, "\hh{2,3,4,d}\mone{2}^{\tau_2+\tau}\mone{3}^{\tau_3+\tau}\mone{4}^{\tau_4+\tau}\mone{d}^{\tau_d}" right] \\
          & q_2
          \end{tikzcd}
          \]
          The diagram commutes by
          \cref{itm:relk4,rel:ordermone,rel:disjoint4}
          \begin{align*}
          \begin{split}
          \hh{2,3,4,d}&\mone{2}^{\tau_2+\tau}\mone{3}^{\tau_3+\tau}\mone{4}^{\tau_4+\tau}\mone{d}^{\tau_d}
          \hh{1,2,3,4}\mone{1}^{\tau}\mone{2}^{\tau} \mone{3}^{\tau} \mone{4}^{\tau}\hh{1,2,3,4} \\             
          &\approx \hh{2,3,4,d}\mone{2}^{\tau_2+\tau}\mone{3}^{\tau_3+\tau}\mone{4}^{\tau_4+\tau}\mone{d}^{\tau_d}
          \mone{1}^{\tau}\mone{2}^{\tau} \mone{3}^{\tau} \mone{4}^{\tau} \\
          &\approx\hh{2,3,4,d}\mone{1}^\tau\mone{2}^{\tau_2}\mone{3}^{\tau_3}\mone{4}^{\tau_4}\mone{d}^{\tau_d}\\
          &\approx\hh{2,3,4,d}\mone{2}^{\tau_2}\mone{3}^{\tau_3}\mone{4}^{\tau_4}\mone{d}^{\tau_d}\mone{1}^\tau.
          \end{split}
          \end{align*}
          Moreover, the level property is satisfied since
          $\level(t)<\level(s)$, $\level(q_2) < \level (q_1) =
          \level(s)$ and $\mone{1}^\tau$ cannot increase the number of
          odd entries.
        \end{sssubcase}      
      \end{ssubcase}

      % subcase 3.2.5: |{a,b,c,d} cap {1,2,3,4}| = 4
      \begin{ssubcase}{$|\s{a,b,c,d} \cap \s{1,2,3,4}| = 4$.}
        Then the first odd entries of $u$ are odd and for $1\leq i\leq
        4$, there is $\tau_i\in\Z_2$ such that $u_i\equiv
        (-1)^{\tau_i}\pmod{4}$. We now consider the cases
        \[
        \tau_1+\tau_2+\tau_3 +\tau_4 \equiv 0 \pmod{2} \quad \mbox{and} \quad \tau_1+\tau_2+\tau_3 +\tau_4 \equiv 1 \pmod{2}
        \]
        in turn.

        % subcase 3.2.5.2: sum tau_i = 0 mod 2
        \begin{sssubcase}{$\tau_1+\tau_2+\tau_3 +\tau_4 \equiv 0 \pmod{2}$.}
          Then, by \cref{applem:evenoddsOddodds}, we have $\level(r)
          \leq (j,k,m-1) < \level(s)$. From $s$, the algorithm
          prescribes
          \[
          \hh{1,2,3,4}\mone{1}^{\tau_1} \mone{2}^{\tau_2}
          \mone{3}^{\tau_3}\mone{4}^{\tau_4}
          \]
          where evenly many of the $\tau_i$ are odd. By
          \cref{prop:kevenk}, there exists a word $V$ over
          $\s{\mone{x},\xx{x,y}}$, with $x,y\in\s{1,2,3,4}$, such that
          $\hh{1,2,3,4}\mone{1}^{\tau_1}
          \mone{2}^{\tau_2}\mone{3}^{\tau_3}\mone{4}^{\tau_4}\hh{1,2,3,4}
          \approx V$.  Hence, we can complete the diagram as follows.
          \[
          \begin{tikzcd}[column sep=large]
            s \arrow[d, Rightarrow, "\hh{1,2,3,4}\mone{1}^{\tau_1}\mone{2}^{\tau_2}\mone{3}^{\tau_3}\mone{4}^{\tau_4}" left] 
            \arrow[r, "\hh{1,2,3,4}"] & 
            r\\
            t\arrow[ur, "V" swap]
          \end{tikzcd}\\
          \] 
          The diagram commutes by \cref{rel:orderk,rel:ordermone},
          since
          \[
          \hh{1,2,3,4}\mone{1}^{\tau_1}
          \mone{2}^{\tau_2}\mone{3}^{\tau_3}
          \mone{4}^{\tau_4}\hh{1,2,3,4} \approx V.
          \]
          Moreover, the level property is satisfied since
          $\level(t),\level(r) \leq (j,k,m-1) < \level(s)$ and a word
          over $\s{\mone{x},\xx{x,y}}$ cannot increase the number of
          odd entries.
        \end{sssubcase}

        % subcase 3.2.5.2: sum tau_i = 1 mod 2
        \begin{sssubcase}{$\tau_1+\tau_2+\tau_3 +\tau_4 \equiv 1 \pmod{2}$.}
          Then, by \cref{applem:evenoddsOddodds}, we have
          $\level(r)=\level(s)$. From $s$ and $r$, the algorithm
          prescribes
          \[
          \hh{1,2,3,4}\mone{1}^{\tau_1} \mone{2}^{\tau_2}
          \mone{3}^{\tau_3}\mone{4}^{\tau_4} \quad \mbox{and} \quad
          \hh{1,2,3,4}\mone{1}^{\tau_1'} \mone{2}^{\tau_2'}
          \mone{3}^{\tau_3'}\mone{4}^{\tau_4'},
          \]
          respectively, where oddly many of the $\tau_i$ are odd and
          oddly many of the $\tau_i'$ are odd. By
          \cref{prop:koddkoddk}, there exists a word $V$ over
          $\s{\mone{x},\xx{x,y}}$, with $x,y\in\s{1,2,3,4}$, such that
          \[
          \hh{1,2,3,4}\mone{1}^{\tau_1}
          \mone{2}^{\tau_2}\mone{3}^{\tau_3}\mone{4}^{\tau_4}\hh{1,2,3,4}
          \mone{1}^{\tau_1'}
          \mone{2}^{\tau_2'}\mone{3}^{\tau_3'}\mone{4}^{\tau_4'}\hh{1,2,3,4}
          \approx V.
          \]
          Hence, we can complete the diagram as follows.
          \[
          \begin{tikzcd}[column sep=large, row sep=large]
              s \arrow[d, Rightarrow, "\hh{1,2,3,4}\mone{1}^{\tau_1}
          \mone{2}^{\tau_2}\mone{3}^{\tau_3}\mone{4}^{\tau_4}" left] 
              \arrow[r, "\hh{1,2,3,4}"] & 
              r\arrow[d, Rightarrow, "\hh{1,2,3,4}\mone{1}^{\tau_1'}
          \mone{2}^{\tau_2'}\mone{3}^{\tau_3'}\mone{4}^{\tau_4'}"] \\
              t\arrow[r, "V", swap] & q
          \end{tikzcd}
          \]
          The diagram commutes by \cref{rel:orderk,rel:ordermone},
          since
          \[
          \hh{1,2,3,4}\mone{1}^{\tau_1}
          \mone{2}^{\tau_2}\mone{3}^{\tau_3}\mone{4}^{\tau_4}\hh{1,2,3,4}
          \mone{1}^{\tau_1'}
          \mone{2}^{\tau_2'}\mone{3}^{\tau_3'}\mone{4}^{\tau_4'}\hh{1,2,3,4}
          \approx V.
          \]
          Moreover, the level property is satisfied since
          $\level(t)<\level(s)$ and $\level(q) \leq \level(r) =
          \level(s)$ and a word over $\s{\mone{x},\xx{x,y}}$ cannot
          increase the number of odd entries.
        \end{sssubcase}
      \end{ssubcase}
  \end{subcase}
\end{case}
\end{proof}  

\cref{lem:mainbasic} provides a restricted version of the Main
Lemma. We now show that it implies the full version.

\begin{lemma}
  \label{lem:wedge}
  Suppose $\word{N}^*:s\too t$ and $\word{M}^*:s\too r$ are (possibly
  empty) sequences of normal edges with a common source. Then there
  exists a sequence of simple edges $\word{G}^*:t\to r$ such that the
  diagram
  \[
  \begin{tikzcd}[column sep=small, row sep=large]
    & s \arrow[dl, Rightarrow, "\word{N}^*" swap]
    \arrow[dr, Rightarrow, "\word{M}^*"]
    \\
    t \arrow[rr, "\word{G}^*" swap] && r
  \end{tikzcd}
  \]
  commutes equationally and $\level(\word{G}^*)\leq
  \max(\level(t),\level(r))$.
\end{lemma}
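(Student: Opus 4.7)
The plan is to exploit the uniqueness of normal edges observed just after \cref{def:stateg}: from every state $s'\neq I$ there is exactly one outgoing normal edge, so the entire sequence of normal edges emanating from $s$ is determined. Because $\word{N}^*$ and $\word{M}^*$ are both initial segments of this canonical sequence, one must be a prefix of the other. By symmetry I may assume $\word{N}^*$ is a prefix of $\word{M}^*$, so that $\word{M}^* = \word{M}'\word{N}^*$ for some (possibly empty) $\word{M}':t\too r$.

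I would take $\word{G}^*$ to be the expansion of $\word{M}'$ as a sequence of simple edges, i.e.\ the same word but viewed through its constituent basic generators rather than as a concatenation of syllables. Then $\word{G}^*\word{N}^*\approx\word{M}^*$ is a tautology, since the two sides are literally identical as words. In the dual case where $\word{M}^*$ is a prefix of $\word{N}^*$, one instead takes $\word{G}^*$ to be the reverse of the surplus $\word{N}':r\too t$ (again expanded as simple edges); because every basic generator is an involution, repeated use of \cref{rel:orderx,rel:ordermone,rel:orderk} yields $\word{G}^*\word{N}'\approx\epsilon$ and hence $\word{G}^*\word{N}^* = \word{G}^*\word{N}'\word{M}^*\approx\word{M}^*$.

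The remaining task---and the main substance of the proof---is the level estimate. Write $\word{M}' = N_m\cdots N_1$ with $N_i:t_{i-1}\too t_i$, $t_0 = t$, and $t_m = r$; by definition of normal edges the levels strictly decrease, so $\level(t)>\level(t_1)>\cdots>\level(r)$. I would then inspect the three possible shapes of a syllable produced by \cref{alg:algo}, namely $\mone{a}$, $\xx{a,b}\mone{a}^{\tau_a}$, and $\hh{a,b,c,d}\mone{a}^{\tau_a}\mone{b}^{\tau_b}\mone{c}^{\tau_c}\mone{d}^{\tau_d}$, and observe that the leading $(-1)$-generators of each syllable merely flip signs of individual entries of the pivot column; this leaves unchanged the largest non-fixed column index, the least denominator exponent, and the parity count of entries of $2^kv$, and therefore preserves the level of the ambient state. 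Only the trailing $\hh{a,b,c,d}$, $\xx{a,b}$, or $\mone{a}$ actually lowers the level, producing $t_i$. Hence every intermediate state traversed by $\word{G}^*$ has level at most $\max_i\level(t_{i-1}) = \level(t) = \max(\level(t),\level(r))$, with the dual case giving the analogous bound through $\level(r)$. The only delicate point is the syllable-by-syllable inspection above; the rest is routine once the uniqueness of normal edges is in hand.
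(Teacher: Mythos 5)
Your proposal is correct and follows essentially the same route as the paper: both arguments use the uniqueness of the outgoing normal edge at each non-identity state to conclude that one of $\word{N}^*$, $\word{M}^*$ is an initial segment of the other, and then take $\word{G}^*$ to be the surplus sequence of normal edges (or its reverse, cancelled against the surplus via \cref{rel:orderx,rel:ordermone,rel:orderk}). The paper's own proof is terser and leaves the level estimate implicit, whereas your syllable-by-syllable check that the $(-1)$ factors preserve the ambient state's level is a sound way to make that bound explicit.
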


\begin{proof}
  Since there is at most one normal edge from any given state, either
  $\word{N}^*$ must be a prefix of $\word{M}^*$ or vice
  versa. Therefore, there exists a sequence of normal
  edges either $\word{P}^*:t\too r$ or $\word{Q}^*:r\too t$. In the former
  case we take $\word{G}^*=\word{P}^*$, and in the latter case we take
  $\word{G}^*={\word{Q}^*}^{-1}$.
\end{proof}

\begin{lemma*}[Main Lemma]
  Let $s$, $t$, and $r$ be states, $N:s\too t$ be a normal edge, and
  $G:s\to r$ be a simple edge. Then there exist a state $q$, a
  sequence of normal edges $\word{N^*}:r\too q$, and a sequence of
  simple edges $\word{G^*}:t\to q$ such that the diagram
  \[
    \begin{tikzcd}[column sep=large, row sep=large]
      s \arrow[d, Rightarrow, "N" left] 
        \arrow[r, "G"] & 
        r\arrow[d, Rightarrow, "\word{N^*}"] \\
      t\arrow[r, "\word{G^*}" below] & q
    \end{tikzcd}
  \]
  commutes equationally and $\level(\word{G^*})<\level(s)$.
\end{lemma*}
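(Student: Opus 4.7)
The plan is to reduce the Main Lemma to its basic-edge version \cref{lem:mainbasic} by invoking the decomposition \cref{lem:simp}. Given the simple edge $G:s\to r$, \cref{lem:simp} produces an equivalent sequence of basic edges $\word{G}' = G_m \cdots G_1$ with $\word{G}' \approx G$, $\extent(\word{G}') = \extent(G)$, and $\level(\word{G}') = \level(G)$. In the state graph this corresponds to a path $s = s_0 \xrightarrow{G_1} s_1 \xrightarrow{G_2} \cdots \xrightarrow{G_m} s_m = r$, so the problem becomes that of iterating the basic version of the Main Lemma along this path.

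I would then proceed by induction on the length $m$ of the basic decomposition. The base case $m=1$ is immediate from \cref{lem:mainbasic}. For the inductive step, apply \cref{lem:mainbasic} to the first basic edge $G_1 : s \to s_1$ together with the given normal edge $N:s \too t$, producing a state $q_1$, a sequence of normal edges $\word{N}_1:s_1 \too q_1$, and a sequence of simple edges $\word{H}_1:t\to q_1$ with $\level(\word{H}_1)<\level(s)$. One then processes the remaining basic edges $G_2,\ldots,G_m$ by invoking \cref{lem:mainbasic} once for each successive pair consisting of a basic edge and the leading edge of the current normal sequence. Concatenating the simple-edge sequences produced at each step gives $\word{G^*}:t\to q$, and collecting the resulting normal edges at the very end gives $\word{N^*}:r\too q$. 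Commutativity of the overall diagram then follows by pasting together the commuting subsquares of the staircase.

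The main obstacle will be controlling the level of the constructed $\word{G^*}$. Each invocation of \cref{lem:mainbasic} at an intermediate state $s_i$ only guarantees $\level(\word{H}_i)<\level(s_i)$, and since \cref{lem:simp} permits intermediate states of level up to $\max(\level(s),\level(r))$, which may exceed $\level(s)$, a naive iteration does not directly give the required bound $\level(\word{G^*})<\level(s)$. I would resolve this by interleaving the induction on $m$ with an outer induction on $\level(s)$: as soon as the normal sequence built up in the staircase lowers the current state strictly below $\level(s)$, the outer inductive hypothesis takes over and handles the remainder. Finally, \cref{lem:wedge} can be used to reconcile the various prefixes of normal paths produced along the way into a single sequence $\word{N^*}:r\too q$ terminating at a common state $q$.
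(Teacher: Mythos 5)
Your proposal follows the same high-level strategy as the paper: decompose $G$ into basic edges via \cref{lem:simp}, tile the resulting rectangle with commuting squares supplied by \cref{lem:mainbasic}, and bridge adjacent squares using \cref{lem:wedge}. Two details diverge. First, the paper performs no induction at this stage—neither on the length $m$ of the basic decomposition nor on $\level(s)$; it simply assembles the staircase in a single pass. The outer induction on $\level(s)$ you add is not in the paper, and as you describe it, it would not repair the level issue you raise: if some intermediate state $s_j$ on the top row has $\level(s_j)>\level(s)$, the inductive hypothesis (which is only available for levels strictly below $\level(s)$) cannot be invoked there, so the handoff you envision never kicks in precisely where the difficulty sits. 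Second, your placement of \cref{lem:wedge} is off. The final $\word{N}^*:r\too q$ is just the normal sequence $\word{N}^*_k$ produced by the last basic square, so there is no end-of-proof reconciliation to do; \cref{lem:wedge} is needed \emph{between} consecutive squares, to supply the bridging simple-edge sequences $\word{F}^*_j:q_j\to t_{j+1}$ connecting the two normal sequences $\word{N}^*_j:s_{j+1}\too q_j$ and $N_{j+1}:s_{j+1}\too t_{j+1}$ emanating from the shared state $s_{j+1}$, since $q_j$ and $t_{j+1}$ need not coincide. Without these bridges the horizontal pieces $\word{H}^*_j:t_j\to q_j$ do not compose to form $\word{G}^*$. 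That said, the worry you raise about the level bound does touch on a genuine subtlety—\cref{lem:simp} only controls the intermediate levels by $\max(\level(s),\level(r))$, which can exceed $\level(s)$, and the paper itself is quite terse about why the assembled $\word{G}^*$ remains strictly below $\level(s)$—but the outer induction is neither what the paper does nor a clear fix for it.
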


\begin{proof}
By \cref{lem:simp}, there exists a sequence of basic edges $\word{H}^*
= H_1\ldots H_k$ such that $\word{H}^*\approx G$ and
$\level(\word{H}^*) = \level(G)$. For $1\leq j \leq k$, assume that
$H_j:s_j \to s_{j+1}$, with $s_1=s$ and $s_{k+1}=r$.

For each $1\leq j \leq k$, let $N_j:s_j \to t_j$ be the normal edge
originating at $s_j$. Note that $N_1 = N:s\to t$. By
\cref{lem:mainbasic}, there exist a state $q_j$, a sequence of normal
edges $\word{N^*}_j:s_{j+1}\too q_j$, and a sequence of simple edges
$\word{H^*}_j:t_j\to q_j$ such that the diagram
  \[
    \begin{tikzcd}[column sep=large, row sep=large]
      s_j \arrow[d, Rightarrow, "N_j" left] 
        \arrow[r, "H_j"] & 
        s_{j+1}\arrow[d, Rightarrow, "\word{N^*}_j"] \\
      t_j\arrow[r, "\word{H^*}_j" below] & q_j
    \end{tikzcd}
  \]
commutes equationally and $\level(\word{H}^*_j)<\level(s_j)$.

Moreover, for every $1\leq j \leq k-1$, $\word{N}^*_j : s_{j+1} \too
q_j$ and $N_{j+1}:s_{j+1} \too t_{j+1}$ are two sequences of normal
edges with a common source. Hence, by \cref{lem:wedge}, there exists a
sequence of simple edges $\word{F}_j^*:q_j \to t_{j+1}$ such that the
diagram
  \[
  \begin{tikzcd}[column sep=small, row sep=large]
    & s_{j+1} \arrow[dl, Rightarrow, "\word{N}_j^*" swap]
    \arrow[dr, Rightarrow, "N_{j+1}"]
    \\
    q_j \arrow[rr, "\word{F}_j^*" swap] && t_{j+1}
  \end{tikzcd}
  \]
commutes equationally and $\level(\word{F}_j^*)< \level(s_{j+1})$.

Now let $q=q_k$ and define $\word{G}^*:t \to q$ by $\word{G}^* =\word{H}^*_k.  \word{F}^*_{k-1} \word{H}^*_{k-1} \ldots \word{F}^*_2
\word{H}_2^* \word{F}^*_1 \word{H}^*_1$. Then the diagram
  \[
    \begin{tikzcd}[column sep=large, row sep=large]
      s \arrow[d, Rightarrow, "N" left] 
        \arrow[r, "G"] & 
        r\arrow[d, Rightarrow, "\word{N^*}"] \\
      t\arrow[r, "\word{G^*}" below] & q
    \end{tikzcd}
  \]
commutes equationally and $\level(\word{G^*})<\level(s)$, as desired.
\end{proof}

% --------------------------------------------------------------------

\bibliographystyle{eptcs}
\bibliography{intrels}

% --------------------------------------------------------------------
\end{document}